\documentclass[aps,prl,twocolumn,notitlepage,superscriptaddress,showpacs,nofootinbib]{revtex4-1}
\usepackage{amsmath}
\usepackage{amssymb}
\usepackage{amsfonts}
\usepackage{enumerate}
\usepackage{mathrsfs}
\usepackage{graphicx}
\usepackage{epstopdf}
\usepackage{enumerate}
\usepackage{changepage}
\usepackage{bm}
\usepackage{multirow}
\usepackage{lipsum}
\usepackage{booktabs}
\usepackage{tikz}



\newtheorem{theorem}{Theorem}
\newtheorem{definition}{Definition}

\newtheorem{lemma}{Lemma}
\newtheorem{proposition}{Proposition}
\newtheorem{conjecture}{Conjecture}
\newtheorem{example}{Example}
\newtheorem{corollary}{Corollary}

\def\bcj{\begin{conjecture}}
	\def\ecj{\end{conjecture}}
\def\bcr{\begin{corollary}}
	\def\ecr{\end{corollary}}
\def\bd{\begin{definition}}
	\def\ed{\end{definition}}
\def\bea{\begin{eqnarray}}
	\def\eea{\end{eqnarray}}
\def\bem{\begin{enumerate}}
	\def\eem{\end{enumerate}}
\def\bex{\begin{example}}
	\def\eex{\end{example}}
\def\bim{\begin{itemize}}
	\def\eim{\end{itemize}}
\def\bl{\begin{lemma}}
	\def\el{\end{lemma}}
\def\bma{\begin{bmatrix}}
	\def\ema{\end{bmatrix}}
\def\bpf{\begin{proof}}
	\def\epf{\end{proof}}
\def\bpp{\begin{proposition}}
	\def\epp{\end{proposition}}
\def\bqu{\begin{question}}
	\def\equ{\end{question}}
\def\br{\begin{remark}}
	\def\er{\end{remark}}
\def\bt{\begin{theorem}}
	\def\et{\end{theorem}}


\def\squareforqed{\hbox{\rlap{$\sqcap$}$\sqcup$}}
\def\qed{\ifmmode\squareforqed\else{\unskip\nobreak\hfil
		\penalty50\hskip1em\null\nobreak\hfil\squareforqed
		\parfillskip=0pt\finalhyphendemerits=0\endgraf}\fi}
\def\endenv{\ifmmode\;\else{\unskip\nobreak\hfil
		\penalty50\hskip1em\null\nobreak\hfil\;
		\parfillskip=0pt\finalhyphendemerits=0\endgraf}\fi}
\newenvironment{proof}{\noindent \textbf{{Proof.~} }}{\qed}
\def\Dbar{\leavevmode\lower.6ex\hbox to 0pt
	{\hskip-.23ex\accent"16\hss}D}
\makeatletter
\def\url@leostyle{%
	\@ifundefined{selectfont}{\def\UrlFont{\sf}}{\def\UrlFont{\small\ttfamily}}}
\makeatother
\urlstyle{leo}

\def\bcj{\begin{conjecture}}
	\def\ecj{\end{conjecture}}
\def\bcr{\begin{corollary}}
	\def\ecr{\end{corollary}}
\def\bd{\begin{definition}}
	\def\ed{\end{definition}}
\def\bea{\begin{eqnarray}}
	\def\eea{\end{eqnarray}}
\def\bem{\begin{enumerate}}
	\def\eem{\end{enumerate}}
\def\bex{\begin{example}}
	\def\eex{\end{example}}
\def\bim{\begin{itemize}}
	\def\eim{\end{itemize}}
\def\bl{\begin{lemma}}
	\def\el{\end{lemma}}
\def\bpf{\begin{proof}}
	\def\epf{\end{proof}}
\def\bpp{\begin{proposition}}
	\def\epp{\end{proposition}}
\def\bqu{\begin{question}}
	\def\equ{\end{question}}
\def\br{\begin{remark}}
	\def\er{\end{remark}}
\def\bt{\begin{theorem}}
	\def\et{\end{theorem}}

\def\btb{\begin{tabular}}
	\def\etb{\end{tabular}}

	\newcommand{\nc}{\newcommand}
	


	\nc{\bbA}{\mathbb{A}} \nc{\bbB}{\mathbb{B}} \nc{\bbC}{\mathbb{C}}
	\nc{\bbD}{\mathbb{D}} \nc{\bbE}{\mathbb{E}} \nc{\bbF}{\mathbb{F}}
	\nc{\bbG}{\mathbb{G}} \nc{\bbH}{\mathbb{H}} \nc{\bbI}{\mathbb{I}}
	\nc{\bbJ}{\mathbb{J}} \nc{\bbK}{\mathbb{K}} \nc{\bbL}{\mathbb{L}}
	\nc{\bbM}{\mathbb{M}} \nc{\bbN}{\mathbb{N}} \nc{\bbO}{\mathbb{O}}
	\nc{\bbP}{\mathbb{P}} \nc{\bbQ}{\mathbb{Q}} \nc{\bbR}{\mathbb{R}}
	\nc{\bbS}{\mathbb{S}} \nc{\bbT}{\mathbb{T}} \nc{\bbU}{\mathbb{U}}
	\nc{\bbV}{\mathbb{V}} \nc{\bbW}{\mathbb{W}} \nc{\bbX}{\mathbb{X}}
	\nc{\bbZ}{\mathbb{Z}}
	
	
	\nc{\bA}{{\bf A}} \nc{\bB}{{\bf B}} \nc{\bC}{{\bf C}}
	\nc{\bD}{{\bf D}} \nc{\bE}{{\bf E}} \nc{\bF}{{\bf F}}
	\nc{\bG}{{\bf G}} \nc{\bH}{{\bf H}} \nc{\bI}{{\bf I}}
	\nc{\bJ}{{\bf J}} \nc{\bK}{{\bf K}} \nc{\bL}{{\bf L}}
	\nc{\bM}{{\bf M}} \nc{\bN}{{\bf N}} \nc{\bO}{{\bf O}}
	\nc{\bP}{{\bf P}} \nc{\bQ}{{\bf Q}} \nc{\bR}{{\bf R}}
	\nc{\bS}{{\bf S}} \nc{\bT}{{\bf T}} \nc{\bU}{{\bf U}}
	\nc{\bV}{{\bf V}} \nc{\bW}{{\bf W}} \nc{\bX}{{\bf X}}
	\nc{\ba}{{\bf a}} \nc{\be}{{\bf e}} \nc{\bu}{{\bf u}}
	\nc{\brr}{{\bf r}} \nc{\bx}{{\bf x}}
	
	
	\nc{\cA}{{\cal A}} \nc{\cB}{{\cal B}} \nc{\cC}{{\cal C}}
	\nc{\cD}{{\cal D}} \nc{\cE}{{\cal E}} \nc{\cF}{{\cal F}}
	\nc{\cG}{{\cal G}} \nc{\cH}{{\cal H}} \nc{\cI}{{\cal I}}
	\nc{\cJ}{{\cal J}} \nc{\cK}{{\cal K}} \nc{\cL}{{\cal L}}
	\nc{\cM}{{\cal M}} \nc{\cN}{{\cal N}} \nc{\cO}{{\cal O}}
	\nc{\cP}{{\cal P}} \nc{\cQ}{{\cal Q}} \nc{\cR}{{\cal R}}
	\nc{\cS}{{\cal S}} \nc{\cT}{{\cal T}} \nc{\cU}{{\cal U}}
	\nc{\cV}{{\cal V}} \nc{\cW}{{\cal W}} \nc{\cX}{{\cal X}}
	\nc{\cZ}{{\cal Z}}
	
	
	\nc{\hA}{{\hat{A}}} \nc{\hB}{{\hat{B}}} \nc{\hC}{{\hat{C}}}
	\nc{\hD}{{\hat{D}}} \nc{\hE}{{\hat{E}}} \nc{\hF}{{\hat{F}}}
	\nc{\hG}{{\hat{G}}} \nc{\hH}{{\hat{H}}} \nc{\hI}{{\hat{I}}}
	\nc{\hJ}{{\hat{J}}} \nc{\hK}{{\hat{K}}} \nc{\hL}{{\hat{L}}}
	\nc{\hM}{{\hat{M}}} \nc{\hN}{{\hat{N}}} \nc{\hO}{{\hat{O}}}
	\nc{\hP}{{\hat{P}}} \nc{\hR}{{\hat{R}}} \nc{\hS}{{\hat{S}}}
	\nc{\hT}{{\hat{T}}} \nc{\hU}{{\hat{U}}} \nc{\hV}{{\hat{V}}}
	\nc{\hW}{{\hat{W}}} \nc{\hX}{{\hat{X}}} \nc{\hZ}{{\hat{Z}}}
	
	\nc{\hn}{{\hat{n}}}
	
	

	
	
	
	
	

	
	
	

	

	
	
	
	

	



	
	
	
	\def\dim{\mathop{\rm Dim}}

	\def\ghz{\mathop{\rm GHZ}}

      \def\gap{\mathop{\rm gap}}

	
 	\def\swap{\mathop{\rm Swap}}
	
	\def\max{\mathop{\rm max}}
	\def\min{\mathop{\rm min}}

	
	\def\rank{\mathop{\rm rank}}
	




	\newcommand{\bra}[1]{\langle#1|}
	\newcommand{\ket}[1]{|#1\rangle}
	
	\newcommand{\ketbra}[2]{|#1\rangle\!\langle#2|}

	\newcommand{\fl}[2]{\left\lfloor\frac{#1}{#2}\right\rfloor}
	\newcommand{\fc}[2]{\left\lceil\frac{#1}{#2}\right\rceil}


	


	
	
	
	\usepackage[
	colorlinks,
	linkcolor = blue,
	citecolor = blue,
	urlcolor = blue]{hyperref}
	\def \qed {\hfill \vrule height7pt width 7pt depth 0pt}
	
	\setcounter{MaxMatrixCols}{10}



	\newcounter{lastnote}

	
\begin{document}
		\title{
 Entanglement detection length of multipartite quantum states\\
  }
	\author{Fei Shi}
\affiliation{QICI Quantum Information and Computation Initiative, School of Computing and Data Science,
The University of Hong Kong, Pokfulam Road, Hong Kong}

\author{Lin Chen}
\email[]{linchen@buaa.edu.cn}
\affiliation{LMIB(Beihang University), Ministry of Education, and School of Mathematical Sciences, Beihang University, Beijing 100191, China}


\author{Giulio Chiribella}
\email[]{giulio@cs.hku.hk}
\affiliation{QICI Quantum Information and Computation Initiative,  School of Computing and Data Science,
The University of Hong Kong, Pokfulam Road, Hong Kong}	 
\affiliation{Department of Computer Science, Parks Road, Oxford, OX1 3QD, United Kingdom}	
\affiliation{Perimeter Institute for Theoretical Physics, Waterloo, Ontario N2L 2Y5, Canada}

\author{Qi Zhao}
\email[]{zhaoqi@cs.hku.hk}
\affiliation{QICI Quantum Information and Computation Initiative, School of Computing and Data Science,
The University of Hong Kong, Pokfulam Road, Hong Kong}

\begin{abstract}
Multipartite entanglement is a valuable resource for quantum technologies.  
However, detecting this resource can be challenging: for genuine multipartite entanglement, the detection  may require global measurements that are hard to implement experimentally.  
Here we introduce the concept of  entanglement detection length, defined as the minimum number of particles that have to be jointly  measured in order to detect genuine multipartite entanglement. 
{For  symmetric states, we show that the entanglement detection length   can be determined by testing separability of the marginal states. 
For general states, we   provide an  upper bound  on the entanglement detection length based on  semidefinite programming.}  
We show that the entanglement detection length is generally smaller than the minimum observable length needed to uniquely determine a multipartite  state, and we provide examples achieving the maximum gap between these two quantities.
\end{abstract}
\maketitle



\textit{\textbf{Introduction.}}\textbf{---} Genuine multipartite entanglement (GME), a global form of entanglement arising in multipartite systems, is a powerful resource in many applications, including quantum communication \cite{karlsson1998quantum}, quantum computing \cite{steane1998quantum} and quantum metrology \cite{chin2012quantum}. Genuinely entangled states of up to 51  qubits have been generated in superconducting qubit systems \cite{cao2023generation}, and even larger sizes are likely to become accessible in the near future.

Detecting GME in the laboratory, however, is a challenging problem.  
 Many detection schemes require global measurements that are difficult to implement experimentally.
{To address this problem,  recent works developed searched for alternative schemes using  properties of the marginal states \cite{navascues2021entanglement,tabia2022entanglement,toth2005entanglement,jungnitsch2011taming,sperling2013multipartite,chen2014role,miklin2016multiparticle,paraschiv2018proving} or few-body observables \cite{toth2007optimal,guhne2009entanglement,toth2009spin,gittsovich2010multiparticle,bancal2014device,liang2015family,tura2014detecting,frerot2021optimal,frerot2022unveiling}. The detection of multipartite entanglement is also closely related to foundational problems, such  as  the  entanglement marginal problem \cite{navascues2021entanglement} and the entanglement transitivity problem \cite{tabia2022entanglement}.}

Despite many advances in related areas,
the problem of detecting GME 
through few-body observables 
still lacks a systematic treatment, and some of the most basic questions have remained unanswered so far. For example,  what is the minimum number of subsystems that have to be measured jointly in order to detect the GME of a given state?    And how many minimum-sized marginals have to be characterized?

In this paper,  we introduce  the { {\em entanglement detection length (EDL)}} of a genuinely entangled state, defined as the minimum integer $k$ such that the state's GME  can be detected by $k$-body observables, and as illustrated Fig.~\ref{fig:EDL}.  We show that EDL  is  relevant to the design of entanglement detection protocols, and provides a lens for analyzing various features of multipartite entangled states studied in the literature, such as GHZ states \cite{greenberger1989going}, $k$-uniform states \cite{scott2004multipartite}, Dicke states \cite{dicke1954coherence}, and graph states  \cite{hein2004multiparty}. We focus in particular on the case of symmetric states  \cite{i2016characterizing}, which play a key role in  quantum metrology  \cite{toth2014quantum,pezze2018quantum,oszmaniec2016random},  quantum nonlocality \cite{wang2012nonlocality,quesada2017entanglement},  quantum entanglement 
\cite{toth2009entanglement,eckert2002quantum,stockton2003characterizing,ichikawa2008exchange,tura2012four,augusiak2012entangled,wolfe2014certifying,yu2016separability,tura2018separability}.   {For symmetric states, we prove that the EDL can be determined by testing whether the $k$-body marginals are entangled, and we use this result to find the minimum number of minimum-sized marginals needed to detect GME.}  For more general states, we provide an upper bound on the EDL based on semidefinite programming.   

\begin{figure}[t]
    \centering
  \includegraphics[scale=0.38]{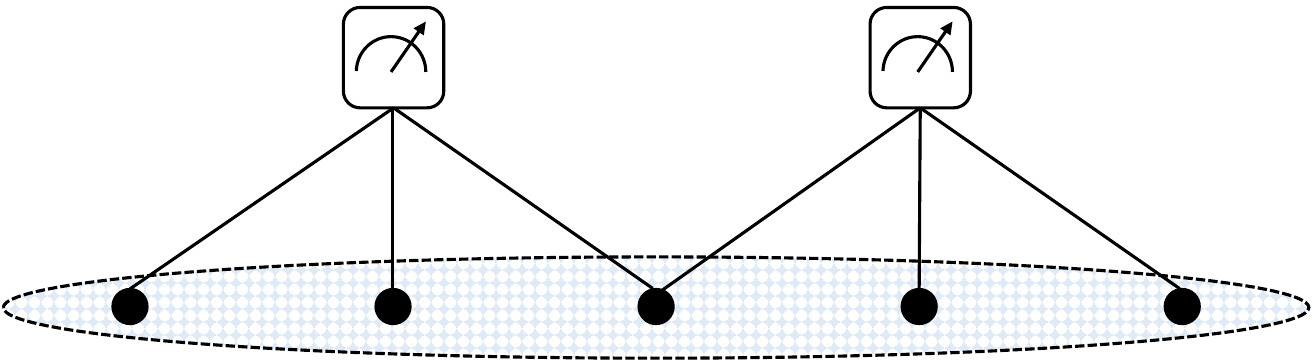}
		\caption{   {\bf Entanglement detection length.} The EDL is the minimum $k$ such that the genuine multipartite entanglement of a given multipartite state can be detected by $k$-body measurements.  In the figure, two $3$-body measurements detect  the GME of a $5$-partite system. 
  } \label{fig:EDL}
\end{figure}

We also study an analogue of the EDL,  called {{\em state determination length (SDL)}}, and defined  as the  minimum  integer $k$ such that the global state can be determined by $k$-body observables. This quantity is relevant to the  problem of reconstructing a global quantum state using experimental data generated from its marginals \cite{linden2002almost,linden2002parts,diosi2004three,cramer2010efficient,chen2012comment,chen2012ground,chen2013uniqueness,xin2017quantum}. Since determining a quantum state includes, in particular, determining whether the state is genuinely entangled or not,  the SDL is  an upper bound to the EDL. However, we show that the bound is strict, and provide examples of pure and mixed states that exhibit the maximum possible gap between the SDL and EDL.

Based on the EDL,   $n$-qubit genuinely entangled  states  can be divided into $n-1$ classes, as illustrated in  Fig.~\ref{fig:detection}.  Notably, this classification  is different from the classification
of genuinely entangled states based on stochastic local operations and classical communication (SLOCC)  \cite{eisert2001schmidt,dur2000three}, which for $n\geq 4$ gives rise to infinitely many inequivalent classes \cite{verstraete2002four,dietrich2022classification}. 
 Finally, we establish a connection between the EDL and the entanglement transitivity problem. Our results provide new insights into the structure of  multipartite entanglement, and can be used to design entanglement detection schemes that use the minimum number of experimental settings.



\begin{figure}[t]
    \centering
  \includegraphics[scale=0.45]{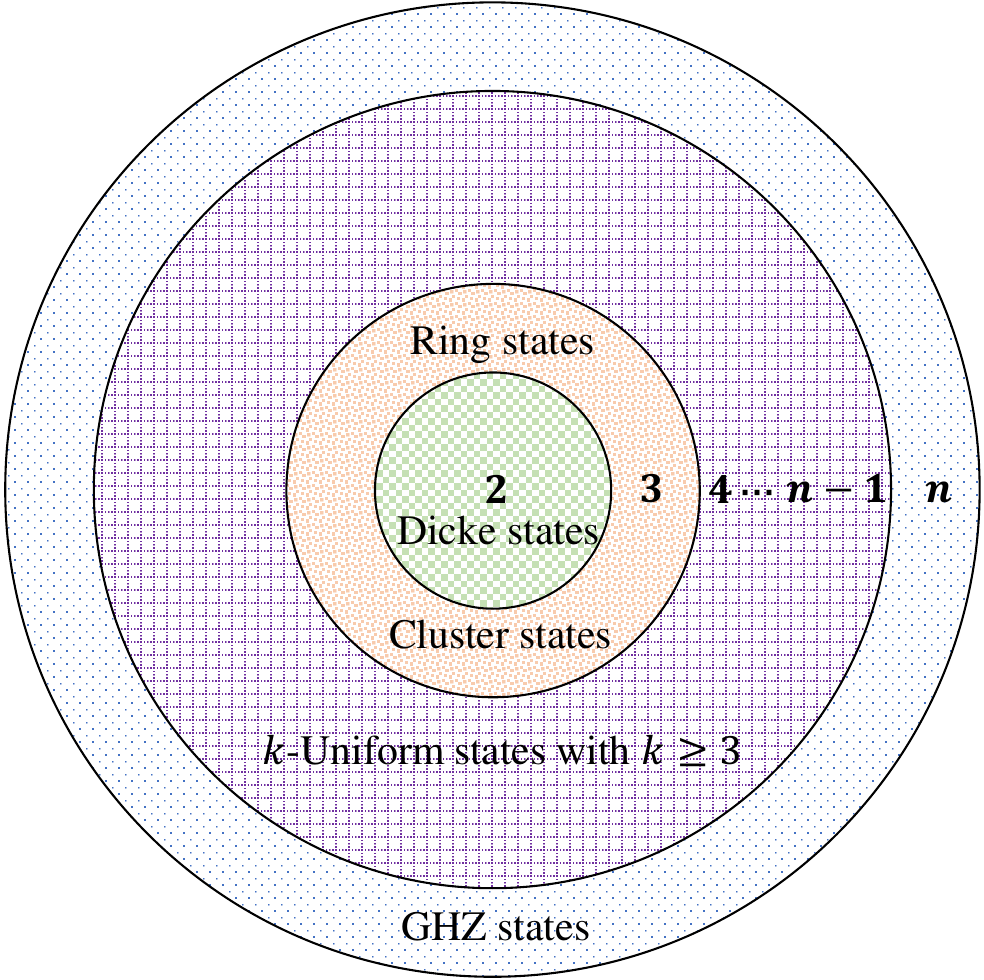}
		\caption{   {\bf   Classification of genuinely entangled states based on the EDL.} Genuinely entangled states can be divided into $n-1$ classes based on the values of their EDL.
  } \label{fig:detection}
\end{figure}

\smallskip

\textit{\textbf{EDL and SDL.}}\textbf{---} Consider a quantum system consisting of $n$  particles, labeled by integers in the finite set $[n]  : =  \{1,2,\dots,  n\}$. We denote by $\cH_i$ the Hilbert space of the $i$-th particle, and by  $\cH_{[n]}:=\otimes_{i=1}^n\cH_{i}$ the Hilbert space of the total system. 
The set of all density matrices on $\cH_{[n]}$ will be denoted by $\cD_{[n]}$. Subsystems will be identified by subsets of $[n]$. For a given subset $S  \subseteq [n]$, the Hilbert space of the corresponding subsystem is $\cH_S  :  = \bigotimes_{j \in  S} \,  \cH_j$ and the set of density matrices of $\cH_S$ is denoted by $\cD_S$.    For a given quantum state $\rho  \in  \cD_{[n]}$,   we will denote by $\rho_S \in \cD_S$  the marginal state of the state $\rho$ on system $S$, namely $\rho_S  :  =  {\sf Tr}_{\overline S}  [\rho]$, where ${\sf Tr}_{\overline S}$ denotes the partial trace over the Hilbert space $\cH_{\overline S}$ associated to the complementary subsystem $\overline S : = [n] \setminus S$. 

{A pure state $\ket{\psi}\in \cH$  is  \emph{separable with respect to the  bipartition $S|\overline{S}$} if  $\ket{\psi}=\ket{\alpha}_S\otimes\ket{\beta}_{\overline{S}}$, where $\ket{\alpha}_{S}$   and $\ket{\beta}_{\overline{S}}$ are pure states in $\cH_S$ and $\cH_{\overline S}$, respectively.   The state $\ket{\psi}$ is called {\em biseparable} if there exists a proper subset $ \emptyset  \not  =  S  \subsetneq [n]$ such that $\ket{\psi}$ is separable with respect to the bipartition $S|\overline{S}$.   The state $\ket{\psi}$ is called {\em fully separable} if it is separable with respect to all possible bipartitions. A mixed  state $\rho\in \cD_{[n]}$ is biseparable (respectively, fully separable) if it can be written as a convex combination of biseparable (respectively, fully separable) pure states.
A state $\rho\in \cD_{[n]}$ is  \emph{entangled} if it is not fully separable, and   {\em genuinely entangled} if it is not biseparable.}    

We now consider the problem of determining the entanglement of a state from the specification of its marginals.  
Let $\cS   :=  \{S_1,  \dots,  S_t\}$ be a collection of subsets of $[n]$.    We define the {\em compatibility set}  $\cC(\rho  , \cS)$  as the set of all density matrices 
$\sigma \in \cD_{[n]}$ that have the same marginals as $\rho$ on the   subsystems labeled by elements of $\cS$:  in formula, 
\begin{align}\label{compatibilityset}
\cC(\rho, \cS ):=\big\{\sigma\in\cD_{[n]} ~\big|~ \sigma_{S}= \rho_{S}, \ \forall \,  S\in \cS \big\}\,.
\end{align}

 Note that the  compatibility
set is a convex set \cite{chen2012ground}.   
If the  set  $\cC(\rho, \cS )$ contains  only the state $\rho$, we say that   $\cS$ {\em determines  $\rho$}.     
If the compatibility set  $\cC(\rho, \cS )$ contains  only genuinely entangled states,  we say that  $\cS$ {\em detects  $\rho$'s GME}.    The set of all  $\cS$ that determine $\rho$ will be denoted by  ${\sf SD}  (\rho)$, while the set of all $\cS$ that  detect $\rho$'s GME will be denoted by  ${\sf ED}  (\rho)$. For a genuinely entangled state $\rho$, one has the inclusion ${\sf SD}  (\rho)  \subseteq {\sf ED}  (\rho)$.

The   {\em entanglement detection length  (EDL)} and the {\em state determination length (SDL)} of the state $\rho$ are  defined as 
\begin{align}\label{EDL}
l(\rho)  :  =\min_{\cS   \in {\sf ED}  (\rho) } \max_{S\in  \cS}   \, |S| \, ,
\end{align}
and 
\begin{align}\label{SDL}
L(\rho)  :  =\min_{\cS   \in {\sf SD}  (\rho) } \max_{S\in  \cS}   \, |S| \, ,
\end{align}
respectively.  For pure states $|\psi\rangle$, we will  slightly abuse the notation, writing $l (|\psi\rangle)$ and  $L (|\psi\rangle)$ in place of $l (|\psi\rangle\langle \psi|)$ and $L(|\psi\rangle\langle \psi|)$, respectively.  

The SDL (respectively, EDL) quantifies the minimum    number of particles that have to be jointly measured  in order to determine a state (respectively, detect its GME).    In particular,  every experimental scheme that determines the state $\rho$ (respectively, certifies the presence of GME in the state $\rho$)   must include at least one $k$-body observable with $k\geq L(\rho)$ (respectively, $k\geq l(\rho)$).  {In   Section I of the Supplemental Material (SM), we show  that the EDL and SDL  are both  invariant under local unitary  (LU) operations, but may change under SLOCC.  Moreover, since they are both integer-valued, they are nonconvex.} 


 The definitions of  EDL and SDL imply the inequality $l(\rho) \le L(\rho)$ for every genuinely entangled state $\rho$. {This inequality is generally strict: for example, a 4-qubit pure state with  non-zero gap between the EDL and SDL  can be found, {\em e.g.,} in Ref.~\cite{miklin2016multiparticle} (see  Section~IX of the SM for details).   Later in the paper, we will provide examples of states with maximum gap between the EDL and SDL.}

\smallskip  
{\em \textbf{Examples.}}\textbf{---} The notions of SDL and EDL are related to  several important features  of the key examples of multipartite entangled states considered  in the literature. For example,  $n$-qubit GHZ states \cite{greenberger1989going} have maximum SDL, equal to $n$,  since all their $k$-particle marginals with $k\le n-1$ are compatible with the separable state $\sigma  = 1/2 (|0\rangle \langle 0|)^{\otimes n} +    1/2 (|1\rangle \langle 1|)^{\otimes n}$. For the same reason, they also have maximum EDL.  In fact, one can show that   the SDL (respectively, EDL) of an $n$-qubit (respectively, genuinely entangled) pure state  is equal to $n$ if and only if the state is  ``$\ghz$-like'' \cite{walck2009only}, that is, it is LU-equivalent to an $n$-qubit generalized $\ghz$ state of the form $\alpha  \,  |0\rangle^{\otimes n}  +  \beta \, |1\rangle^{\otimes n}$, with $\alpha\beta\neq 0$.   
Lower bounds on the EDL and SDL are also immediate for  $k$-uniform states \cite{scott2004multipartite,goyeneche2014genuinely}, {\em i.e.,} $n$-qubit pure states  for which all the   $k$-body marginals are maximally mixed: since the $k$-body marginals are compatible with the maximally mixed $n$-qubit state,   the SDL and  EDL  $k$-uniform states must be strictly larger than $k$.


{Examples of entangled states with minimum SDL/EDL  are provided unique ground states of 2-local Hamiltonians, such as projected pair entangled states \cite{10.5555/2016976.2016982}: when entangled, they have   SDL equal to 2, since their $2$-body marginals determine the expectation value of the Hamiltonian.  When  genuinely entangled, they have  EDL equal to 2. Dicke states, {\em i.e.,} $n$-qubit states of the form 
\begin{equation}
 \ket{D_n^i}:=\frac{1}{\sqrt{\binom{n}{i}}}\sum_{\begin{subarray}{c}
   \\   s_j   \in  \{0,1\}, \, \forall \,  j\in [n] \\  \\
     \sum_{j=1}^n  s_j=i 
     \end{subarray}}  |s_1\rangle \otimes |s_2\rangle \otimes \cdots\otimes | s_n\rangle  \, ,
\end{equation}  
are an example of this situation ~\cite{chen2012correlations}. For $i\in  [1,n-1]$, the Dicke state $|D_n^i\rangle$ is genuinely entangled \cite{dicke1954coherence,lucke2014detecting}, and its EDL is $l(|D_n^i\rangle)   =  2$. More generally, unique ground states of $k$-local Hamiltonians have SDL  upper bounded by $k$, and the same bound applies to the EDL if they are genuinely entangled \cite{yu2023learning}. }


Graph states \cite{hein2004multiparty,guhne2005bell,hein2006entanglement},  including in particular cluster states \cite{briegel2001persistent} and ring states \cite{baccari2020scalable}, are another interesting class of multipartite states.    For cluster states and ring states, we show that the SDL and EDL coincide and are equal to 3.   For more general genuinely entangled graph states,  we show that the SDL and EDL are between $3$ and 1 plus the maximum degree of a vertex in the corresponding graph (see Section II  of the SM for  details).

\smallskip 

{\em \textbf{EDL and SDL of symmetric states.}}\textbf{---} We now provide a complete characterization of the EDL for $n$-qubit symmetric states \cite{i2016characterizing}, that is, states with support in the completely symmetric subspace  (see Section~III of the SM for a more detailed introduction).    Notably, a symmetric  state is  either fully separable  or  genuinely entangled \cite{ichikawa2008exchange}.

For  entangled symmetric states, we have the following:
\begin{proposition}\label{prop:symmetric} 
For an $n$-qubit entangled symmetric state $\rho$,   $l(\rho)=  \min  \{  k\mid   \rho_{[k]} \text{ is entangled} \}$.  
\end{proposition}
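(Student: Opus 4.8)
The plan is to prove the two inequalities $l(\rho)\ge k^*$ and $l(\rho)\le k^*$ separately, where $k^*:=\min\{k\mid \rho_{[k]}\text{ is entangled}\}$. Note first that $k^*\ge 2$, since a single-qubit marginal is never entangled, and that marginal entanglement is monotone: $\rho_{[k-1]}$ is a marginal of $\rho_{[k]}$, so $\rho_{[k]}$ separable implies $\rho_{[k-1]}$ separable, whence $\rho_{[k]}$ is fully separable for all $k<k^*$ and entangled for all $k\ge k^*$.

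For the lower bound I would exhibit, for every collection $\cS$ with $\max_{S\in\cS}|S|=k\le k^*-1$, a non-GME state in the compatibility set $\cC(\rho,\cS)$. Since $\rho_{[k]}$ is a separable symmetric state, I would invoke the standard structure theorem that such a state is a convex combination of symmetric product states, $\rho_{[k]}=\sum_i p_i\,\proj{\theta_i}^{\otimes k}$. The natural candidate is then the fully separable symmetric $n$-qubit state $\sigma:=\sum_i p_i\,\proj{\theta_i}^{\otimes n}$. Because $\sigma$ is permutation-invariant, $\sigma_S=\sum_i p_i\,\proj{\theta_i}^{\otimes|S|}$ depends only on $|S|$ and coincides with $\rho_{[|S|]}=\rho_S$ for every $S$ with $|S|\le k$; hence $\sigma\in\cC(\rho,\cS)$. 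As $\sigma$ is fully separable it is not GME, so $\cS\notin{\sf ED}(\rho)$, which yields $l(\rho)\ge k^*$.

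For the upper bound I would take $\cS^*$ to be the collection of all $k^*$-element subsets and prove $\cS^*\in{\sf ED}(\rho)$. The key step is to show that any $\sigma\in\cC(\rho,\cS^*)$ must itself be a symmetric state. Indeed, since $k^*\ge 2$ every pair $\{i,j\}$ lies in some $S\in\cS^*$, so $\sigma_{\{i,j\}}=\rho_{[2]}$, which is supported on the two-qubit symmetric subspace. Using the characterization that $\supp(\sigma)$ lies in the $n$-qubit symmetric subspace if and only if $\tr[\Pi^-_{ij}\sigma]=\tr[\Pi^-_{ij}\sigma_{\{i,j\}}]=0$ for every pair (where $\Pi^-_{ij}$ is the antisymmetric projector on qubits $i,j$, and the symmetric subspace is the common $+1$ eigenspace of all transpositions), the vanishing of all these pairwise antisymmetric components forces $\sigma$ to be symmetric. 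Now the dichotomy that a symmetric state is either fully separable or GME \cite{ichikawa2008exchange} finishes the argument: if some $\sigma\in\cC(\rho,\cS^*)$ were not GME it would be fully separable, and then its marginal $\sigma_{[k^*]}=\rho_{[k^*]}$ would be fully separable, contradicting the definition of $k^*$. Hence $\cS^*$ detects $\rho$'s GME and $l(\rho)\le k^*$.

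I expect the main obstacle to be the upper bound, and specifically the fact that a generic $\sigma\in\cC(\rho,\cS^*)$ need not be symmetric and may be biseparable through a convex combination of states separable across many \emph{different} bipartitions. A naive approach---finding, for each bipartition, a $k^*$-subset that straddles it and concluding that the corresponding marginal is biseparable---fails, because no single subset need straddle all the bipartitions appearing in the decomposition (e.g.\ mixtures over all single-qubit cuts). The insight that resolves this is to stop reasoning cut-by-cut and instead exploit that the prescribed marginals are symmetric: matching the symmetric two-body marginal $\rho_{[2]}$ on every pair pins all pairwise antisymmetric components to zero and thereby lifts $\sigma$ into the global symmetric subspace, after which the problem collapses onto the clean fully-separable-or-GME dichotomy. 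The remaining routine points are the symmetric-separable structure theorem used in the lower bound and the standard transposition-eigenspace description of the symmetric subspace.
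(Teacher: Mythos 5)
Your proof is correct, and its overall architecture coincides with the paper's: the lower bound is the same argument (write the fully separable symmetric marginal $\rho_{[k]}$ as $\sum_i p_i\proj{\theta_i}^{\otimes k}$ via the structure theorem of Ref.~\cite{ichikawa2008exchange} and lift it to the fully separable global state $\sum_i p_i\proj{\theta_i}^{\otimes n}$, which lies in the compatibility set), and the upper bound in both cases reduces to showing that every $\sigma\in\cC(\rho,\cS_{k^*})$ is itself symmetric, after which the fully-separable-or-genuinely-entangled dichotomy for symmetric states finishes the proof. The one genuine difference is the mechanism used to force symmetry of $\sigma$. The paper proves a standalone lemma (symmetric marginals on a \emph{connected} hypergraph imply a symmetric global state) by a Schmidt-decomposition argument for pure states, a range-inclusion argument for mixed states, and generation of all transpositions by composing swaps along paths. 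You instead observe that every pair $\{i,j\}$ lies in some $k^*$-subset, so $\tr[\Pi^-_{ij}\sigma]=\tr[\Pi^-\rho_{[2]}]=0$, and positivity then forces $\Pi^-_{ij}\sigma=0$, i.e.\ $\swap_{(i,j)}\sigma=\sigma$, for all pairs simultaneously; this is shorter and self-contained, and your trace-projector step is sound ($\tr[\Pi\sigma]=0$ with $\Pi,\sigma\ge0$ and $\Pi^2=\Pi$ gives $\Pi\sigma=0$). What the paper's formulation buys is generality: its connected-hypergraph lemma is reused verbatim for Propositions~3 and 4, where the collection $\cS$ need not cover every pair directly, whereas your argument exploits the special fact that $\cS_{k^*}$ with $k^*\ge 2$ does cover all pairs (though it could be adapted, since your $\swap_{(i,j)}\sigma=\sigma$ conclusion composes along paths just as in the paper).
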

The proof is provided in Section~IV of the SM, where we compute the EDL in several special cases of  pure and mixed symmetric states.   

{Proposition \ref{prop:symmetric}    implies that every symmetric state with EDL equal to 2 is  a robust multipartite entangled state \cite{luo2021robust}, meaning that all its $k$-body marginals with $k\ge 2$  are genuinely entangled.}   

Proposition \ref{prop:symmetric}  reduces the calculation of the EDL to the problem of determining whether a symmetric state is entangled or not. This reduction is important, because the entanglement of symmetric states is well studied \cite{toth2009entanglement,eckert2002quantum,stockton2003characterizing,ichikawa2008exchange,tura2012four,augusiak2012entangled,wolfe2014certifying,yu2016separability,i2016characterizing,tura2018separability,imai2024collective}. For example, all symmetric states that are diagonal in the basis of the Dicke states are separable if and only if their density matrix has positive partial transpose with respect to $\lfloor n/2\rfloor$ qubits \cite{yu2016separability}.

An explicit  expression for the SDL can be provided  for a family of Dicke-diagonal symmetric states: 
\begin{proposition}\label{pro:symmetric_SDL}  Let  $\rho$ be a symmetric state of the diagonal  form   $\rho=\sum_{i=0}^k\lambda_i\ket{D_n^i}\bra{D_n^i}$ for some integer $k\le n$, where    $\lambda_i$ is nonzero for all but one value of $i$, namely $\lambda_i\not  =  0,  \, \forall \, i\in  \{0,\dots, k\} \setminus \{i_*\}$ for some $i_*\in  \{0,\dots,  k\}$.
  Then,  $L(\rho)=k   + \min  \{[(k-i_*)\mod 2],  |n-k|\}$.  
  \end{proposition}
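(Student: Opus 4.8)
The plan is to prove the two inequalities $L(\rho)\ge k+\varepsilon$ and $L(\rho)\le k+\varepsilon$ separately, where $\varepsilon:=\min\{[(k-i_*)\bmod 2],\,n-k\}$, reducing both to the same sign/parity bookkeeping. The common starting point is the standard reformulation of unique determination: taking $\cS$ to be the family of all $m$-element subsets and writing $\cC(\rho,\cS)=(\rho+\cK)\cap\cD_{[n]}$, where $\cK$ is the Hermitian kernel of the $m$-body marginal map (equivalently, the Hermitian operators whose Pauli expansion contains only terms of weight $\ge m+1$), the set $\cS$ determines $\rho$ if and only if there is no nonzero $\Delta\in\cK$ with $\rho+\Delta\ge 0$. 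I would apply this with $m=k+\varepsilon-1$ for the lower bound and $m=k+\varepsilon$ for the upper bound. Throughout I take $\lambda_{i_*}=0$, so $\supp\rho=V:=\lin\{\ket{D_n^j}:j\in\{0,\dots,k\}\setminus\{i_*\}\}$ and the ``reservoir'' of vanishing levels is $R=\{i_*\}\cup\{k+1,\dots,n\}$.

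For the lower bound I would exhibit a single explicit $\Delta\in\cK\setminus\{0\}$ at $m=k+\varepsilon-1$. Restricting to Dicke-diagonal $\Delta=\sum_{j=0}^n\delta_j\proj{D_n^j}$, the vanishing of its $m$-body marginal becomes the recurrence $\sum_{r=0}^{w}\binom{w}{r}c_{s-r}=0$ for $s=w,\dots,n$, where $w=n-m$ and $c_j=\delta_j/\binom{n}{j}$; the solution space is exactly $\{c_j=(-1)^j p(j):\deg p\le w-1\}$. Since $\rho$ is strictly positive on $V$, the constraint $\rho+t\Delta\ge 0$ for small $t>0$ reduces to $(-1)^j p(j)\ge 0$ for $j\in R$. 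I would then take $p$ of degree $n-k-\varepsilon$ with one real root in each gap $(k+t,k+t+1)$, producing the sign pattern $(-1)^{j}$ forced on $\{k+1,\dots,n\}$, and an extra root placed in $(i_*,i_*+1)$ exactly when $k-i_*$ is even, so as to flip the value at $i_*$ to the required sign $(-1)^{i_*}$ (the case $n=k$ degenerating to a constant $p$). This produces a state $\rho+t\Delta\neq\rho$ sharing all $(k+\varepsilon-1)$-body marginals, hence $L(\rho)\ge k+\varepsilon$.

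For the upper bound I would build a parent Hamiltonian $H=\alpha\sum_{i<j}P^{-}_{ij}+q(N)$, where $P^{-}_{ij}$ projects onto the antisymmetric subspace of the pair $(i,j)$ (a $2$-local term with kernel the symmetric subspace), $N=\sum_i(\I-Z_i)/2$, and $q$ is a polynomial with $q(j)=0$ precisely for $j\in\{0,\dots,k\}\setminus\{i_*\}$ and $q(j)>0$ for $j\in R$. Because $q(N)$ is $(\deg q)$-local and $q(j)\ge 0$ at every integer level, $H\ge 0$ is $(k+\varepsilon)$-local (the $2$-local symmetry term being subsumed once $k+\varepsilon\ge 2$) with $\ker H=V=\supp\rho$. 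The same sign bookkeeping as above shows $q$ can be taken of degree $k$ when $k-i_*$ is even or $n=k$, and must be raised to degree $k+1$ when $k-i_*$ is odd and $n>k$, so $\deg q=k+\varepsilon$. Any $\sigma$ with the same $(k+\varepsilon)$-body marginals then satisfies $\tr(H\sigma)=\tr(H\rho)=0$, forcing $\supp\sigma\subseteq V$; and an injectivity lemma for the marginal map on operators supported on $V$ at level $k+\varepsilon$ (the diagonal part via the recurrence, now with $p$ forced to vanish on all $n-k+1$ points of $R$ while $\deg p\le n-k-\varepsilon-1$, hence $p\equiv 0$; the coherences of order $q\le k\le m$ because their Pauli weight $\le k<m+1$) forces $\sigma=\rho$, giving $L(\rho)\le k+\varepsilon$.

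The heart of both directions, and the step I expect to be most delicate, is the sign-matching argument that pins down the exact degree and thereby $\varepsilon$: a real polynomial must realize the prescribed sign $(-1)^{i_*}$ at $i_*$ while simultaneously realizing the alternating pattern forced at the $n-k$ reservoir levels $\{k+1,\dots,n\}$, and the incompatibility of these requirements in the odd case, which disappears precisely when $\{k+1,\dots,n\}=\emptyset$, i.e. $n=k$, is exactly what forces the extra unit captured by the $\min$ with $n-k$. The remaining technical point is the off-diagonal part of the injectivity lemma on $V$, which I would settle by the same finite-difference argument carried out within each fixed coherence order.
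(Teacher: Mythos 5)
Your proposal is correct in substance but takes a genuinely different route from the paper's. The paper first forces every compatible state into the symmetric subspace (via a connectedness lemma for the hypergraph $([n],\cS_k)$), then solves the diagonal marginal-matching equations explicitly, obtaining the general solution $a_{r,r}=\sum_{i=k+1}^n(-1)^{k-r+1}\binom{i}{k}\binom{k}{r}\frac{i-k}{i-r}s_i+\lambda_r$, and extracts $L(\rho)$ by a sign/uniqueness case analysis on this formula ($k-i_*$ even versus odd, with $k=n$ handled by a separate lemma), killing the off-diagonal coherences by a triangular extraction argument. You instead repackage the kernel of the $m$-body marginal map on Dicke-diagonal operators in the closed form $c_j=(-1)^jp(j)$, $\deg p\le n-m-1$ (which I checked is equivalent, via the substitution $c_j=(-1)^jd_j$ turning the constraint into vanishing $w$-th finite differences), so that the paper's entire case analysis collapses into root counting for a single real polynomial; this makes the origin of the parity term and of the $\min$ with $n-k$ transparent. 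Your upper bound is also different: where the paper proves $L(\rho)\le k$ via its ``unique nonnegative solution'' criterion inside the symmetric compatibility set, you confine $\supp\sigma$ to $V$ by the frustration-free parent Hamiltonian $\alpha\sum_{i<j}P^-_{ij}+q(N)$ with $\deg q=k+\varepsilon$, and the degree/sign obstruction for $q$ mirrors exactly the obstruction driving the paper's lower bound. The paper's route is self-contained linear algebra; yours explains the formula structurally and connects it to the standard unique-ground-state mechanism.

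Two points need repair. First, your justification for the off-diagonal part of the injectivity lemma is wrong: Dicke coherences with $i,j\le k$ do \emph{not} have Pauli weight $\le k$. Already for $n=2$, $k=1$, one has $\ketbra{D_2^0}{D_2^1}+\ketbra{D_2^1}{D_2^0}=\tfrac{1}{\sqrt2}\bigl(\tfrac{I+Z}{2}\otimes X+X\otimes\tfrac{I+Z}{2}\bigr)$, which contains the weight-two terms $Z\otimes X$ and $X\otimes Z$; in general $\ketbra{D_n^i}{D_n^j}$ has Pauli terms of weight up to $n$. Fortunately your own fallback is valid and is in fact needed: for fixed coherence order $d=j-i\le k\le m$, vanishing of the $m$-body marginal of $\sum_i a_{i,i+d}\ketbra{D_n^i}{D_n^{i+d}}$ gives, after rescaling, the same unsigned-difference recurrence, hence $b_i=(-1)^ip(i)$ with $\deg p\le n-m-1$; the $n-k$ forced zeros at $i\in\{k-d+1,\dots,n-d\}$ (where $a_{i,i+d}=0$ because $i+d>k$) exceed this degree, so $p\equiv 0$. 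So you should delete the Pauli-weight sentence and promote the finite-difference argument to the actual proof; this then reproduces the content of the paper's off-diagonal lemma, which the paper proves instead by triangular extraction of matrix entries. Second, a small edge case: when $k+\varepsilon=1$ (i.e., $k=1$, $i_*=1$, so $\rho=\proj{D_n^0}$ is a pure product state and the claim is $L(\rho)=1$), your $H$ is still $2$-local because of the symmetrizing term, and your construction as written only yields $L(\rho)\le 2$. Here take $q(x)=x$ and $H=N$ alone: its kernel $\lin\{\ket{0}^{\otimes n}\}$ is automatically symmetric, so the $1$-local energy argument closes this case (the paper handles it through its separate rank criterion for $L(\rho)=1$). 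With these two fixes your argument is complete.
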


The proof of Proposition~\ref{pro:symmetric_SDL} is given in Section~V of the SM, where we compute the SDL for several examples of diagonal symmetric states.

For symmetric states, the collection of marginals needed to determine a state (detect its GME)  have a simple graph-theoretic characterization. {A hypergraph $G= (V,E)$ consists of a set of vertices $V$ and a set of hyperedges $E$,  a hyperedge being a  nonempty subset of $V$. 
In this language, every collection $\cS$ of subsets of  $[n]$ defines a hypergraph $G=  ([n], \cS)$. A hypergraph is connected if, for any partition of its vertex set into two non-empty sets $X$ and $Y$, there exists at least one hyperedge  connecting a vertex in $X$ to a vertex in $Y$.}   

\begin{proposition}\label{prop:graphtheory}
For an $n$-qubit entangled symmetric state $\rho$, a  collection $\cS$ determines $\rho$ (detects $\rho$'s GME) if and only if   the hypergraph $G=  ([n],\cS)$ is connected and $|S| \ge L(\rho)$ ($|S| \ge l(\rho)$)  $\exists \, S\in \cS$.   
\end{proposition}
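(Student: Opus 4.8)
The statement packages four implications: each of the two biconditionals (``determines'' and ``detects GME'') splits into necessity and sufficiency. My plan is to dispose of necessity with two short constructions and then concentrate on sufficiency, which carries essentially all of the content.

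First I would treat necessity. For connectivity, suppose $G=([n],\cS)$ is disconnected, so that $[n]=X\sqcup Y$ with $X,Y\neq\emptyset$ and every $S\in\cS$ contained in $X$ or in $Y$. Then the product $\rho_X\otimes\rho_Y$ reproduces every prescribed marginal $\rho_S$ and hence lies in $\cC(\rho,\cS)$; being separable across $X|Y$ it is biseparable, and since an entangled symmetric state is genuinely entangled it differs from $\rho$. Thus such an $\cS$ neither determines $\rho$ nor detects its GME, so connectivity is necessary in both cases. Necessity of the size condition is immediate from \eqref{EDL}--\eqref{SDL}: if $\cS\in{\sf SD}(\rho)$ (resp. ${\sf ED}(\rho)$) then $\max_{S\in\cS}|S|\ge L(\rho)$ (resp. $\ge l(\rho)$), so some edge attains the stated size.

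The core is sufficiency, and I would begin by recording two simplifications. Since $\rho$ is permutation invariant, $\rho_S=\rho_{[\,|S|\,]}$ depends only on $|S|$; and by Proposition~\ref{prop:symmetric} together with the monotonicity ``$\rho_{[m]}$ entangled $\Rightarrow\rho_{[m+1]}$ entangled'' (a fully separable marginal has fully separable sub-marginals, and symmetric states are either fully separable or genuinely entangled), every $\rho_{[m]}$ with $m\ge l(\rho)$ is genuinely entangled. Now fix $\sigma\in\cC(\rho,\cS)$ and a large edge $S_0$ with $|S_0|=k$. The aim is to propagate the constraint outward along the connected hypergraph: connectivity is equivalent to the existence, for every vertex, of a chain of pairwise-overlapping edges reaching it from $S_0$, and I would attempt to show inductively that $\sigma$ agrees with $\rho$ on a growing region, absorbing one new vertex per overlapping edge, until for $k\ge L(\rho)$ one reaches $\sigma=\rho$, and for $k\ge l(\rho)$ one certifies that $\sigma$ cannot be biseparable.

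The hard part will be the gluing step underlying this propagation: in general, knowing $\sigma$ on a region $U$ and on an overlapping edge $S$ does \emph{not} fix $\sigma$ on $U\cup S$, since a joint state is not determined by two overlapping marginals. I would make this step tight using the symmetric structure and the calibration of the thresholds, treating the permutation-invariant collection $\cS=\{\text{all }k\text{-subsets}\}$ as a stepping stone: there the twirl $\tfrac{1}{n!}\sum_\pi \pi\sigma\pi^\dagger$ again lies in $\cC(\rho,\cS)$ and is symmetric, reducing determination to the uniqueness of a symmetric state with prescribed $k$-body marginal for $k\ge L(\rho)$, which the characterization of $L(\rho)$ for symmetric states is expected to encode. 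The remaining task is to remove the invariance assumption, showing that a single $k$-edge together with connectivity already forces agreement with this symmetric reconstruction. For detection I would instead use the dual reading of connectivity---that \emph{every} bipartition $B|\overline B$ is straddled by some edge---to exclude a biseparable $\sigma=\sum_B p_B\,\sigma_B$, since for each $B$ a straddling edge constrains the induced cut of $\sigma_B$ to be separable while the relevant marginal of $\rho$ is genuinely entangled; turning this bipartition-by-bipartition obstruction into a contradiction for the full convex mixture is the delicate point I expect to require the most care.
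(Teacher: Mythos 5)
Your necessity argument matches the paper's exactly: across a disconnecting partition $X\sqcup Y$ the product $\rho_X\otimes\rho_Y$ is compatible with every prescribed marginal, is biseparable, and differs from $\rho$ because an entangled symmetric state is genuinely entangled; the size condition is immediate from the definitions. But the sufficiency half, which you rightly identify as carrying all the content, is not proved, and the routes you sketch do not close it. The missing idea is the paper's Lemma~\ref{lem:rdm_coverse} (with its corollary for compatibility sets): if $\rho$ is symmetric and the hypergraph $G=([n],\cS)$ is connected, then \emph{every} $\sigma\in\cC(\rho,\cS)$ is itself a symmetric state in the strong, bosonic sense $\swap_{(i,j)}\sigma=\sigma$. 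The proof is elementary but specific: a symmetric two-body marginal on $\{i,j\}$ forces, via the Schmidt decomposition of a purification (and a range-inclusion argument for mixed $\sigma$), each Schmidt vector on $\{i,j\}$ to lie in the bosonic subspace, hence $\swap_{(i,j)}\sigma=\sigma$; transpositions are then composed along paths of overlapping edges, using ${\swap}_{(a_1,a_3)}={\swap}_{(a_1,a_2)}{\swap}_{(a_2,a_3)}{\swap}_{(a_1,a_2)}$, so connectivity generates all of $S_n$. Once $\sigma$ is symmetric, its marginals depend only on subset size (Lemma~\ref{lem:rdm}), so the single large edge $S$ with $|S|\ge L(\rho)$ (resp.\ $l(\rho)$) gives $\sigma_{S'}=\rho_{S'}$ for all $S'\in\cS_{L(\rho)}$ (resp.\ $\cS_{l(\rho)}$), and the definitional facts that $\cC(\rho,\cS_{L(\rho)})=\{\rho\}$ and that $\cC(\rho,\cS_{l(\rho)})$ contains only genuinely entangled states finish the proof in one line. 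Your ``remaining task \dots\ remove the invariance assumption'' is precisely this lemma, left open.

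Your two fallback ideas would not substitute for it. First, the twirl $\tfrac{1}{n!}\sum_\pi\pi\sigma\pi^\dagger$ establishes only permutation invariance of a \emph{different} state: even if the twirled state is pinned to $\rho$, that does not force $\sigma=\rho$; moreover, permutation invariance is strictly weaker than the bosonic-support notion of ``symmetric'' that the paper's machinery requires (the maximally mixed state is permutation invariant but not symmetric), and the fully-separable-or-GME dichotomy used for detection holds only for the latter. For the collection of all $k$-subsets the claims $\cC(\rho,\cS_k)=\{\rho\}$ for $k\ge L(\rho)$ and ``only GME'' for $k\ge l(\rho)$ are definitional, so that stepping stone adds nothing. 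Second, the bipartition-straddling plan for detection fails already at the per-cut level: connectivity guarantees a straddling edge $S$ for each cut $B|\overline B$, but that edge may have $|S|<l(\rho)$, in which case by Proposition~\ref{prop:symmetric} the marginal $\rho_S$ is fully separable and imposes no obstruction on a state separable across $B|\overline B$; conversely, the one large edge cannot exclude cuts it does not straddle, since a term separable across $B|\overline B$ with $S\subseteq B$ has an unconstrained (possibly GME) marginal on $S$. A biseparable $\sigma$ is additionally a mixture over cuts, compounding the problem. The point of the paper's symmetrization lemma is exactly that the small edges contribute not entanglement constraints but swap invariance, after which the symmetric-state dichotomy makes detection trivial.
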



This result, provided in Section VI   of the SM, can be used to compute the minimum number of  marginals needed to determine a state/guarantee GME.    Without loss of generality, we can restrict the attention to $L(\rho)$-body  ($l(\rho)$-body) marginals for determining (detecting the GME of) the state $\rho$.  {We denote by $\cS_k$  the collection of all $k$-subsets of $[n]$, that is $|\cS_k|=\binom{n}{k}$, and $|S| = k$ for every $S\in  \cS_k$.} The minimum number of  marginals  needed to determine  $\rho$ 
is
\begin{equation}\label{Mrho}
    M (\rho):= \min_{\cS \in  {\sf SD}  (\rho) }  \left\{ |\cS|  ~|~  \cS  \subseteq  \cS_{L(\rho)}  \right\} \, , 
\end{equation}
while the minimum number  marginals needed to detect $\rho$'s GME is 
\begin{equation}\label{mrho}
    m (\rho):= \min_{\cS \in  {\sf ED}  (\rho) }    \left\{ |\cS|~|~       \cS  \subseteq  \cS_{l (\rho)} \right\}
       \, . 
\end{equation}
For an $n$-qubit entangled symmetric state $\rho$, Proposition \ref{prop:graphtheory} yields the simple expressions  $M(\rho)   =\lceil  (n-1)/(L(\rho)-1) \rceil $ and  $m(\rho)   =\lceil  (n-1)/(l(\rho)-1) \rceil$  (see  Section VI of the SM for the derivation).

The graph-theoretic characterization of  Proposition \ref{prop:graphtheory} has also an application to  the problem of  entanglement transitivity \cite{tabia2022entanglement}.   A set of marginals $\{\rho_S\mid S\in \cS \}$  exhibits \emph{entanglement transitivity} in a target system $T\notin \cS$  if for all global states $\rho$ compatible with these marginals, the marginal $\rho_T$ is entangled. By  definition of the EDL, 
the set of all $l(\rho)$-body marginals $\{\rho_S\mid S\in \cS_{l(\rho)} \}$   exhibits entanglement transitivity in $[n]$, for every   genuinely entangled state $\rho$.  In the special case of symmetric states, one can prove a  stronger result: 


\begin{proposition}\label{prop:transitivity}
   For an $n$-qubit  entangled symmetric state $\rho$, if the hypergraph $G=  ([n],\cS)$ is connected and $|S| \ge l(\rho)$  $\exists \, S\in \cS$, then  the set of marginals $\{\rho_S\mid S\in \cS \}$ exhibits entanglement transitivity in any $T$ with $|T|\geq l(\rho)$.  
\end{proposition}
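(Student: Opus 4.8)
The plan is to reduce the statement to the genuine-entanglement criterion already established in Proposition~\ref{prop:graphtheory} and then to exploit the rigidity of symmetric marginals. The hypotheses here are precisely the condition under which Proposition~\ref{prop:graphtheory} guarantees that $\cS$ detects $\rho$'s GME, so I may assume from the outset that every $\sigma\in\cC(\rho,\cS)$ is genuinely entangled. The remaining task is to strengthen ``$\sigma$ is genuinely entangled'' to ``$\sigma_T$ is entangled'' for every target $T$ with $|T|\ge l(\rho)$. The structural input I would rely on is that a marginal of a symmetric state is again symmetric, and that a symmetric state is either fully separable or genuinely entangled \cite{ichikawa2008exchange}; combined with Proposition~\ref{prop:symmetric}, this shows that $\rho_S$ is not merely entangled but genuinely multipartite entangled---hence entangled across every bipartition of $S$---whenever $|S|\ge l(\rho)$.

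I would then argue by contradiction, assuming that some $\sigma\in\cC(\rho,\cS)$ has $\sigma_T$ fully separable, and recording the elementary but crucial consequence that every marginal $\sigma_R$ with $R\subseteq T$ is then fully separable as well. This immediately settles the case in which some edge $S\in\cS$ with $|S|\ge l(\rho)$ happens to satisfy $S\subseteq T$: the marginal $\sigma_S=\rho_S$ would be both fully separable (as a marginal of the fully separable $\sigma_T$) and genuinely entangled (by the previous paragraph), a contradiction. The content of the proposition therefore lies entirely in the case where no size-$\ge l(\rho)$ edge is contained in $T$, so that the entanglement certified by the large edge sits, at least partly, outside $T$ and must be transported inward.

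The main obstacle is exactly this transport step, and it is where connectivity and symmetry must be used jointly, since entanglement transitivity is false for general states. To organize it, I would first symmetrize: averaging $\sigma$ over the automorphism group $\mathrm{Aut}(G)$ of the hypergraph $G=([n],\cS)$ keeps the $\cS$-marginals fixed, because $\rho$ is permutation invariant and $\mathrm{Aut}(G)$ permutes $\cS$ among itself, so the compatibility set is $\mathrm{Aut}(G)$-invariant and one may work with an $\mathrm{Aut}(G)$-symmetric representative. Using connectivity, I would then build a chain of overlapping edges from a size-$\ge l(\rho)$ edge into $T$ and propagate the genuine multipartite entanglement of the symmetric marginals along this chain, the goal being to force some marginal $\sigma_R$ with $R\subseteq T$ and $|R|\ge l(\rho)$ to be entangled, contradicting the full separability of $\sigma_T$. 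I expect the delicate point to be controlling how entanglement passes through connecting edges whose own marginals may be separable, so that the certificate carried by the single large edge survives as a witness internal to $T$; this is precisely the feature that distinguishes symmetric states, for which Proposition~\ref{prop:symmetric} pins the entanglement threshold to the single scale $l(\rho)$, from arbitrary states. Once such an internal entangled marginal is produced, full separability of $\sigma_T$ is excluded, and since $\sigma$ and $T$ were arbitrary this yields entanglement transitivity in every $T$ with $|T|\ge l(\rho)$.
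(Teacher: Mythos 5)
There is a genuine gap, and it sits exactly where you flagged it yourself: the ``transport'' step is never carried out, and it is the entire content of the proposition. The paper's proof needs no propagation of entanglement along chains of edges at all. The key fact you are missing is Lemma~\ref{lem:rdm_coverse}/Lemma~\ref{lemma: connected_symmetrc}: if $\sigma\in\cC(\rho,\cS)$, then each marginal $\sigma_S=\rho_S$ ($S\in\cS$) is a symmetric state, and when the hypergraph $G=([n],\cS)$ is connected, chaining swap operators along a path ($\swap_{(a_1,a_3)}=\swap_{(a_1,a_2)}\swap_{(a_2,a_3)}\swap_{(a_1,a_2)}$, etc.) forces $\sigma$ itself to be \emph{globally} permutation-symmetric. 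Once $\sigma$ is symmetric, Lemma~\ref{lem:rdm} makes all same-size marginals of $\sigma$ identical: for any $T$ with $|T|=l(\rho)$, $\sigma_T=\sigma_{[l(\rho)]}=\rho_{[l(\rho)]}$ (using the large edge $S$ with $|S|\ge l(\rho)$ and $\sigma_S=\rho_S$), and $\rho_{[l(\rho)]}$ is entangled by Proposition~\ref{prop:symmetric}; for $|T|>l(\rho)$, $\sigma_T$ has an entangled marginal and hence is entangled. So the entanglement does not need to be ``transported inward''---symmetry makes it appear simultaneously at every $T$. Your sketch instead defers this step to an unproven propagation mechanism through connecting edges whose marginals may be separable, which is precisely the step that fails for general states and that you have no argument for.

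Your proposed substitute---averaging $\sigma$ over $\mathrm{Aut}(G)$---is a dead end for two independent reasons. First, $\mathrm{Aut}(G)$ can be trivial for a connected hypergraph (attach the single large edge asymmetrically to a path of $2$-edges), in which case the averaging does nothing and certainly does not produce full permutation symmetry; the symmetry of $\sigma$ comes from the symmetry of $\rho$'s marginals propagated by connectivity, not from symmetries of the hypergraph. Second, even when $\mathrm{Aut}(G)$ is nontrivial, the averaged state $\bar\sigma$ has $\bar\sigma_T$ equal to a mixture of relabeled marginals $\sigma_{\pi^{-1}(T)}$ over $\pi\in\mathrm{Aut}(G)$, so your contradiction hypothesis ``$\sigma_T$ is fully separable'' does not transfer to $\bar\sigma$, and the reduction collapses. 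Two smaller points: your opening reduction via Proposition~\ref{prop:graphtheory} (every compatible $\sigma$ is GME) is correct but unhelpful, since genuine entanglement of the global state says nothing about its marginals (the GHZ state is GME with all proper marginals separable); and your strengthening of ``$\rho_S$ entangled'' to ``$\rho_S$ GME'' via the separable-or-GME dichotomy is valid but unnecessary---plain entanglement of $\rho_{[l(\rho)]}$ suffices once the global symmetry of $\sigma$ is in hand.
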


The proof of Proposition~\ref{prop:transitivity} is given in Section~VII of the SM,  where we also present examples.  




\smallskip 
 
{\em \textbf{General upper bound on  the EDL.}}\textbf{---} We now provide a general  upper bound on the EDL of arbitrary genuinely entangled states. The bound is based on the notion of fully decomposable witness (FDW) \cite{jungnitsch2011taming}: an $n$-qubit observable  $W$ is an FDW   if for every bipartition $S|\overline{S}$, there exist positive semidefinite operators $P_S$ and $Q_S$ such that $W=P_S+Q_S^{T_S}$, where $T_S$ is the partial transpose over $\cH_S$.   This definition implies that the expectation value of $W$ is non-negative for every biseparable state. Hence, a negative value of ${\sf Tr} [W\rho]$ implies  that $\rho$ is genuinely entangled.



We now restrict the attention to FDWs that can be constructed from measurements on a given collection of subsystems, labeled by a collection  $\cS$ of subsets of $[n]$, that is, FDWs of the form  $W =  \sum_{S  \in \cS}  H^S  \otimes I^{\overline S} $, where $H^S$ is an Hermitian operator on $\cH_S$ and $I^{\overline{S}}$ is the identity operator on  $\cH_{\overline S}$.  For a given state $\rho$, minimizing the expectation value  ${\sf Tr} [W \rho]$ over all FDWs of this form is an SDP, whose optimal value is  
\begin{equation}\label{eq:sdp}
\begin{aligned}
\alpha  (\rho,  \cS) := \min \quad &{\sf Tr}(W\rho)\\
\text{s.t.} \quad &{\sf Tr}(W)=1, \\
&W=\sum_{S\in \cS} H^{S}\otimes \bbI^{\overline{S}},\\
&\text{$W$ is fully decomposable.}
\end{aligned}
\end{equation}
For any given collection $\cS$, Eq.~\eqref{eq:sdp} can be solved by using the convex optimization package in Matlab \cite{cvx}.  We then have the following: 
\begin{proposition}\label{prop:sdp}
For every state $\rho$, one has    $l(\rho)   \le  \min\{ k~|~  \alpha  (\rho,  \cS_k)  <0\}$. 
\end{proposition}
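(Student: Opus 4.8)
The plan is to show that the right-hand side is an admissible value for the minimization defining $l(\rho)$. Concretely, I would prove the single claim that whenever $\alpha(\rho,\cS_k)<0$, the collection $\cS_k$ detects $\rho$'s GME, i.e.\ $\cS_k\in{\sf ED}(\rho)$. Granting this claim, since every $S\in\cS_k$ has $|S|=k$, so that $\max_{S\in\cS_k}|S|=k$, the definition of the EDL in Eq.~\eqref{EDL} gives $l(\rho)\le k$ for each such $k$; minimizing over all admissible $k$ then yields the stated bound $l(\rho)\le\min\{k\mid\alpha(\rho,\cS_k)<0\}$.

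To establish the claim I would fix a $k$ with $\alpha(\rho,\cS_k)<0$ and take any feasible fully decomposable witness $W=\sum_{S\in\cS_k}H^S\otimes I^{\overline S}$ with ${\sf Tr}(W\rho)<0$, one of which must exist since $\alpha(\rho,\cS_k)<0$. The key observation is that a witness of this additive form probes only the marginals: for any $\sigma\in\cD_{[n]}$,
\begin{equation}
{\sf Tr}(W\sigma)=\sum_{S\in\cS_k}{\sf Tr}\big[(H^S\otimes I^{\overline S})\,\sigma\big]=\sum_{S\in\cS_k}{\sf Tr}(H^S\sigma_S),
\end{equation}
which depends on $\sigma$ only through the marginals $\{\sigma_S\}_{S\in\cS_k}$. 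Hence for every $\sigma\in\cC(\rho,\cS_k)$, i.e.\ every $\sigma$ with $\sigma_S=\rho_S$ for all $S\in\cS_k$, the value is unchanged: ${\sf Tr}(W\sigma)={\sf Tr}(W\rho)<0$.

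I would then conclude as follows. Because $W$ is fully decomposable, its expectation is nonnegative on every biseparable state, so ${\sf Tr}(W\sigma)<0$ forces $\sigma$ to be genuinely entangled. As $\sigma$ was an arbitrary element of $\cC(\rho,\cS_k)$, the whole compatibility set consists of genuinely entangled states, which is exactly what it means for $\cS_k$ to detect $\rho$'s GME. This proves the claim, and with it the proposition.

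The argument is short because the real content is a single structural fact — a sum-of-local-terms observable cannot distinguish two states sharing the prescribed marginals — after which the FDW property does all the work. The only point deserving care is conceptual rather than computational: the bound need not be tight, since the SDP in Eq.~\eqref{eq:sdp} ranges only over fully decomposable witnesses of the restricted additive form $\sum_{S\in\cS_k}H^S\otimes I^{\overline S}$. A collection $\cS\in{\sf ED}(\rho)$ might in principle certify GME through features no such witness captures, so the SDP provides only a sufficient certificate, and I would not expect a matching lower bound to hold in general.
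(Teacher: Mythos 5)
Your proof is correct and follows essentially the same route as the paper: the paper's own argument is the one-line observation that $\alpha(\rho,\cS_k)<0$ yields a $k$-body fully decomposable witness $W$ with ${\sf Tr}(W\rho)<0$, hence $l(\rho)\le k$. Your write-up merely makes explicit the step the paper leaves implicit --- that a witness of the additive form $\sum_{S\in\cS_k}H^S\otimes I^{\overline S}$ depends only on the $k$-body marginals, so every $\sigma\in\cC(\rho,\cS_k)$ inherits the negative expectation value and is therefore genuinely entangled --- and your closing remark that the bound need not be tight matches the paper's framing of the SDP as an upper bound only.
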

We also observe that evaluation of $l(\rho)$ from the SDP is robust to small amounts of white noise: for a state of the form  $\rho_p  =(1-p)\, \rho    +  p  \, I/2^n$, we have $l(\rho_p)\leq k$ for all  $p\in [0,\frac{2^n\alpha (\rho,  \cS_k)}{2^n\alpha  (\rho ,\cS_k)-1})$ when $\alpha (\rho,  \cS_k)<0$.  Moreover, we can also evaluate the SDL of pure states from SDP. See Section~VIII of the SM for all details, where we also show some examples.  


\smallskip

{\em \textbf{Maximum gap between EDL and SDL.}}\textbf{---}  We conclude the paper by  providing examples of states for which the gap   is maximum.  

Let us define 
$\gap  (\rho):  =  L(\rho)  -  l(\rho)$ for a genuinely entangled state $\rho$.   Since for every genuinely entangled state $\rho$, one has the trivial bounds $L(\rho) \le n$  and $l(\rho)  \ge 2$, the maximum value of the gap is at most $n-2$. On the other hand, this maximum value cannot be obtained for genuinely entangled pure states, because the SDL of an $n$-qubit pure state is $n$ only if the state is $\ghz$-like, which in turn implies that the EDL is $n$.   
Hence, the gap between the SDL and EDL is at most $n-3$ for pure states.   We show that this value can be achieved by suitable superpositions of Dicke states: 
\begin{proposition}\label{prop:pure}
For an $n$-qubit genuinely entangled pure state $|\psi\rangle$, one has $\gap ( |\psi\rangle  )  \le \max\{0,n-3\} $.   The bound  is attained by  every  state of the form  $\ket{\psi}=\alpha  \ket{D_n^1}+  \beta \, \ket{D_n^n}$, where $\alpha$ and $\beta$ are two amplitudes satisfying the condition     $\frac{n^2-2n}{n^2-2n+1}<|\alpha|^2<1$.
\end{proposition}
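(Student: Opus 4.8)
The plan is to separate the statement into the universal inequality $\gap(\ket{\psi})\le\max\{0,n-3\}$ and the attainability by the stated Dicke superpositions. For the inequality I would invoke the dichotomy recalled earlier from Ref.~\cite{walck2009only}: either $\ket{\psi}$ is GHZ-like, in which case $L(\ket{\psi})=l(\ket{\psi})=n$ and $\gap=0$; or it is not GHZ-like, in which case $L(\ket{\psi})\le n-1$, while genuine entanglement forces the trivial bound $l(\ket{\psi})\ge 2$, so $\gap\le (n-1)-2=n-3$. Since $\mathsf{SD}(\rho)\subseteq\mathsf{ED}(\rho)$ gives $l\le L$ and hence $\gap\ge 0$, this yields $\gap\le\max\{0,n-3\}$ and simultaneously disposes of the small cases $n\le 3$, where the bound is $0$.

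For attainability, fix $\ket{\psi}=\alpha\ket{D_n^1}+\beta\ket{D_n^n}$ with the stated window (so $\alpha,\beta\neq 0$). I would first observe that $\ket{\psi}$ is a symmetric state, hence genuinely entangled: a fully separable symmetric pure state is a single-qubit power $\ket{\phi}^{\otimes n}$, which always carries a nonzero $\ket{D_n^0}$ component, whereas $\ket{\psi}$ does not. It then suffices to prove $l(\ket{\psi})=2$ and $L(\ket{\psi})=n-1$, which gives $\gap=n-3$. For the EDL I would apply Proposition~\ref{prop:symmetric}, reducing the claim $l=2$ to showing that the two-body marginal $\rho_{[2]}$ is entangled (the one-body marginal is never entangled).

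The concrete core is the marginal computation, which rests on one observation: for any subset of size $\le n-2$ the cross term $\ket{D_n^1}\bra{D_n^n}$ contributes nothing to the marginal, because the traced-out part of $\ket{D_n^n}$ is $\ket{1\cdots 1}$ while $\ket{D_n^1}$ carries at most one excitation among $\ge 2$ traced qubits, so the overlap vanishes by excitation counting. Consequently $\rho_{[2]}=|\alpha|^2\,\tr_{3\cdots n}\proj{D_n^1}+|\beta|^2\proj{11}$; applying the partial transpose (exact for two qubits), the only possibly negative eigenvalue comes from a $2\times 2$ block with diagonal entries $a=|\alpha|^2(n-2)/n$ and $c=|\beta|^2$ and off-diagonal entry $b=|\alpha|^2/n$, and it is negative exactly when $b^2>ac$, i.e. when $|\alpha|^2>\frac{n^2-2n}{n^2-2n+1}$. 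This is precisely the hypothesis, so $\rho_{[2]}$ is entangled and $l(\ket{\psi})=2$. The same cross-term vanishing shows that $\proj{\psi}$ and the Dicke-diagonal mixture $\rho'=|\alpha|^2\proj{D_n^1}+|\beta|^2\proj{D_n^n}$ agree on every marginal of size $\le n-2$; since $\rho'$ has rank $2$ while $\proj{\psi}$ has rank $1$, no collection of $(\le n-2)$-body marginals can determine the state, so $L(\ket{\psi})\ge n-1$. The matching bound $L(\ket{\psi})\le n-1$ follows from Ref.~\cite{walck2009only} once I verify $\ket{\psi}$ is not GHZ-like: for $n\ge 4$ and any balanced cut $S|\overline{S}$ with $|S|,|\overline{S}|\ge 2$, the three product terms $\ket{W_S}\ket{0_{\overline S}}$, $\ket{0_S}\ket{W_{\overline S}}$, $\ket{1_S}\ket{1_{\overline S}}$ have linearly independent left factors and linearly independent right factors (distinct excitation sectors), so the Schmidt rank across $S|\overline{S}$ is $3$, whereas GHZ-like states have Schmidt rank $2$ across every cut.

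The main obstacle, and the step demanding the most care, is the exact two-body marginal together with its PPT threshold: I must check that the weights $a,b,c$ above satisfy $b^2>ac$ with equality exactly at $|\alpha|^2=\frac{n^2-2n}{n^2-2n+1}$, so that the stated amplitude window is both necessary and sufficient for $l=2$. The cross-term vanishing is the conceptually decisive observation driving both the EDL and the SDL lower bound, but once isolated it is a short excitation-counting argument; the genuine-entanglement and not-GHZ-like verifications are then routine.
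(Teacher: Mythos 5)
Your proposal is correct and takes essentially the same route as the paper: the upper bound via the GHZ-like dichotomy of Ref.~\cite{walck2009only}, the SDL lower bound via the same rank-2 state $\sigma=|\alpha|^2\proj{D_n^1}+|\beta|^2\proj{D_n^n}$ matching all $(n-2)$-body marginals (your cross-term excitation-counting argument is exactly what justifies the paper's ``it is easy to see''), and the EDL via the same two-body marginal, where your direct partial-transpose block condition $b^2>ac$ is identical to the paper's Hankel-matrix criterion from Lemma~\ref{Lemma: hankel}. One minor slip in an auxiliary verification the paper omits: a fully separable symmetric pure state $\ket{\phi}^{\otimes n}$ does \emph{not} always have a nonzero $\ket{D_n^0}$ component ($\ket{\phi}=\ket{1}$ gives $\ket{D_n^n}$); that case is instead excluded because $\alpha\neq 0$ forces a nonzero $\ket{D_n^1}$ component in $\ket{\psi}$.
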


For mixed states, the maximum value can  attained by suitable  mixtures of Dicke states:

\begin{proposition}\label{prop:mixed}
For an $n$-qubit  genuinely entangled mixed state $\rho$,  one has   $\gap  (\rho)\le \max\{0, n-2\}$.    The bound is attained by  every  state of the form   
$\rho=\sum_{i=0}^{n}\lambda_i\ketbra{D_n^i}{D_n^i}$ where $(\lambda_i)_{i=0}^n$ are probabilities satisfying the conditions $\lambda_0\lambda_n\neq 0$ and $\sum_{i,j=0}^n \,  (n-i)j \,\lambda_i\lambda_j  [  (n-i-1)  (j-1)-i(n-j)] < 0$. 
\end{proposition}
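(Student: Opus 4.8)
The plan is to prove the upper bound first and then verify that the stated family saturates it. The upper bound is immediate: for any genuinely entangled $\rho$ one has $l(\rho)\ge 2$, because the single‑particle marginals of $\rho$ coincide with those of the fully separable product state $\bigotimes_{i}\rho_{\{i\}}$, so $\cS_1\notin {\sf ED}(\rho)$; and $L(\rho)\le n$ trivially, since $\cS_n=\{[n]\}$ determines $\rho$. Hence $\gap(\rho)=L(\rho)-l(\rho)\le n-2$, and combined with $\gap(\rho)\ge 0$ this gives $\gap(\rho)\le\max\{0,n-2\}$. It remains to show that every state $\rho=\sum_{i=0}^n\lambda_i\ketbra{D_n^i}{D_n^i}$ satisfying the two stated conditions has $L(\rho)=n$ and $l(\rho)=2$. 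Note that $\rho$ is symmetric, so by the dichotomy that a symmetric state is either fully separable or genuinely entangled, the entanglement of $\rho_{[2]}$ established below will in particular certify that $\rho$ is genuinely entangled, so that Proposition~\ref{prop:symmetric} applies.

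For $L(\rho)=n$ I would show that the $(n-1)$-body marginals fail to determine $\rho$. Write $\ket{D_n^0}=\ket{0}^{\otimes n}$ and $\ket{D_n^n}=\ket{1}^{\otimes n}$, and set $\Delta:=\ketbra{0^{\otimes n}}{1^{\otimes n}}+\ketbra{1^{\otimes n}}{0^{\otimes n}}$. Since $\ketbra{0}{1}=\tfrac12(\sigma_x+i\sigma_y)$ carries no identity component, $\ketbra{0^{\otimes n}}{1^{\otimes n}}=\bigotimes_{i}\ketbra{0}{1}$ is a combination of weight-$n$ Pauli strings, and therefore so is $\Delta$; consequently every $(n-1)$-body marginal of $\Delta$ vanishes. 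Moreover $\Delta$ is Hermitian, traceless, and supported on $\lin\{\ket{0^{\otimes n}},\ket{1^{\otimes n}}\}$, on which $\rho$ acts as $\diag(\lambda_0,\lambda_n)$. Thus $\rho+\varepsilon\Delta$ differs from $\rho$ only in the $2\times2$ block $\left(\begin{smallmatrix}\lambda_0&\varepsilon\\ \varepsilon&\lambda_n\end{smallmatrix}\right)$, which stays positive semidefinite whenever $\varepsilon^2\le\lambda_0\lambda_n$. Because $\lambda_0\lambda_n\neq 0$, one may take $0<\varepsilon\le\sqrt{\lambda_0\lambda_n}$ to produce a second state in $\cC(\rho,\cS_{n-1})$. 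Since the marginals on any collection of subsets of size at most $n-1$ are functions of the $(n-1)$-body marginals, this shows no such collection determines $\rho$, whence $L(\rho)=n$.

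For $l(\rho)=2$ I would invoke Proposition~\ref{prop:symmetric}, which reduces the claim to showing that $\rho_{[2]}$ is entangled (the one-body marginal is automatically separable). Tracing $\ketbra{D_n^i}{D_n^i}$ down to two qubits yields a state diagonal in the two-qubit Dicke basis with hypergeometric weights $p_m^{(i)}=\binom{i}{m}\binom{n-i}{2-m}\big/\binom{n}{2}$, so $\rho_{[2]}=q_0\ketbra{D_2^0}{D_2^0}+q_1\ketbra{D_2^1}{D_2^1}+q_2\ketbra{D_2^2}{D_2^2}$ with $q_m=\sum_i\lambda_i p_m^{(i)}$. A two-qubit Dicke-diagonal state is NPT, hence (for qubits) entangled, precisely when $q_1^2>4q_0q_2$. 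Substituting the weights and clearing the common factor $[n(n-1)]^{-2}$, this becomes $\big(\sum_i\lambda_i\,i(n-i)\big)^2>\big(\sum_i\lambda_i(n-i)(n-i-1)\big)\big(\sum_j\lambda_j\,j(j-1)\big)$, which upon expansion as a double sum is exactly the stated inequality $\sum_{i,j}(n-i)j\lambda_i\lambda_j[(n-i-1)(j-1)-i(n-j)]<0$. Hence $\rho_{[2]}$ is entangled, giving $l(\rho)=2$ and, with $L(\rho)=n$, the saturating value $\gap(\rho)=n-2$.

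The conceptual crux is the choice of the off-diagonal perturbation $\Delta$: its weight-$n$ Pauli support is what renders it invisible to all $(n-1)$-body marginals, while the hypothesis $\lambda_0\lambda_n\neq0$ is exactly what allows it to be added without leaving the set of density matrices. The only genuinely computational step is the identification of the two-body entanglement condition with the quadratic form in the statement; I expect this matching of the two quadratic expressions to be the main bookkeeping obstacle, but it is mechanical once the weights $p_m^{(i)}$ are in hand.
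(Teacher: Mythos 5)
Your proposal is correct and follows essentially the same route as the paper: the trivial bounds $l(\rho)\ge 2$ and $L(\rho)\le n$ give the upper bound, the GHZ-coherence perturbation $\rho+\varepsilon\,(\ketbra{D_n^0}{D_n^n}+\ketbra{D_n^n}{D_n^0})$ with $\varepsilon^2\le\lambda_0\lambda_n$ (the paper's Lemma~\ref{lemma:diagnoal_SDL_n}, case $\lambda_0\lambda_n\neq 0$) yields $L(\rho)=n$, and the NPT condition for the two-qubit Dicke-diagonal marginal combined with Proposition~\ref{prop:symmetric} yields $l(\rho)=2$ via exactly the stated quadratic form, just as in the paper's Lemma~\ref{lem:diganol_23}. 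The only cosmetic differences are that you certify the vanishing $(n-1)$-body marginals of the perturbation by a Pauli-weight argument rather than the Dicke-marginal formula, and you re-derive the two-qubit NPT criterion directly instead of citing the Hankel-matrix lemma.
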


An example of mixed state with maximum gap is   $\rho=\frac{1}{24}\ketbra{D_4^0}{D_4^0}+\frac{1}{3}\ketbra{D_4^1}{D_4^1}+\frac{1}{2}\ketbra{D_4^2}{D_4^2}+\frac{1}{12}\ketbra{D_4^3}{D_4^3}+\frac{1}{24}\ketbra{D_4^4}{D_4^4}$. The proof of Propositions~\ref{prop:pure} and \ref{prop:mixed} can be found in Section~IX of the SM.

\smallskip

\textit{\textbf{Conclusions.}}\textbf{---} We have introduced the notion of EDL, determining the minimum number of particles that have to be jointly measured in order to detect genuine multipartite entanglement. This quantity 
can be used to analyze different aspects of multipartite entanglement, including entanglement transitivity and robustness.    


A promising avenue of future research is to extend the notion of EDL from GME to other properties of multipartite quantum states. {For example, it is interesting to consider the minimum number of particles needed to detect entanglement depth \cite{guhne2009entanglement,liang2015family,tura2019optimization,lin2019exploring,aloy2019device,lu2018entanglement} and GME without multipartite correlations \cite{schwemmer2015genuine, klobus2019higher}. Similarly, one could explore  the notion of nonlocality length, defined as the minimum length of the Bell correlations needed to detect nonlocality in a multipartite quantum state \cite{wurflinger2012nonlocal,vertesi2014certifying}.}



\section*{Acknowledgments}
We thank Zhuo Chen, Xiao Yuan, Xiongfeng Ma, You Zhou, Zhenhuan Liu, Jun Gao and Xiande Zhang  for their helpful discussion and suggestions. F.~S.~ and G.~C.~acknowledges funding from
 the Hong Kong Research Grant Council through Grants No.
 17300918 and No. 17307520, and through the Senior Research
 Fellowship Scheme SRFS2021-7S02. 
 This publication was made possible through the support of the ID\# 62312 grant
 from the John Templeton Foundation, as part of the ‘The Quantum Information Structure of Spacetime’
   Project (QISS). The opinions expressed in this project are those of the
 authors and do not necessarily reflect the views of the John Templeton Foundation. Research at the Perimeter Institute is supported by the
 Government of Canada through the Department of Innovation, Science and Economic Development Canada and by the
 Province of Ontario through the Ministry of Research, Innovation and Science. L.~C.~was supported by the NNSF of China (Grant No. 12471427), and the Fundamental Research Funds for the Central Universities (Grant Nos. KG12040501,
 ZG216S1810 and ZG226S18C1). Q.Z. acknowledges funding from HKU Seed Fund for Basic
 Research for New Staff via Project 2201100596, Guangdong Natural Science Fund—General Programme via Project 2023A1515012185, National Natural Science Foundation of China (NSFC) Young Scientists Fund via Project 12305030, 27300823, Hong Kong Research Grant Council (RGC) via No. 27300823, and NSFC/RGC Joint Research Scheme via Project N\_HKU718/23.




\bibliographystyle{IEEEtran}
\bibliography{reference}

\begin{thebibliography}{10}
\providecommand{\url}[1]{#1}
\csname url@samestyle\endcsname
\providecommand{\newblock}{\relax}
\providecommand{\bibinfo}[2]{#2}
\providecommand{\BIBentrySTDinterwordspacing}{\spaceskip=0pt\relax}
\providecommand{\BIBentryALTinterwordstretchfactor}{4}
\providecommand{\BIBentryALTinterwordspacing}{\spaceskip=\fontdimen2\font plus
\BIBentryALTinterwordstretchfactor\fontdimen3\font minus \fontdimen4\font\relax}
\providecommand{\BIBforeignlanguage}[2]{{%
\expandafter\ifx\csname l@#1\endcsname\relax
\typeout{** WARNING: IEEEtran.bst: No hyphenation pattern has been}%
\typeout{** loaded for the language `#1'. Using the pattern for}%
\typeout{** the default language instead.}%
\else
\language=\csname l@#1\endcsname
\fi
#2}}
\providecommand{\BIBdecl}{\relax}
\BIBdecl

\bibitem{karlsson1998quantum}
A.~Karlsson and M.~Bourennane, ``Quantum teleportation using three-particle entanglement,'' \emph{Physical Review A}, vol.~58, no.~6, p. 4394, 1998.

\bibitem{steane1998quantum}
A.~Steane, ``Quantum computing,'' \emph{Reports on Progress in Physics}, vol.~61, no.~2, p. 117, 1998.

\bibitem{chin2012quantum}
A.~W. Chin, S.~F. Huelga, and M.~B. Plenio, ``Quantum metrology in non-{M}arkovian environments,'' \emph{Physical Review Letters}, vol. 109, no.~23, p. 233601, 2012.

\bibitem{cao2023generation}
S.~Cao, B.~Wu, F.~Chen, M.~Gong, Y.~Wu, Y.~Ye, C.~Zha, H.~Qian, C.~Ying, S.~Guo \emph{et~al.}, ``Generation of genuine entanglement up to 51 superconducting qubits,'' \emph{Nature}, vol. 619, no. 7971, pp. 738--742, 2023.

\bibitem{navascues2021entanglement}
M.~Navascu{\'e}s, F.~Baccari, and A.~Ac{\'\i}n, ``Entanglement marginal problems,'' \emph{Quantum}, vol.~5, p. 589, 2021.

\bibitem{tabia2022entanglement}
G.~N.~M. Tabia, K.-S. Chen, C.-Y. Hsieh, Y.-C. Yin, and Y.-C. Liang, ``Entanglement transitivity problems,'' \emph{npj Quantum Information}, vol.~8, no.~1, p.~98, 2022.

\bibitem{toth2005entanglement}
G.~T{\'o}th, ``Entanglement witnesses in spin models,'' \emph{Physical Review A}, vol.~71, no.~1, p. 010301, 2005.

\bibitem{jungnitsch2011taming}
B.~Jungnitsch, T.~Moroder, and O.~G{\"u}hne, ``Taming multiparticle entanglement,'' \emph{Physical Review Letters}, vol. 106, no.~19, p. 190502, 2011.

\bibitem{sperling2013multipartite}
J.~Sperling and W.~Vogel, ``Multipartite entanglement witnesses,'' \emph{Physical Review Letters}, vol. 111, no.~11, p. 110503, 2013.

\bibitem{chen2014role}
L.~Chen, O.~Gittsovich, K.~Modi, and M.~Piani, ``Role of correlations in the two-body-marginal problem,'' \emph{Physical Review A}, vol.~90, no.~4, p. 042314, 2014.

\bibitem{miklin2016multiparticle}
N.~Miklin, T.~Moroder, and O.~G{\"u}hne, ``Multiparticle entanglement as an emergent phenomenon,'' \emph{Physical Review A}, vol.~93, no.~2, p. 020104, 2016.

\bibitem{paraschiv2018proving}
M.~Paraschiv, N.~Miklin, T.~Moroder, and O.~G{\"u}hne, ``Proving genuine multiparticle entanglement from separable nearest-neighbor marginals,'' \emph{Physical Review A}, vol.~98, no.~6, p. 062102, 2018.

\bibitem{toth2007optimal}
G.~T{\'o}th, C.~Knapp, O.~G{\"u}hne, and H.~J. Briegel, ``Optimal spin squeezing inequalities detect bound entanglement in spin models,'' \emph{Physical Review Letters}, vol.~99, no.~25, p. 250405, 2007.

\bibitem{guhne2009entanglement}
O.~G{\"u}hne and G.~T{\'o}th, ``Entanglement detection,'' \emph{Physics Reports}, vol. 474, no. 1-6, pp. 1--75, 2009.

\bibitem{toth2009spin}
G.~T{\'o}th, C.~Knapp, O.~G{\"u}hne, and H.~J. Briegel, ``Spin squeezing and entanglement,'' \emph{Physical Review A}, vol.~79, no.~4, p. 042334, 2009.

\bibitem{gittsovich2010multiparticle}
O.~Gittsovich, P.~Hyllus, and O.~G{\"u}hne, ``Multiparticle covariance matrices and the impossibility of detecting graph-state entanglement with two-particle correlations,'' \emph{Physical Review A}, vol.~82, no.~3, p. 032306, 2010.

\bibitem{bancal2014device}
J.-D. Bancal and J.-D. Bancal, ``Device-independent witnesses of genuine multipartite entanglement,'' \emph{On the Device-Independent Approach to Quantum Physics: Advances in Quantum Nonlocality and Multipartite Entanglement Detection}, pp. 73--80, 2014.

\bibitem{liang2015family}
Y.-C. Liang, D.~Rosset, J.-D. Bancal, G.~P{\"u}tz, T.~J. Barnea, and N.~Gisin, ``Family of {B}ell-like inequalities as device-independent witnesses for entanglement depth,'' \emph{Physical Review Letters}, vol. 114, no.~19, p. 190401, 2015.

\bibitem{tura2014detecting}
J.~Tura, R.~Augusiak, A.~B. Sainz, T.~V{\'e}rtesi, M.~Lewenstein, and A.~Ac{\'\i}n, ``Detecting nonlocality in many-body quantum states,'' \emph{Science}, vol. 344, no. 6189, pp. 1256--1258, 2014.

\bibitem{frerot2021optimal}
I.~Fr{\'e}rot and T.~Roscilde, ``Optimal entanglement witnesses: a scalable data-driven approach,'' \emph{Physical Review Letters}, vol. 127, no.~4, p. 040401, 2021.

\bibitem{frerot2022unveiling}
I.~Fr{\'e}rot, F.~Baccari, and A.~Ac{\'\i}n, ``Unveiling quantum entanglement in many-body systems from partial information,'' \emph{PRX Quantum}, vol.~3, no.~1, p. 010342, 2022.

\bibitem{greenberger1989going}
D.~M. Greenberger, M.~A. Horne, and A.~Zeilinger, ``Going beyond bell’s theorem,'' in \emph{Bell’s theorem, quantum theory and conceptions of the universe}.\hskip 1em plus 0.5em minus 0.4em\relax Springer, 1989, pp. 69--72.

\bibitem{scott2004multipartite}
A.~J. Scott, ``Multipartite entanglement, quantum-error-correcting codes, and entangling power of quantum evolutions,'' \emph{Physical Review A}, vol.~69, no.~5, p. 052330, 2004.

\bibitem{dicke1954coherence}
R.~H. Dicke, ``Coherence in spontaneous radiation processes,'' \emph{Physical review}, vol.~93, no.~1, p.~99, 1954.

\bibitem{hein2004multiparty}
M.~Hein, J.~Eisert, and H.~J. Briegel, ``Multiparty entanglement in graph states,'' \emph{Physical Review A}, vol.~69, no.~6, p. 062311, 2004.

\bibitem{i2016characterizing}
J.~T. i~Brugu{\'e}s, \emph{Characterizing entanglement and quantum correlations constrained by symmetry}.\hskip 1em plus 0.5em minus 0.4em\relax Springer, 2016.

\bibitem{toth2014quantum}
G.~T{\'o}th and I.~Apellaniz, ``Quantum metrology from a quantum information science perspective,'' \emph{Journal of Physics A: Mathematical and Theoretical}, vol.~47, no.~42, p. 424006, 2014.

\bibitem{pezze2018quantum}
L.~Pezze, A.~Smerzi, M.~K. Oberthaler, R.~Schmied, and P.~Treutlein, ``Quantum metrology with nonclassical states of atomic ensembles,'' \emph{Reviews of Modern Physics}, vol.~90, no.~3, p. 035005, 2018.

\bibitem{oszmaniec2016random}
M.~Oszmaniec, R.~Augusiak, C.~Gogolin, J.~Ko{\l}ody{\'n}ski, A.~Acin, and M.~Lewenstein, ``Random bosonic states for robust quantum metrology,'' \emph{Physical Review X}, vol.~6, no.~4, p. 041044, 2016.

\bibitem{wang2012nonlocality}
Z.~Wang and D.~Markham, ``Nonlocality of symmetric states,'' \emph{Physical review letters}, vol. 108, no.~21, p. 210407, 2012.

\bibitem{quesada2017entanglement}
R.~Quesada, S.~Rana, and A.~Sanpera, ``Entanglement and nonlocality in diagonal symmetric states of $n$ qubits,'' \emph{Physical Review A}, vol.~95, no.~4, p. 042128, 2017.

\bibitem{toth2009entanglement}
G.~T{\'o}th and O.~G{\"u}hne, ``Entanglement and permutational symmetry,'' \emph{Physical Review Letters}, vol. 102, no.~17, p. 170503, 2009.

\bibitem{eckert2002quantum}
K.~Eckert, J.~Schliemann, D.~Bru{\ss}, and M.~Lewenstein, ``Quantum correlations in systems of indistinguishable particles,'' \emph{Annals of physics}, vol. 299, no.~1, pp. 88--127, 2002.

\bibitem{stockton2003characterizing}
J.~K. Stockton, J.~M. Geremia, A.~C. Doherty, and H.~Mabuchi, ``Characterizing the entanglement of symmetric many-particle spin-$\frac{1}{2}$ systems,'' \emph{Physical Review A}, vol.~67, no.~2, p. 022112, 2003.

\bibitem{ichikawa2008exchange}
T.~Ichikawa, T.~Sasaki, I.~Tsutsui, and N.~Yonezawa, ``Exchange symmetry and multipartite entanglement,'' \emph{Physical Review A}, vol.~78, no.~5, p. 052105, 2008.

\bibitem{tura2012four}
J.~Tura, R.~Augusiak, P.~Hyllus, M.~Ku{\'s}, J.~Samsonowicz, and M.~Lewenstein, ``Four-qubit entangled symmetric states with positive partial transpositions,'' \emph{Physical Review A}, vol.~85, no.~6, p. 060302, 2012.

\bibitem{augusiak2012entangled}
R.~Augusiak, J.~Tura, J.~Samsonowicz, and M.~Lewenstein, ``Entangled symmetric states of ${N}$ qubits with all positive partial transpositions,'' \emph{Physical Review A}, vol.~86, no.~4, p. 042316, 2012.

\bibitem{wolfe2014certifying}
E.~Wolfe and S.~Yelin, ``Certifying separability in symmetric mixed states of $n$ qubits, and superradiance,'' \emph{Physical Review Letters}, vol. 112, no.~14, p. 140402, 2014.

\bibitem{yu2016separability}
N.~Yu, ``Separability of a mixture of {D}icke states,'' \emph{Physical Review A}, vol.~94, no.~6, p. 060101, 2016.

\bibitem{tura2018separability}
J.~Tura, A.~Aloy, R.~Quesada, M.~Lewenstein, and A.~Sanpera, ``Separability of diagonal symmetric states: a quadratic conic optimization problem,'' \emph{Quantum}, vol.~2, p.~45, 2018.

\bibitem{linden2002almost}
N.~Linden, S.~Popescu, and W.~Wootters, ``Almost every pure state of three qubits is completely determined by its two-particle reduced density matrices,'' \emph{Physical Review Letters}, vol.~89, no.~20, p. 207901, 2002.

\bibitem{linden2002parts}
N.~Linden and W.~Wootters, ``The parts determine the whole in a generic pure quantum state,'' \emph{Physical Review Letters}, vol.~89, no.~27, p. 277906, 2002.

\bibitem{diosi2004three}
L.~Di{\'o}si, ``Three-party pure quantum states are determined by two two-party reduced states,'' \emph{Physical Review A}, vol.~70, no.~1, p. 010302, 2004.

\bibitem{cramer2010efficient}
M.~Cramer, M.~B. Plenio, S.~T. Flammia, R.~Somma, D.~Gross, S.~D. Bartlett, O.~Landon-Cardinal, D.~Poulin, and Y.-K. Liu, ``Efficient quantum state tomography,'' \emph{Nature communications}, vol.~1, no.~1, p. 149, 2010.

\bibitem{chen2012comment}
J.~Chen, Z.~Ji, M.~B. Ruskai, B.~Zeng, and D.-L. Zhou, ``Comment on some results of erdahl and the convex structure of reduced density matrices,'' \emph{Journal of Mathematical Physics}, vol.~53, no.~7, 2012.

\bibitem{chen2012ground}
J.~Chen, Z.~Ji, D.~Kribs, Z.~Wei, and B.~Zeng, ``Ground-state spaces of frustration-free {H}amiltonians,'' \emph{Journal of mathematical physics}, vol.~53, no.~10, p. 102201, 2012.

\bibitem{chen2013uniqueness}
J.~Chen, H.~Dawkins, Z.~Ji, N.~Johnston, D.~Kribs, F.~Shultz, and B.~Zeng, ``Uniqueness of quantum states compatible with given measurement results,'' \emph{Physical Review A}, vol.~88, no.~1, p. 012109, 2013.

\bibitem{xin2017quantum}
T.~Xin, D.~Lu, J.~Klassen, N.~Yu, Z.~Ji, J.~Chen, X.~Ma, G.~Long, B.~Zeng, and R.~Laflamme, ``Quantum state tomography via reduced density matrices,'' \emph{Physical Review Letters}, vol. 118, no.~2, p. 020401, 2017.

\bibitem{eisert2001schmidt}
J.~Eisert and H.~J. Briegel, ``Schmidt measure as a tool for quantifying multiparticle entanglement,'' \emph{Physical Review A}, vol.~64, no.~2, p. 022306, 2001.

\bibitem{dur2000three}
W.~D{\"u}r, G.~Vidal, and J.~I. Cirac, ``Three qubits can be entangled in two inequivalent ways,'' \emph{Physical Review A}, vol.~62, no.~6, p. 062314, 2000.

\bibitem{verstraete2002four}
F.~Verstraete, J.~Dehaene, B.~De~Moor, and H.~Verschelde, ``Four qubits can be entangled in nine different ways,'' \emph{Physical Review A}, vol.~65, no.~5, p. 052112, 2002.

\bibitem{dietrich2022classification}
H.~Dietrich, W.~A. De~Graaf, A.~Marrani, and M.~Origlia, ``Classification of four qubit states and their stabilisers under slocc operations,'' \emph{Journal of Physics A: Mathematical and Theoretical}, vol.~55, no.~9, p. 095302, 2022.

\bibitem{walck2009only}
S.~N. Walck and D.~W. Lyons, ``Only $n$-qubit {G}reenberger-{H}orne-{Z}eilinger states contain $n$-partite information,'' \emph{Physical Review A}, vol.~79, no.~3, p. 032326, 2009.

\bibitem{goyeneche2014genuinely}
D.~Goyeneche and K.~{\.Z}yczkowski, ``Genuinely multipartite entangled states and orthogonal arrays,'' \emph{Physical review A}, vol.~90, no.~2, p. 022316, 2014.

\bibitem{10.5555/2016976.2016982}
D.~Perez-Garcia, F.~Verstraete, M.~M. Wolf, and J.~I. Cirac, ``{PEPS} as unique ground states of local {H}amiltonians,'' \emph{Quantum Information \& Computation}, vol.~8, no.~6, p. 650–663, 2008.

\bibitem{chen2012correlations}
J.~Chen, Z.~Ji, Z.~Wei, and B.~Zeng, ``Correlations in excited states of local {H}amiltonians,'' \emph{Physical Review A}, vol.~85, no.~4, p. 040303, 2012.

\bibitem{lucke2014detecting}
B.~L{\"u}cke, J.~Peise, G.~Vitagliano, J.~Arlt, L.~Santos, G.~T{\'o}th, and C.~Klempt, ``Detecting multiparticle entanglement of dicke states,'' \emph{Physical review letters}, vol. 112, no.~15, p. 155304, 2014.

\bibitem{yu2023learning}
N.~Yu and T.-C. Wei, ``Learning marginals suffices!'' \emph{arXiv preprint arXiv:2303.08938}, 2023.

\bibitem{guhne2005bell}
O.~G{\"u}hne, G.~T{\'o}th, P.~Hyllus, and H.~J. Briegel, ``Bell inequalities for graph states,'' \emph{Physical review letters}, vol.~95, no.~12, p. 120405, 2005.

\bibitem{hein2006entanglement}
M.~Hein, W.~D{\"u}r, J.~Eisert, R.~Raussendorf, M.~Van~den Nest, and H.-J. Briegel, ``Entanglement in graph states and its applications,'' in \emph{Quantum computers, algorithms and chaos}.\hskip 1em plus 0.5em minus 0.4em\relax IOS Press, 2006, pp. 115--218.

\bibitem{briegel2001persistent}
H.~J. Briegel and R.~Raussendorf, ``Persistent entanglement in arrays of interacting particles,'' \emph{Physical Review Letters}, vol.~86, no.~5, p. 910, 2001.

\bibitem{baccari2020scalable}
F.~Baccari, R.~Augusiak, I.~{\v{S}}upi{\'c}, J.~Tura, and A.~Ac{\'\i}n, ``Scalable {B}ell inequalities for qubit graph states and robust self-testing,'' \emph{Physical Review Letters}, vol. 124, no.~2, p. 020402, 2020.

\bibitem{luo2021robust}
M.-X. Luo and S.-M. Fei, ``Robust multipartite entanglement without entanglement breaking,'' \emph{Physical Review Research}, vol.~3, no.~4, p. 043120, 2021.

\bibitem{imai2024collective}
S.~Imai, G.~T{\'o}th, and O.~G{\"u}hne, ``Collective randomized measurements in quantum information processing,'' \emph{Physical Review Letters}, vol. 133, no.~6, p. 060203, 2024.

\bibitem{cvx}
M.~Grant and S.~Boyd, ``{CVX}: Matlab software for disciplined convex programming, version 2.1,'' \url{https://cvxr.com/cvx}, Mar. 2014.

\bibitem{tura2019optimization}
J.~Tura, A.~Aloy, F.~Baccari, A.~Ac\'{\i}n, M.~Lewenstein, and R.~Augusiak, ``Optimization of device-independent witnesses of entanglement depth from two-body correlators,'' \emph{Physical Review A}, vol. 100, no.~3, p. 032307, 2019.

\bibitem{lin2019exploring}
P.-S. Lin, J.-C. Hung, C.-H. Chen, and Y.-C. Liang, ``Exploring {B}ell inequalities for the device-independent certification of multipartite entanglement depth,'' \emph{Physical Review A}, vol.~99, no.~6, p. 062338, 2019.

\bibitem{aloy2019device}
A.~Aloy, J.~Tura, F.~Baccari, A.~Ac\'{\i}n, M.~Lewenstein, and R.~Augusiak, ``Device-independent witnesses of entanglement depth from two-body correlators,'' \emph{Physical Review Letters}, vol. 123, no.~10, p. 100507, 2019.

\bibitem{lu2018entanglement}
H.~Lu, Q.~Zhao, Z.-D. Li, X.-F. Yin, X.~Yuan, J.-C. Hung, L.-K. Chen, L.~Li, N.-L. Liu, C.-Z. Peng \emph{et~al.}, ``Entanglement structure: entanglement partitioning in multipartite systems and its experimental detection using optimizable witnesses,'' \emph{Physical Review X}, vol.~8, no.~2, p. 021072, 2018.

\bibitem{schwemmer2015genuine}
C.~Schwemmer, L.~Knips, M.~C. Tran, A.~De~Rosier, W.~Laskowski, T.~Paterek, and H.~Weinfurter, ``Genuine multipartite entanglement without multipartite correlations,'' \emph{Physical Review Letters}, vol. 114, no.~18, p. 180501, 2015.

\bibitem{klobus2019higher}
W.~K{\l}obus, W.~Laskowski, T.~Paterek, M.~Wie{\'s}niak, and H.~Weinfurter, ``Higher dimensional entanglement without correlations,'' \emph{The European Physical Journal D}, vol.~73, pp. 1--6, 2019.

\bibitem{wurflinger2012nonlocal}
L.~E. W{\"u}rflinger, J.-D. Bancal, A.~Ac{\'\i}n, N.~Gisin, and T.~V{\'e}rtesi, ``Nonlocal multipartite correlations from local marginal probabilities,'' \emph{Physical Review A}, vol.~86, no.~3, p. 032117, 2012.

\bibitem{vertesi2014certifying}
T.~V{\'e}rtesi, W.~Laskowski, and K.~F. P{\'a}l, ``Certifying nonlocality from separable marginals,'' \emph{Physical Review A}, vol.~89, no.~1, p. 012115, 2014.

\bibitem{wu2015determination}
X.~Wu, Y.-H. Yang, Y.-K. Wang, Q.-Y. Wen, S.-J. Qin, and F.~Gao, ``Determination of stabilizer states,'' \emph{Physical Review A}, vol.~92, no.~1, p. 012305, 2015.

\bibitem{van2004graphical}
M.~Van~den Nest, J.~Dehaene, and B.~De~Moor, ``Graphical description of the action of local {C}lifford transformations on graph states,'' \emph{Physical Review A}, vol.~69, no.~2, p. 022316, 2004.

\bibitem{adcock2020mapping}
J.~C. Adcock, S.~Morley-Short, A.~Dahlberg, and J.~W. Silverstone, ``Mapping graph state orbits under local complementation,'' \emph{Quantum}, vol.~4, p. 305, 2020.

\bibitem{helwig2013absolutely}
W.~Helwig, ``Absolutely maximally entangled qudit graph states,'' \emph{arXiv preprint arXiv:1306.2879}, 2013.

\bibitem{sudevan2022n}
S.~Sudevan and S.~Das, ``$n$-qubit states with maximum entanglement across all bipartitions: A graph state approach,'' \emph{arXiv preprint arXiv:2201.05622}, 2022.

\bibitem{aloy2021quantum}
A.~Aloy, M.~Fadel, and J.~Tura, ``The quantum marginal problem for symmetric states: applications to variational optimization, nonlocality and self-testing,'' \emph{New Journal of Physics}, vol.~23, no.~3, p. 033026, 2021.

\bibitem{peres1996separability}
A.~Peres, ``Separability criterion for density matrices,'' \emph{Physical Review Letters}, vol.~77, no.~8, p. 1413, 1996.

\bibitem{horodecki2001separability}
M.~Horodecki, P.~Horodecki, and R.~Horodecki, ``Separability of $n$-particle mixed states: necessary and sufficient conditions in terms of linear maps,'' \emph{Physics Letters A}, vol. 283, no. 1-2, pp. 1--7, 2001.

\bibitem{yu2021complete}
X.-D. Yu, T.~Simnacher, N.~Wyderka, H.~C. Nguyen, and O.~G{\"u}hne, ``A complete hierarchy for the pure state marginal problem in quantum mechanics,'' \emph{Nature communications}, vol.~12, no.~1, p. 1012, 2021.

\end{thebibliography}

 \newpage

 \begin{widetext}
   \appendix  
\section{\begin{large}
    Supplemental Material: Entanglement detection length of multipartite quantum states
\end{large}}
\begin{sloppypar}
We  denote by $\cS_k$  the collection of all $k$-subsets of $[n]$, that is $|\cS_k|=\binom{n}{k}$, and $|S| = k$ for every $S\in  \cS_k$.

A hypergraph $G= (V,E)$ consists of a set of vertices $V$ and a set of hyperedges $E$,  a hyperedge being a  nonempty subset of $V$. The hypergraph is $k$-uniform if every hyperedge has cardinality $k$.    In this language, every collection $\cS$ of subsets of  $[n]$ defines a hypergraph $G=  ([n], \cS)$. A hypergraph  is connected if, for any partition of its vertex set into two non-empty sets $X$ and $Y$, there exists at least one hyperedge  connecting a vertex in $X$ to a vertex in $Y$.

 \setcounter{proposition}{0}

 

For a state $\rho$, we denote by $\cR(\rho)$  the range of $\rho$.  We denote by $\cH_i$ the Hilbert space of the $i$-th particle, and by  $\cH_{[n]}:=\otimes_{i=1}^n\cH_{i}$ the Hilbert space of the total system. 
 Subsystems will be identified by subsets of $[n]$. For a given subset $S  \subseteq [n]$, the Hilbert space of the corresponding subsystem is $\cH_S  :  = \bigotimes_{j \in  S} \,  \cH_j$.

We denote by  $\rho_{\{i_1,i_2,\ldots, i_k\}}$ (respectively, $\ketbra{\psi}{\psi}_{\{i_1,i_2,\ldots, i_k\}}$)  the marginal of $\rho$ (respectively,  $\ket{\psi}$) on the subsystems $\{i_1,i_2,\ldots, i_k\}$, and by $\rho_{[k]}$ (respectively, $\ketbra{\psi}{\psi}_{[k]}$)  the marginal of $\rho$ (respectively, $\ket{\psi}$) on the subsystems $\{1,2,\ldots,k\}$.

 Based on the definitions of EDL and SDL, we know that for a state $\rho$ (respectively, genuinely entangled $\rho$), $L(\rho)\geq k$ (respectively, $l(\rho)\geq k$) if and only if $\cC(\rho,\cS_{k-1})$ contains a state $\sigma\neq \rho$ (respectively, contains at least one biseparable state), and $L(\rho)\leq k$ (respectively, $l(\rho)\leq k$) if and only if $\cC(\rho,\cS_{k})$ contains only $\rho$ (respectively, contains only genuinely entangled states).

 \section{I. \ Some properties for EDL and SDL}

\begin{lemma}\label{lemma:L_LU}
For an $n$-qubit state $\rho$ (respectively, genuinely entangled $\rho$),  $L(\rho)=L(U_1\otimes\cdots \otimes U_n\rho  U_1^{\dagger}\otimes \cdots\otimes U_n^{\dagger})$ (respectively, $l(\rho)=l(U_1\otimes\cdots \otimes U_n\rho  U_1^{\dagger}\otimes \cdots\otimes U_n^{\dagger})$) for any unitary $U_i$, $1\leq i\leq n$.

\end{lemma}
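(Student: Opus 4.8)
The plan is to show that conjugation by the local unitary $U:=U_1\otimes\cdots\otimes U_n$ induces a bijection of compatibility sets, and that this bijection preserves both the property of being a single point and the property of consisting only of genuinely entangled states. Once this is done, the collections ${\sf SD}(\rho)$ and ${\sf ED}(\rho)$ are literally unchanged when $\rho$ is replaced by $\rho':=U\rho U^{\dagger}$, so the minima in \eqref{EDL} and \eqref{SDL} are taken over the same sets of collections $\cS$, giving $L(\rho)=L(\rho')$ and $l(\rho)=l(\rho')$. The whole argument thus reduces to a covariance property of marginals, and everything else is formal.

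First I would record the covariance of marginals under local unitaries. For any $S\subseteq[n]$, writing $U_S:=\bigotimes_{j\in S}U_j$ and $U_{\overline S}:=\bigotimes_{j\in\overline S}U_j$, we have $U=U_S\otimes U_{\overline S}$, and since the partial trace over $\cH_{\overline S}$ is cyclic on the $\overline S$ factor while $U_S$ commutes with ${\sf Tr}_{\overline S}$, one obtains $\rho'_S={\sf Tr}_{\overline S}[U\rho U^{\dagger}]=U_S\rho_S U_S^{\dagger}$. Consequently, for any $\sigma\in\cD_{[n]}$ and any $S$, the equality $\sigma_S=\rho_S$ holds if and only if $(U\sigma U^{\dagger})_S=U_S\sigma_S U_S^{\dagger}=U_S\rho_S U_S^{\dagger}=\rho'_S$. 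Applying this to every $S\in\cS$ yields the bijection
\begin{align}
\cC(\rho',\cS)=U\,\cC(\rho,\cS)\,U^{\dagger},
\end{align}
i.e. $\sigma\in\cC(\rho,\cS)\iff U\sigma U^{\dagger}\in\cC(\rho',\cS)$.

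Next I would exploit that the map $\Phi(\sigma):=U\sigma U^{\dagger}$ is a bijection of $\cD_{[n]}$ which preserves the entanglement type: since $U$ is a tensor product of local unitaries, it sends a pure state separable across $S|\overline S$ to another such state and back, so by linearity and the convex definition of biseparability it maps biseparable states to biseparable states bijectively, and hence genuinely entangled states to genuinely entangled states. Combining this with the bijection above, I conclude: $\cC(\rho,\cS)=\{\rho\}$ iff $\cC(\rho',\cS)=\{\rho'\}$, so $\cS\in{\sf SD}(\rho)\iff\cS\in{\sf SD}(\rho')$; and $\cC(\rho,\cS)$ contains only genuinely entangled states iff $\cC(\rho',\cS)$ does (because $\Phi$ is a type-preserving bijection between the two sets), so $\cS\in{\sf ED}(\rho)\iff\cS\in{\sf ED}(\rho')$. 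Thus ${\sf SD}(\rho)={\sf SD}(\rho')$ and ${\sf ED}(\rho)={\sf ED}(\rho')$, and taking $\min_{\cS}\max_{S\in\cS}|S|$ over these identical collections gives the claimed equalities.

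The only genuinely substantive step is the marginal covariance $\rho'_S=U_S\rho_S U_S^{\dagger}$; I expect this to be the main (and essentially the sole) obstacle, since it is what forces the local structure $U=U_1\otimes\cdots\otimes U_n$ into the argument — a general global unitary would not factor through the partial trace and the statement would fail. Everything after that is a formal consequence: the bijection of compatibility sets and the fact that local unitaries preserve biseparability are standard, and the equality of the two optimization domains then yields $L$ and $l$ invariance immediately.
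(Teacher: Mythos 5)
Your proof is correct: the marginal covariance $(U\sigma U^{\dagger})_S=U_S\,\sigma_S\,U_S^{\dagger}$, the resulting bijection $\cC(\rho',\cS)=U\,\cC(\rho,\cS)\,U^{\dagger}$ with $\rho'=U\rho U^{\dagger}$, and the fact that conjugation by a local unitary is a biseparability-preserving bijection of $\cD_{[n]}$ are all sound, and together they yield ${\sf SD}(\rho)={\sf SD}(\rho')$ and ${\sf ED}(\rho)={\sf ED}(\rho')$, hence $L(\rho)=L(\rho')$ and $l(\rho)=l(\rho')$. The paper's proof rests on exactly the same mechanism but packages it differently: it fixes $k=L(\rho)$ (respectively $l(\rho)$) and proves the two inequalities $L(\rho')\ge k$ and $L(\rho')\le k$ separately, working only with the uniform collections $\cS_{k-1}$ and $\cS_k$ --- exhibiting a conjugated witness $\sigma'=U\sigma U^{\dagger}\in\cC(\rho',\cS_{k-1})$ with $\sigma'\neq\rho'$ (respectively $\sigma'$ biseparable) for the lower bound, and pulling back an arbitrary $\tau'\in\cC(\rho',\cS_k)$ to $\tau=U^{\dagger}\tau' U\in\cC(\rho,\cS_k)$ for the upper bound. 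That version implicitly relies on the reduction, stated without proof just before Section~I of the SM, that $L$ and $l$ can be computed from the uniform collections $\cS_k$ alone (equivalently, that $\cC(\rho,\cS_k)\subseteq\cC(\rho,\cS)$ whenever $\max_{S\in\cS}|S|\le k$). Your version bypasses this reduction entirely by proving the stronger set-level statement for arbitrary collections $\cS$, which is more directly faithful to the definitions~\eqref{EDL} and~\eqref{SDL} in the main text; the paper's inequality-chasing buys a slightly shorter write-up at the price of that unstated lemma. Since both arguments hinge on the same two facts --- covariance of marginals under local unitaries and LU-invariance of biseparability --- there is no gap in either direction, and your formulation is, if anything, the cleaner one.
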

\begin{proof}
    Assume $L(\rho)=k$ (respectively, $l(\rho)=k$), then there exists a state $\sigma\neq \rho$ (respectively, a biseparable state  $\sigma$) such that $\sigma\in \cC(\rho,\cS_{k-1})$, and $ \cC(\rho,\cS_{k})$ contains only $\rho$ (respectively, genuinely entangled states).
    
Let $\rho'=(\otimes_{i=1}^nU_i)\rho (\otimes_{i=1}^nU_i^{\dagger})$, and $\sigma'=(\otimes_{i=1}^nU_i)\sigma (\otimes_{i=1}^nU_i^{\dagger})$.  For every $S\in \cS_{k-1}$, we have $\sigma'_S=(\otimes_{i\in S}U_i)\sigma_S (\otimes_{i\in S}U_i^{\dagger})=(\otimes_{i\in S}U_i)\rho_S (\otimes_{i\in S}U_i^{\dagger})=\rho'_S$, which means $\sigma'\in \cC(\rho',\cS_{k-1})$.  Since $\sigma\neq \rho$ (respectively, $\sigma$ is  biseparable), we obtain that $\sigma'\neq \rho'$ (respectively, $\sigma'$ is also biseparable). Thus $L(\rho')\geq k$ (respectively, $l(\rho')\geq k$).  

Assume $\tau' \in \cC(\rho',\cS_{k})$. We denote $\tau=(\otimes_{i=1}^nU_i^{\dagger})\tau '(\otimes_{i=1}^nU_i)$.  
For every $S\in \cS_{k}$, we have $\tau'_S=(\otimes_{i\in S}U_i)\tau_S (\otimes_{i\in S}U_i^{\dagger})=(\otimes_{i\in S}U_i)\rho_S (\otimes_{i\in S}U_i^{\dagger})=\rho_S'$. Then we have  $\tau_S=\rho_S$ for every $S\in \cS_k$, i.e. $\tau\in \cC(\rho, \cS_k)$. Since $\cC(\rho,\cS_{k})$ contains only $\rho$ (respectively, genuinely entangled states), we obtain that  $\tau=\rho$ (respectively, $\tau$ is genuinely entangled), which implies that $\tau'=\rho'$ (respectively, $\tau'$ is genuinely entangled). Thus $\cC(\rho',\cS_{k})$ contains only $\rho'$ (respectively, genuinely entangled states)
and  $L(\rho')\leq k$ (respectively, $l(\rho')\leq k$).

 Above all, we have $L(\rho')= k$ (respectively, $l(\rho')=k$).   
\end{proof}
\vspace{0.4cm}




However, the SDL (respectively, EDL) may change under SLOCC. For example, consdier $\ket{{\ghz}_3}=\frac{1}{\sqrt{2}}(\ket{000}+\ket{111})$. Let $A=\begin{pmatrix}
    1 & \frac{\sqrt{2}}{\sqrt{3}} \\
    0 &\frac{1}{\sqrt{3}}
\end{pmatrix}$, 
then $\ket{\psi}=A\otimes I\otimes I\ket{{\ghz}_3}=\frac{1}{\sqrt{2}}\ket{000}+\frac{1}{\sqrt{3}}\ket{011}+\frac{1}{\sqrt{6}}\ket{111}$, which is not LU-equivalent to a $3$-qubit generalized $\ghz$  state $\alpha \ket{000}+  \beta \, \ket{111}$ with $\alpha\beta\neq 0$. Thus $L(\ket{\psi})<3$ (respectively, $l(\ket{\psi})<3$)  \cite{walck2009only}, while $L(\ket{\ghz_3})=3$ (respectively, $l(\ket{\ghz_3})=3$).

{Since EDL and SDL are both integer-valued, they are nonconvex.  We also list two counterexamples as follows.
\begin{enumerate}[1.]
  \item  Let \begin{equation}
    \rho=\frac{1}{2}\ketbra{{\ghz}_3}{{\ghz}_3}+\frac{1}{2}\ketbra{D_3^1}{D_3^1}.
\end{equation} 
 We have shown that $l(\rho)=3$ in Sec.~\ref{sec:symmetric_EDL}. Since $l(\ket{\ghz_3})=3$ and $l(\ket{D_3^1})=2$, we have  $l(\rho)>\frac{1}{2}l(\ket{\ghz_3})+\frac{1}{2}l(\ket{D_3^1})$.
\item Let
\begin{equation}
\rho=\frac{1}{2}\ketbra{D_3^0}{D_3^0}+\frac{1}{2}\ketbra{D_3^2}{D_3^2}.
\end{equation}
According to Proposition~2, we have $L(\rho)=3$. Since $L(\ket{D_3^0})=1$ and $L(\ket{D_3^2})=2$, we have  $L(\rho)>\frac{1}{2}L(\ket{D_3^0})+\frac{1}{2}L(\ket{D_3^2})$.
\end{enumerate}}

Finally, we give a sufficient and necessary condition for $L(\rho)=1$.

\begin{lemma}\label{lemma:Lrho1}
For an $n$-qubit state $\rho$, $L(\rho)=1$ if and only if there exists $j$, $1\leq j\leq n$, such that $\rank{\rho_{\{j\}}}\geq 1$, and $\rank{\rho_{\{i\}}}= 1$ for all $1\leq i\neq j\leq n$. 
\end{lemma}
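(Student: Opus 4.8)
The plan is to translate the statement into the marginal-compatibility language recorded above. Since $L(\rho)\ge 1$ holds trivially (any determining collection has hyperedges of size at least one, and $\cS=\{[n]\}$ determines $\rho$), and since $L(\rho)\le 1$ if and only if $\cC(\rho,\cS_1)$ contains only $\rho$, the claim $L(\rho)=1$ is equivalent to the assertion that $\rho$ is the \emph{unique} state sharing its single-particle marginals $\{\rho_{\{i\}}\}_{i=1}^n$. I would also note at the outset that, because every density operator has rank at least one, the stated condition simply says that at most one of the marginals $\rho_{\{i\}}$ has rank $\ge 2$; equivalently, all but at most one single-particle marginal is pure. The workhorse throughout is the elementary factorization fact: if $\rho_{\{i\}}=\ketbra{\phi}{\phi}$ is pure, then $\rho=\ketbra{\phi}{\phi}_{\{i\}}\otimes\rho_{\overline{\{i\}}}$. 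I would prove this by observing that $\bra{\phi^\perp}\rho_{\{i\}}\ket{\phi^\perp}=0$ for every $\ket{\phi^\perp}\perp\ket{\phi}$, so positivity of $\rho$ forces its support to lie in $\ket{\phi}_{\{i\}}\otimes\cH_{\overline{\{i\}}}$, which yields the factorization.

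For the ``if'' direction, I would assume $\rho_{\{i\}}$ is pure for all $i\ne j$ and iterate the factorization fact, peeling off one pure particle at a time (the single-particle marginals of the successively reduced states are unchanged, so purity is preserved at each step) to conclude $\rho=\bigotimes_{i=1}^n\rho_{\{i\}}$. Uniqueness is then immediate: any $\sigma\in\cC(\rho,\cS_1)$ has $\sigma_{\{i\}}=\rho_{\{i\}}$ pure for every $i\ne j$, so the same fact forces $\sigma=\bigl(\bigotimes_{i\ne j}\sigma_{\{i\}}\bigr)\otimes\sigma_{\{j\}}=\bigl(\bigotimes_{i\ne j}\rho_{\{i\}}\bigr)\otimes\rho_{\{j\}}=\rho$. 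Hence $\cC(\rho,\cS_1)=\{\rho\}$ and $L(\rho)=1$.

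For the ``only if'' direction I would argue the contrapositive: assuming two distinct indices $a\ne b$ with $\rank\rho_{\{a\}}\ge 2$ and $\rank\rho_{\{b\}}\ge 2$, I would exhibit two distinct states in $\cC(\rho,\cS_1)$, which gives $L(\rho)\ge 2$. Take the product state $\tau:=\bigotimes_i\rho_{\{i\}}\in\cC(\rho,\cS_1)$ together with the perturbation $\tau_\epsilon:=\bigl(\rho_{\{a\}}\otimes\rho_{\{b\}}+\epsilon\,X_a\otimes Y_b\bigr)\otimes\bigotimes_{i\ne a,b}\rho_{\{i\}}$, where $X_a$ and $Y_b$ are nonzero traceless Hermitian operators supported on $\cR(\rho_{\{a\}})$ and $\cR(\rho_{\{b\}})$ respectively (such operators exist precisely because those ranks are $\ge 2$). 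Since $X_a$ and $Y_b$ are traceless, $\tau_\epsilon$ has the same single-particle marginals as $\tau$, hence as $\rho$; and since $X_a\otimes Y_b\ne 0$, we have $\tau_\epsilon\ne\tau$, so at least one of $\tau,\tau_\epsilon$ differs from $\rho$, forcing $\cC(\rho,\cS_1)\ne\{\rho\}$.

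The main obstacle is ensuring $\tau_\epsilon\ge 0$ for small $\epsilon>0$, and this is exactly where the rank-$\ge 2$ hypothesis is essential. By confining $X_a$ and $Y_b$ to the ranges of $\rho_{\{a\}}$ and $\rho_{\{b\}}$, the leading term $\rho_{\{a\}}\otimes\rho_{\{b\}}$ is strictly positive (bounded below by a positive multiple of the projector) on the support $\cR(\rho_{\{a\}})\otimes\cR(\rho_{\{b\}})$ containing the whole perturbation, so its smallest eigenvalue on that support is positive and survives the $O(\epsilon)$ correction. This is the delicate point to verify carefully; the remaining steps are routine once the factorization fact and this positivity estimate are in place.
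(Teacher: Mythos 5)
Your proof is correct, and its overall architecture matches the paper's: both reduce the claim to whether $\cC(\rho,\cS_1)=\{\rho\}$; the sufficiency direction rests on the same factorization fact (a pure single-site marginal forces $\rho$ to split off that site as a tensor factor --- you derive it from positivity and a support argument, the paper from the range inclusion $\cR(\rho_{AB})\subseteq\cR(\rho_A)\otimes\cR(\rho_B)$, which is the same fact), followed by the same iteration and uniqueness step; and the necessity direction in both cases exhibits a second state compatible with all one-body marginals once two sites $a\neq b$ have $\rank \rho_{\{a\}},\rank\rho_{\{b\}}\geq 2$. The one genuine difference is how that second state is built. The paper takes the spectral decomposition $\rho_A=\sum_{i}\lambda_i\ketbra{\psi_i}{\psi_i}$ and splits $\rho_B$ into two linearly independent \emph{states} $\rho_1,\rho_2$ with $\lambda_1\rho_1+\lambda_2\rho_2=(\lambda_1+\lambda_2)\rho_B$, setting $\sigma=\lambda_1\ketbra{\psi_1}{\psi_1}\otimes\rho_1+\lambda_2\ketbra{\psi_2}{\psi_2}\otimes\rho_2+\sum_{i\geq 3}\lambda_i\ketbra{\psi_i}{\psi_i}\otimes\rho_B$; positivity is then manifest and no smallness parameter is needed. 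You instead add a traceless Hermitian perturbation $\epsilon\, X_a\otimes Y_b$ and pay for it with an eigenvalue estimate; your handling of that estimate --- confining $X_a,Y_b$ to the ranges of the marginals so the unperturbed product is bounded below on the subspace carrying the perturbation --- is exactly the needed fix, and the tracelessness argument for preservation of all $n$ one-body marginals is complete. Your construction is somewhat more transparent about the mechanism (correlations invisible at the one-body level) and generalizes readily, while the paper's sidesteps any analytic estimate; both close the argument the same way, by observing that the product state and the correlated state cannot both equal $\rho$, so $L(\rho)\geq 2$.
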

\begin{proof}
    Sufficiency. Firstly, we need to show a fact: for a bipartite state $\rho_{AB}$, if $\rank(\rho_A)=1$, then $\rho_{AB}=\rho_A\otimes \rho_B$. This is because $\cR(\rho_{AB})$ is a subspace of $\cR(\rho_A)\otimes\cR(\rho_B)$.
    Since $\rank{\rho_{\{j\}}}\geq 1$, and $\rank{\rho_{\{i\}}}= 1$ for all $1\leq i\neq j\leq n$, we know that $\rho=\ketbra{\psi_1}{\psi_1} \otimes \cdots\otimes \ketbra{\psi_{j-1}}{\psi_{j-1}}\otimes \rho_{\{j\}}\otimes \ketbra{\psi_{j+1}}{\psi_{j+1}}\otimes \cdots\otimes \ketbra{\psi_n}{\psi_n}$. Assume $\sigma\in \cC(\rho,\cS_1)$, then $\sigma_{\{i\}}=\rho_{\{i\}}$ for all $1\leq i\leq n$. Since $\sigma_{\{i\}}=\rho_{\{i\}}=\ketbra{\psi_i}{\psi_i}$ for all $1\leq i\neq j\leq n$, we have $\sigma=\ketbra{\psi_1}{\psi_1} \otimes \cdots\otimes \ketbra{\psi_{j-1}}{\psi_{j-1}}\otimes \sigma_{\{j\}}\otimes \ketbra{\psi_{j+1}}{\psi_{j+1}}\otimes \cdots\otimes \ketbra{\psi_n}{\psi_n}$ from the above fact. Note that  $\sigma_{\{j\}}=\rho_{\{j\}}$, then $\sigma=\rho$. Thus $\cC(\rho,\cS_1)$ contains only $\rho$, and $L(\rho)=1$.

    Necessity. Firstly, we need to prove a fact: for a bipartite state $\rho_{AB}$, if $\rank(\rho_A)\geq 2$, and $\rank(\rho_B)\geq 2$, then there exists a state $\sigma\neq \rho_A\otimes \rho_B$ such that $\sigma_A=\rho_A$ and $\sigma_B=\rho_B$. Assume the spectral decomposition of $\rho_A$ is $\rho_A=\sum_{i=1}^m\lambda_i\ketbra{\psi_i}{\psi_i}$, where $m\geq 2$ and $\lambda_i>0$ for all $1\leq i\leq m$. Since $\rank(\rho_B)\geq 2$, we can always find linearly independent states $\rho_1$ and $\rho_2$ such that $\lambda_1\rho_1+\lambda_2\rho_2=(\lambda_1+\lambda_2)\rho_B$. Let $\sigma=\lambda_1\ketbra{\psi_1}{\psi_1}\otimes\rho_1+\lambda_2\ketbra{\psi_2}{\psi_2}\otimes\rho_2+\sum_{i=3}^{m}\lambda_i\ketbra{\psi_i}{\psi_i}\otimes \rho_B$. We can verify that $\sigma_A=\rho_A$ and $\sigma_B=\rho_B$. Since $\ketbra{\psi_1}{\psi_1}$ and $\ketbra{\psi_2}{\psi_2}$ are linearly independent, and $\rho_1$ and $\rho_2$ are linearly independent, we have  $\sigma\neq \rho_A\otimes \rho_B$. Thus, the above fact is proved.  Now, we prove the necessity by contradiction. If there exist $i$, $j$,  $1\leq i\neq j\leq n$ such that 
    $\rank{\rho_{\{i\}}}\geq 2$ and $\rank{\rho_{\{j\}}}\geq 2$, then there exists $\sigma\in \cC(\rho, \cS_1)$, and $\sigma\neq \rho_{\{1\}}\otimes \rho_{\{2\}}\otimes \cdots\otimes \rho_{\{n\}}\in \cC(\rho, \cS_1)$ from the above fact. Therefore, $L(\rho)\geq 2$, which contradicts that $L(\rho)=1$.
\end{proof}
\vspace{0.4cm}
\section{II. \ The EDL and SDL of graph states}


Graph states are associated to simple graphs of the form  $G=([n],E)$, where $[n]$ is the set of vertices and $E $ is a set of edges.   Specifically,  the graph state $\ket{G}$ is the unique simultaneous eigenstate with eigenvalue +1 of the  matrices $\{M_i\}_{i=1}^n$ defined by 
 $M_i:=X_i\otimes\bigotimes_{j\in N_G(i)}Z_j$, where $X_i$ and $Z_j$ are Pauli matrices and $N_G(i)$ is the neighborhood of $i$, that is, the  set of all  vertices adjacent to $i$. We denote by  $\text{deg}_G(i)$ the degree of the vertex $i$,  that is, ${\rm deg}_G  (i)  : =  |  N_G(i)|$, and by $\Delta(G)$  the maximum degree of the graph $G$, that is, $\Delta(G)=\max_{i\in [n]}  \,  {\rm deg}_G  (i) $.
 
 It is known that a graph state is genuinely entangled   if and only if  the graph is connected  \cite{hein2004multiparty}.  In this case, we have the following bounds:

 \begin{lemma}
 For a connected graph  $G$  with $n\geq 3$ vertices,  one has the bounds
\begin{equation}\label{eq:Gone}
    3\leq L(\ket{G})   \leq 1  + \Delta(G)  . 
\end{equation}
The same bounds apply  to $l(|G\rangle)$.   
\end{lemma}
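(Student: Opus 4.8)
The plan is to establish all four inequalities $3\le l(\ket{G})\le 1+\Delta(G)$ and $3\le L(\ket{G})\le 1+\Delta(G)$ from the stabilizer description of $\ket{G}$, relying on the general chain $2\le l(\rho)\le L(\rho)$ so that it suffices to prove the upper bound for $L$ and the lower bound for $l$. Throughout, $\cG$ denotes the stabilizer group generated by $\{M_i\}_{i=1}^n$, and I use that $\ket{G}$ is the unique common $+1$ eigenstate of the $M_i$.

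For the upper bound I would take the collection of closed neighborhoods $\cN:=\{\,\{i\}\cup N_G(i)\mid i\in[n]\,\}$. Each generator $M_i$ is supported on $\{i\}\cup N_G(i)$, a set of size $1+{\rm deg}_G(i)\le 1+\Delta(G)$. For any $\sigma\in\cC(\ket{G}\bra{G},\cN)$ the marginal on $\{i\}\cup N_G(i)$ agrees with that of $\ket{G}$, so ${\sf Tr}(M_i\sigma)={\sf Tr}(M_i\ket{G}\bra{G})=1$. Since $M_i$ has eigenvalues $\pm1$ and $\sigma$ is a state, ${\sf Tr}((I-M_i)\sigma)=0$ forces $\sigma$ to be supported in the $+1$ eigenspace of every $M_i$; as this common eigenspace is $\lin\{\ket{G}\}$, we get $\sigma=\ket{G}\bra{G}$. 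Hence $\cN$ determines $\ket{G}$ and $L(\ket{G})\le\max_{S\in\cN}|S|=1+\Delta(G)$, and $l\le L$ gives the same bound for $l$.

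For the lower bound it suffices, by the criterion recalled at the start of the Supplemental Material ($l(\rho)\ge3$ iff $\cC(\rho,\cS_2)$ contains a biseparable state), to exhibit a biseparable $\sigma\in\cC(\ket{G}\bra{G},\cS_2)$; then $L(\ket{G})\ge l(\ket{G})\ge3$. The key intermediate claim is a \emph{consistency lemma}: every $2$-body stabilizer can be simultaneously diagonalized in a single product basis. I would prove this in the symplectic representation. A stabilizer $g=\prod_{k\in T}M_k$ acts by $X$ or $Y$ exactly on the qubits of $T$, so $\supp(g)\subseteq\{i,j\}$ forces $T\subseteq\{i,j\}$; enumerating $T\in\{\{i\},\{j\},\{i,j\}\}$ shows that on each qubit $i$ a $2$-body stabilizer acts (up to phase) by a single fixed Pauli: $X$ if $i$ is a leaf, $Z$ if $i$ has a leaf-neighbor, and $Y$ (respectively $X$) in the adjacent-twin (respectively non-adjacent-twin) situation otherwise. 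Using connectivity and $n\ge3$ I would verify these regimes are mutually exclusive at each vertex, yielding a globally consistent type $P^{(i)}\in\{X,Y,Z\}$ for every $i$. Letting $\cD\le\cG$ be the subgroup of stabilizers diagonal in the resulting product basis, set $\sigma:=2^{-n}\sum_{g\in\cD}g$. This is a positive multiple of the projector onto the common $+1$ eigenspace of $\cD$, hence a valid state, and being diagonal in a product basis it is fully separable, thus biseparable. Because the consistency lemma places every $2$-body stabilizer of $\ket{G}$ inside $\cD$, partial tracing gives $\sigma_{\{i,j\}}=\tfrac14\sum_{g\in\cD,\ \supp(g)\subseteq\{i,j\}}g=(\ket{G}\bra{G})_{\{i,j\}}$ for every pair, so $\sigma\in\cC(\ket{G}\bra{G},\cS_2)$ and the lower bound follows.

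The main obstacle is the consistency lemma, i.e.\ excluding two distinct Pauli types among the $2$-body stabilizers touching a given vertex. The decisive observations I expect to carry the argument are: (i) if $i$ has a leaf-neighbor $\ell$, then no product $M_iM_j$ can be $2$-body, since the $Z_\ell$ appearing in $M_i$ cannot cancel (as $N_G(\ell)=\{i\}$), so such a vertex is purely of type $Z$; and (ii) the adjacent-twin condition ($N_G(i)\setminus\{j\}=N_G(j)\setminus\{i\}$ with $i\sim j$) and the non-adjacent-twin condition ($N_G(i)=N_G(j')$ with $i\not\sim j'$) cannot both hold at $i$ for a connected graph, since a non-adjacent twin $j'$ of $i$ would have to be adjacent to $j$, contradicting the twin relation between $i$ and $j$. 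Everything else—verifying validity and full separability of $\sigma$, and the marginal identity—is routine bookkeeping once the lemma is in place.
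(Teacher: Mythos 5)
Your proposal is correct, and it takes a genuinely different, self-contained route: the paper's own proof is two sentences long and outsources both bounds to the literature, citing Ref.~\cite{gittsovich2010multiparticle} for the fact that $\cC(\ket{G},\cS_2)$ contains a fully separable state (hence $L,l\geq 3$) and Ref.~\cite{wu2015determination} for the fact that the closed-neighborhood marginals $\{\{i\}\cup N_G(i)\}_{i=1}^n$ determine $\ket{G}$ (hence $L,l\leq 1+\Delta(G)$). Your upper-bound argument—each closed-neighborhood marginal fixes ${\sf Tr}(M_i\sigma)=1$, positivity of $\sigma$ then forces its support into each $+1$ eigenspace, and uniqueness of the joint $+1$ eigenstate gives $\sigma=\ketbra{G}{G}$—is exactly the standard mechanism behind the cited determination result. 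Your lower bound reconstructs rather than cites the compatibility result: you correctly observe that a $2$-body stabilizer element $g=\prod_{k\in T}M_k$ forces $T\subseteq\{i,j\}$, so the only $2$-body elements are of leaf type $X_iZ_\ell$, adjacent-twin type $\pm Y_iY_j$, and non-adjacent-twin type $X_iX_j$; once the single-qubit Pauli types are consistent, the diagonal subgroup $\cD\leq\cG$ yields $\sigma=2^{-n}\sum_{g\in\cD}g$, a fully separable state (diagonal in a product basis, positive since $\cD\leq\cG$ implies $-I\notin\cD$) whose $2$-body marginals match those of $\ket{G}$ by the standard marginal formula for stabilizer states. What your route buys is transparency: it exhibits explicitly which graphs have nontrivial $2$-body correlations (leaves and twins) and shows the compatible separable state can be taken diagonal in a product Pauli basis; what the paper's route buys is brevity.

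Two spots in your consistency lemma are stated slightly too quickly, though both are repaired by routine checks of the same kind as your observation (ii). First, your observation (i) as phrased only excludes products $M_iM_j$ with $j\neq\ell$: when $j=\ell$ the $Z_\ell$ from $M_i$ does \emph{not} survive—it merges with $X_\ell$ into $Y_\ell$—and $M_iM_\ell$ is instead excluded from being $2$-body by the uncancelled factors $Z_k$, $k\in N_G(i)\setminus\{\ell\}$, which is nonempty because connectivity and $n\geq 3$ force ${\rm deg}_G(i)\geq 2$ whenever $i$ has a leaf neighbor. Second, you also need that a vertex with a leaf neighbor is not itself a leaf, and that a leaf never lies in an adjacent-twin pair; both configurations collapse to a two-vertex connected component and are therefore excluded by $n\geq 3$. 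With these finite cases filled in, the consistency lemma and hence your whole argument go through.
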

\begin{proof}
   If $G$ is a connected graph, then the compatibility set $\cC(\ket{G}, \cS_2)$ contains a fully separable state \cite{gittsovich2010multiparticle}, which implies $L(\ket{G})\geq 3$ (respectively, $l(\ket{G})\geq 3$). Since the compatibility set $\cC(\ket{G}, \{\{i\}\cup N_G(i)\}_{i=1}^{n}$) contains only the graph state $\ket{G}$  \cite{wu2015determination}, we have $L(\ket{G})\leq 1+ \Delta(G)$ (respectively, $l(\ket{G})\leq 1+ \Delta(G)$). 
\end{proof}

\vspace{0.4cm}

\begin{figure}[t]
		\centering		\includegraphics[scale=0.37]{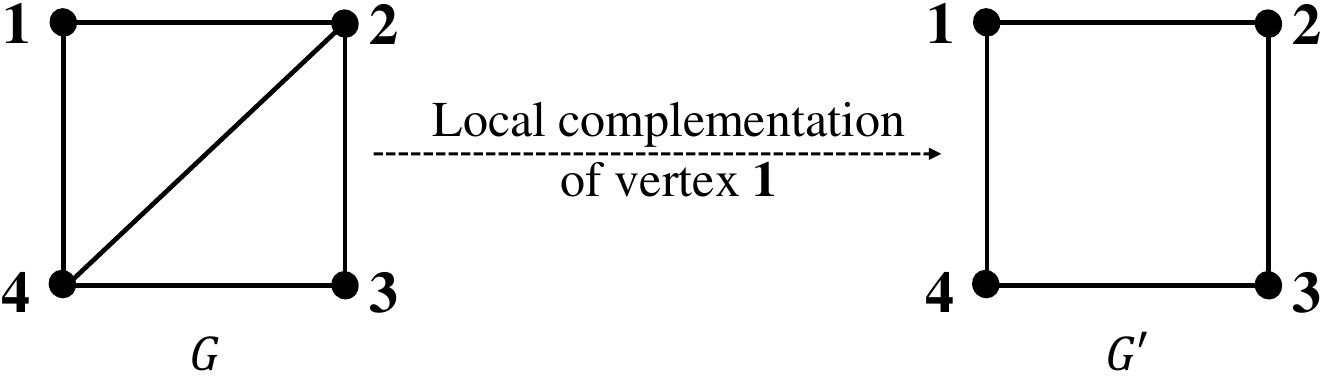}
		\caption{ {\bf Example of local complementation.}   The original graph $G$ undergoes local complementation with respect to the vertex 1,  which turns it into the cycle graph $G' =  C_4$.  
  } \label{fig:local}
\end{figure}

Eq.(\ref{eq:Gone}) determines the EDL  and SDL of all cluster states  and ring states 
corresponding to the cases where the graph $G$ is a line and a cycle, respectively.  In this case, one has $\Delta(G)=  2$, and therefore Eq.(\ref{eq:Gone})  implies $ L(|G \rangle)  =l(|G\rangle) =  3$.  

The upper bound in Eq.(\ref{eq:Gone}) can be further improved by recalling that the EDL is LU-invariant, and, in particular, is invariant under local Clifford (LC) unitaries. 

{LC-equivalence of graph states has a simple characterization in terms of an operation called local complementation \cite{hein2004multiparty,van2004graphical,adcock2020mapping}.    For a given vertex $i$,  local complementation of  $i$ consists in regarding the neighborhood  $N_G(i)$  as a subgraph of $G$, and replacing $N_G(i)$ with its complement graph, that is, the graph $N^{\rm c}_G(i)$  that has   the same vertices as $N_G(i)$, and has an edge between  two vertices $j$ and $k$  if and only if  $N_G(i)$ has no edge between them.   Refs. \cite{hein2004multiparty,van2004graphical,adcock2020mapping} proved that   two graph states $|G\rangle$ and $|G'\rangle$   are LC-equivalent if and only if   $G$ can be transformed into $G'$ through a sequence of local complementations.  In this case, we write $G'  \overset{\rm LC}{\sim} G$.    Since $L(|G'\rangle) =  L(|G\rangle)$ whenever $G'  \overset{\rm LC}{\sim} G$, Eq. (\ref{eq:Gone}) implies the bound  
\begin{equation}\label{eq:G}
    3\leq L(\ket{G}) \leq 1  +   \min_{G'  \overset{\rm LC}{\sim} G} \Delta(G'). 
\end{equation}
The same bounds apply  to $l(|G\rangle)$. 
For example, consider the graph $G$ in Fig.~\ref{fig:local}. By the local complementation of vextex $1$, we obtain a cycle  graph $G'$.  Then, Eq. (\ref{eq:G})  implies $ L(|G\rangle)=l(\ket{G})= 3$.}

 An $n$-qubit $k$-uniform state is a pure state such that all its  $k$-body marginals are maximally mixed \cite{scott2004multipartite}.   Since the $k$-body marginals are compatible with the maximally mixed $n$-qubit state,   the SDL (respectively, EDL) of (respectively, genuinely entangled) $k$-uniform states must be strictly larger than $k$. Graph states can be also used to construct $k$-uniform states \cite{helwig2013absolutely,sudevan2022n}.
For example, the $3$-uniform $6$-qubit genuinely entangled state  can be obtained from a $3$-regular graph \cite{helwig2013absolutely}, where every vertex has degree $3$, then we can determine that its SDL (respectively, EDL)  is $4$.


\section{III. Symmetric states}
A symmetric (bosonic) space $\cH_{Sym}$ is a subspace of $\cH_{[n]}$, where every pure state is invariant under the swap of any two subsystems, i.e.
\begin{equation}
  \cH_{Sym}:=\{\ket{\psi}\in \cH_{[n]}:\ {\swap}_{(i,j)}\ket{\psi}=\ket{\psi}, \forall \, 1\leq i,j\leq n \},  
\end{equation}
where $\swap_{(i,j)}$ is a swap operator that swaps   $i$-th and $j$-th subsystems.  Every pure state of  $\cH_{Sym}$ is called a  \emph{symmetric pure state}. 
Note that $\dim\cH_{Sym}=n+1$, and $\cH_{Sym}$ has an orthogonal basis $\{\ket{D_n^i}\}_{i=0}^n$ consisting of Dicke states, where 
\begin{equation}
 \ket{D_n^i}:=\frac{1}{\sqrt{\binom{n}{i}}}\sum_{\begin{subarray}{c}
   \\   s_j   \in  \{0,1\}, \, \forall \, j\in [n] \\  \\
     \sum_{j=1}^n  s_j=i 
     \end{subarray}}  |s_1\rangle \otimes |s_2\rangle \otimes \cdots\otimes | s_n\rangle  \, .
\end{equation}  
A symmetric pure state $\ket{\psi}$ can be written as
\begin{equation}
    \ket{\psi}=\sum_{i=0}^{n}a_{i}\ket{D_n^i},
\end{equation}
where $\sum_{i=0}^n|a_i|^2=1$.

Next, we introduce the concept of symmetric mixed states.  

\begin{definition}
   An $n$-qubit  state $\rho$ is a symmetric state if $\cR(\rho)\subseteq \cH_{Sym}$. In other words, $\rho=\swap_{(i,j)}\rho $ for all $1\leq i,j \leq n$.
\end{definition}

 A  symmetric  state can be written as $\rho=\sum_ip_i\ketbra{\psi_i}{\psi_i}$, where $\ket{\psi_i}\in \cH_{Sym}$. Furthermore, a symmetric  state $\rho$ can also be written as
\begin{equation}
    \rho=\sum_{i,j=0}^{n}a_{i,j}\ketbra{D_n^i}{D_n^j},
\end{equation}
where $A=(a_{i,j})_{0\leq i,j\leq n}$ is positive semidefinite and ${\sf Tr}(A)=1$. Specially, the \emph{diagonal symmetric state} is a mixture of the Dicke states,  i.e.
\begin{equation}
\rho=\sum_{i=0}^{n}\lambda_i\ket{D_n^i}\bra{D_n^i},
\end{equation}
where   $0\leq \lambda_i\leq 1$ for all $0\leq i\leq n$, and $\sum_{i=0}^n\lambda_i=1$.  

A symmetric state $\rho$ is either fully separable  or  genuinely entangled \cite{ichikawa2008exchange}. When $\rho$ is a fully separable  symmetric state, it can be written as $\rho=\sum_ip_i\ketbra{\alpha_i\alpha_i\cdots\alpha_i}{\alpha_i\alpha_i\cdots\alpha_i}$ \cite{ichikawa2008exchange}.

We stipulate that the  binomial coefficient $\binom{n}{i}=0$ if $i<0$ or $i>n$, and $\binom{0}{0}=1$. 
According to Ref.~ \cite{aloy2021quantum},
\begin{equation}\label{eq: Dicke_marginal}
\begin{aligned}
    \ketbra{D_n^i}{D_n^i}_{[k]}&=\sum_{s}\frac{\binom{k}{s}\binom{n-k}{i-s}}{\binom{n}{i}}\ketbra{D_k^s}{D_k^s},  \quad  \forall \,0\leq i\leq n; \\
\ketbra{D_n^i}{D_n^j}_{[k]}&=\sum_{s}\frac{\binom{n-k}{i-s}\sqrt{\binom{k}{s}\binom{k}{j-i+s}}}{\sqrt{\binom{n}{i}\binom{n}{j}}}
\ketbra{D_k^s}{D_k^{j-i+s}},  \quad \forall \, 0\leq i< j\leq n;\\
\ketbra{D_n^j}{D_n^i}_{[k]}&=\sum_{s}\frac{\binom{n-k}{i-s}\sqrt{\binom{k}{s}\binom{k}{j-i+s}}}{\sqrt{\binom{n}{i}\binom{n}{j}}}
\ketbra{D_k^{j-i+s}}{D_k^{s}},  \quad \forall \,0\leq i< j\leq n.
\end{aligned}
\end{equation}
Next, we give some properties for symmetric states.

\begin{lemma}\cite{i2016characterizing}\label{lem:rdm}
For an  $n$-qubit symmetric state $\rho$,   $\rho_S=\rho_{[k]}$ for every $S\in \cS_k$, and $\rho_{[k]}$ is also a symmetric state for every $2\leq k\leq n$.
\end{lemma}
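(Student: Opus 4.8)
The plan is to derive the whole statement from the single defining property that $\rho$ is invariant under every transposition of subsystems, and then to propagate this symmetry through the partial trace. First I would upgrade the swap-invariance in the definition to full permutation invariance. For $\pi$ in the symmetric group $S_n$, write $P_\pi$ for the unitary on $\cH_{[n]}$ permuting the tensor factors according to $\pi$. The definition gives $\swap_{(i,j)}\rho=\rho$ for every transposition $(i,j)$; since $\rho$ is Hermitian and $\swap_{(i,j)}$ is a self-adjoint unitary, taking adjoints also gives $\rho\,\swap_{(i,j)}=\rho$, and hence $\swap_{(i,j)}\rho\,\swap_{(i,j)}^\dagger=\rho$. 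Because transpositions generate $S_n$, this yields $P_\pi\rho P_\pi^\dagger=\rho$ for every $\pi\in S_n$. This is the only input the rest of the argument needs.

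Second, to show that $\rho_{[k]}$ is symmetric for $2\le k\le n$, I would fix any transposition $(i,j)$ with $i,j\in[k]$. Such a swap acts only on the retained subsystems, so it commutes with the partial trace $\tr_{\overline{[k]}}$; writing $\rho=\sum_\alpha M_\alpha\otimes N_\alpha$ with $M_\alpha$ on $\cH_{[k]}$ and $N_\alpha$ on $\cH_{\overline{[k]}}$ turns this commutation into a one-line check. Combining it with permutation invariance gives $\swap_{(i,j)}\rho_{[k]}\swap_{(i,j)}^\dagger=\tr_{\overline{[k]}}[\swap_{(i,j)}\rho\,\swap_{(i,j)}^\dagger]=\tr_{\overline{[k]}}[\rho]=\rho_{[k]}$, so $\rho_{[k]}$ is invariant under every swap within $[k]$ and is therefore symmetric.

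Third, to show $\rho_S=\rho_{[k]}$ for every $S\in\cS_k$, I would pick the order-preserving permutation $\pi$ with $\pi(S)=[k]$ (so $\pi$ also maps $\overline S$ onto $\overline{[k]}$). Under the identification $\cH_{[n]}=\cH_S\otimes\cH_{\overline S}=\cH_{[k]}\otimes\cH_{\overline{[k]}}$, the operator $P_\pi$ factors as $V_S\otimes V_{\overline S}$, where $V_S:\cH_S\to\cH_{[k]}$ is the canonical reordering unitary induced by $\pi|_S$. Using permutation invariance and the same factorization of $\rho$, I would compute $\rho_{[k]}=\tr_{\overline{[k]}}[P_\pi\rho P_\pi^\dagger]=V_S\,\tr_{\overline S}[\rho]\,V_S^\dagger=V_S\rho_S V_S^\dagger$, the unitary $V_{\overline S}$ dropping out under the trace by cyclicity. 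Since $V_S$ is exactly the canonical identification of $\cH_S$ with $\cH_{[k]}$, this is precisely the asserted equality $\rho_S=\rho_{[k]}$ (and this part in fact holds for every $k\ge 1$, the restriction $k\ge 2$ being needed only for the symmetry claim).

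The only delicate point is bookkeeping rather than mathematics: one must keep track of the canonical identification of $\cH_S$ with $\cH_{[k]}$ so that the equality $\rho_S=\rho_{[k]}$ is meaningful, and handle the two commutation identities between partial traces and tensor-factor permutations. Both reduce to the elementary fact that $\tr_{\overline{[k]}}[(A\otimes B)X(A^\dagger\otimes B^\dagger)]=A\,\tr_{\overline{[k]}}[X]\,A^\dagger$ for unitaries $A$ on $\cH_{[k]}$ and $B$ on $\cH_{\overline{[k]}}$, which I would prove once and reuse in both steps.
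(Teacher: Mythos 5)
The paper itself offers no proof of this lemma (it is imported by citation from Ref.~\cite{i2016characterizing}), so your attempt is judged on its own merits. Your third step, establishing $\rho_S=\rho_{[k]}$ for every $S\in\cS_k$ via the factorization $P_\pi=V_S\otimes V_{\overline S}$ and the commutation of reordering unitaries with the partial trace, is correct and cleanly organized; that part indeed needs only conjugation invariance $P_\pi\rho P_\pi^\dagger=\rho$ and holds for all $k\geq 1$.

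Your second step, however, has a genuine gap. The paper's Definition~1 of a symmetric mixed state requires $\cR(\rho_{[k]})\subseteq\cH_{Sym}$, equivalently the \emph{one-sided} identity $\swap_{(i,j)}\rho_{[k]}=\rho_{[k]}$, whereas you establish only the conjugation invariance $\swap_{(i,j)}\rho_{[k]}\swap_{(i,j)}^\dagger=\rho_{[k]}$ and then assert the marginal ``is therefore symmetric.'' These are not equivalent: the two-qubit state $I/4$ satisfies $\swap_{(1,2)}(I/4)\swap_{(1,2)}^\dagger=I/4$, yet its range contains the antisymmetric singlet, so it is not a symmetric state in the paper's sense. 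The trouble originates in your first paragraph, where you declare that permutation invariance ``is the only input the rest of the argument needs'' --- by symmetrizing the defining relation into a conjugation you discard exactly the information the conclusion requires (permutation-invariant states form a strictly larger class than bosonic ones, and your step two would ``prove'' the false statement that marginals of any permutation-invariant state are bosonic). The repair is immediate and in fact shortens your argument: for $i,j\in[k]$ the operator $\swap_{(i,j)}\otimes I$ acts only on the retained factors, so applying your own commutation identity $\tr_{\overline{[k]}}\big[(A\otimes I)X\big]=A\,\tr_{\overline{[k]}}[X]$ to the one-sided relation $(\swap_{(i,j)}\otimes I)\rho=\rho$ gives $\swap_{(i,j)}\rho_{[k]}=\tr_{\overline{[k]}}\big[(\swap_{(i,j)}\otimes I)\rho\big]=\rho_{[k]}$, which is precisely the definition; alternatively, one can decompose $\rho=\sum_i p_i\ketbra{\psi_i}{\psi_i}$ with $\ket{\psi_i}\in\cH_{Sym}$ and argue on pure states, in the spirit of the Schmidt-decomposition argument the paper uses for the converse direction in its Lemma~\ref{lem:rdm_coverse}.
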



\begin{lemma}\label{lem:rdm_coverse}
  For an $n$-qubit state $\rho$  and a collection $\cS$ of subsets of $[n]$, if $\rho_{S}$ is a symmetric state for every $S\in \cS$ and the hypergraph $G=([n], \cS)$ is connected, then $\rho$ is also a symmetric state.
\end{lemma}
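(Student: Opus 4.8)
The plan is to prove directly that $\swap_{(i,j)}\rho=\rho$ for every pair $i,j\in[n]$, since this one-sided invariance is exactly the stated characterization of a symmetric state (equivalently $\cR(\rho)\sue\cH_{Sym}$). The equivalence itself is worth recording first: if the positive semidefinite operator $\rho$ satisfies $\swap_{(i,j)}\rho=\rho$ for all $i,j$, then for each eigenvector $\ket{\phi}$ of $\rho$ with nonzero eigenvalue one gets $\swap_{(i,j)}\ket{\phi}=\ket{\phi}$, so $\cR(\rho)\sue\cH_{Sym}$; the converse is immediate. I would then establish the invariance in two stages: first for pairs $i,j$ lying in a common element of $\cS$, and then for all pairs by propagating along the hypergraph.

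Step 1 (local swaps). Fix $S\in\cS$ and $i,j\in S$, and let $\Pi^{-}_{ij}:=\tfrac12\,(I-\swap_{(i,j)})$ be the projector onto the antisymmetric subspace of $\cH_i\otimes\cH_j$. Since $\rho_S$ is symmetric we have $\swap_{(i,j)}\rho_S=\rho_S$, hence $\Pi^{-}_{ij}\rho_S=\rho_S\Pi^{-}_{ij}=0$. Consider the positive semidefinite operator $M:=\Pi^{-}_{ij}\,\rho\,\Pi^{-}_{ij}$ on $\cH_{[n]}$. Because $i,j\in S$, the projector $\Pi^{-}_{ij}$ acts trivially on $\overline{S}$ and therefore commutes with $\tr_{\overline{S}}$, so $\tr_{\overline{S}}(M)=\Pi^{-}_{ij}\rho_S\Pi^{-}_{ij}=0$, whence $\tr(M)=0$. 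A positive semidefinite operator of vanishing trace is zero, so $M=0$; writing $\rho=\sum_k q_k\proj{\phi_k}$ with $q_k>0$ then forces $\Pi^{-}_{ij}\ket{\phi_k}=0$ for every $k$, i.e.\ every vector of $\cR(\rho)$ lies in the $+1$ eigenspace of $\swap_{(i,j)}$. Consequently $\swap_{(i,j)}\rho=\rho$.

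Step 2 (propagation). Introduce the ordinary graph $H$ on vertex set $[n]$ that joins $i$ and $j$ whenever $i,j$ belong to a common $S\in\cS$. One checks that $H$ is connected if and only if the hypergraph $([n],\cS)$ is connected: a bipartition of $[n]$ with no crossing edge of $H$ is precisely a bipartition with no hyperedge of $\cS$ meeting both sides. Connectivity of $H$ then gives a spanning tree, and the transpositions indexed by the edges of a connected graph on $[n]$ generate the full symmetric group on $[n]$. Since Step 1 shows $\rho$ is fixed by the permutation operator of each such transposition, the stabilizer of every vector in $\cR(\rho)$ contains a generating set and hence all permutations; in particular $\swap_{(i,j)}\rho=\rho$ for all $i,j$, so $\rho$ is symmetric.

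The only genuine obstacle is Step 1, where one must upgrade a statement about the partial trace $\rho_S$ to a statement about the global operator $\rho$. Positive semidefiniteness of $\rho$ is essential here: it is what converts the trace-zero conclusion $\tr(M)=0$ into $M=0$, and thence into the range condition on $\cR(\rho)$. For a merely Hermitian $\rho$ the implication fails, since $\tr_{\overline{S}}(M)=0$ alone does not force $M=0$. Once Step 1 is in place, Step 2 is routine, resting only on the graph/hypergraph connectivity equivalence and the standard fact that tree transpositions generate the symmetric group.
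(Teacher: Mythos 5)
Your proof is correct, but the key step is argued by a genuinely different mechanism than the paper's. The paper first factors through two-body marginals (via Lemma~\ref{lem:rdm}, symmetry of $\rho_S$ gives symmetry of each $\rho_{\{i,j\}}$ with $i,j\in S$) and then proves the local invariance $\swap_{(i,j)}\rho=\rho$ by a Schmidt-decomposition argument: for a pure state, symmetry of the two-body marginal forces each Schmidt vector $\ket{\alpha_k}_{\{i,j\}}$ to be symmetric, and the mixed case is reduced to the pure case via an ensemble decomposition together with the range inclusion $\cR(\proj{\psi_k}_{\{i,j\}})\sue\cR(\rho_{\{i,j\}})$. You instead handle mixed states in one stroke with the antisymmetric projector $\Pi^-_{ij}$: from $\Pi^-_{ij}\rho_S\Pi^-_{ij}=0$ and the pull-through identity $\tr_{\overline S}\bigl[(\Pi^-_{ij}\otimes I)\rho(\Pi^-_{ij}\otimes I)\bigr]=\Pi^-_{ij}\rho_S\Pi^-_{ij}$ you get a positive semidefinite operator of zero trace, hence $\Pi^-_{ij}\rho\Pi^-_{ij}=0$, and positivity of $\rho$ upgrades this to $\Pi^-_{ij}\ket{\phi_k}=0$ on every eigenvector in the range. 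This is shorter, avoids both the Schmidt decomposition and the ensemble/range argument, and bypasses Lemma~\ref{lem:rdm} entirely since swaps with $i,j\in S$ act directly on $\rho_S$; you also correctly isolate positivity as the ingredient without which the step fails. Your propagation step is the same idea as the paper's in different clothing: the paper iterates the conjugation identity $\swap_{(a_1,a_3)}=\swap_{(a_1,a_2)}\swap_{(a_2,a_3)}\swap_{(a_1,a_2)}$ along a path, which is precisely the standard proof of your cited fact that edge transpositions of a connected graph generate the symmetric group, and your reduction of hypergraph connectivity to connectivity of the auxiliary graph $H$ is a needed bookkeeping point that the paper handles implicitly.
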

\begin{proof}
  Firstly, we need to show that if $\rho_{\{i,j\}}$ is a symmetric state, then $\swap_{(i,j)}\rho=\rho$.
  
  We consider the $n$-qubit  pure state $\ket{\psi}$. By Schmidt decomposition,
 $\ket{\psi}$ can be written as 
 \begin{equation*}
\ket{\psi}=\sum_k\lambda_k\ket{\alpha_k}_{\{i,j\}}\ket{\beta_k}_{\overline{\{i,j\}}},
 \end{equation*}
where $\lambda_{k}\geq 0$, $\sum_{k}\lambda_{k}^2=1$, and the states   $\{\ket{\alpha_k}_{\{i,j\}}\}$ ($\{\ket{\beta_k}_{\overline{\{i,j\}}}\}$) are mutually orthogonal. 
 If  
\begin{equation}
  \ketbra{\psi}{\psi}_{\{i,j\}}=\sum_k\lambda_k^2\ket{\alpha_k}_{\{i,j\}}\bra{\alpha_k}
\end{equation}
is a symmetric state, then $\ket{\alpha_k}_{\{i,j\}}$ is a symmetric pure state for every $k$, i.e. $\swap_{(i,j)}\ket{\alpha_k}_{\{i,j\}}=\ket{\alpha_k}_{\{i,j\}}$ for every $k$. 
It implies that $\swap_{(i,j)}\ket{\psi}=\ket{\psi}$. 


Next, we consider the $n$-qubit mixed state  $\rho=\sum_kp_k\ketbra{\psi_k}{\psi_k}$. If $\rho_{\{i,j\}}=\sum_kp_k\ketbra{\psi_k}{\psi_k}_{\{i,j\}}$ is a symmetric state, then $\cR(\ketbra{\psi_k}{\psi_k}_{\{i,j\}})\subseteq\cR(\rho_{\{i,j\}})$. We obtain that $\ketbra{\psi_k}{\psi_k}_{\{i,j\}}$ is also a symmetric state for every $k$. From the above discussion, $\swap_{(i,j)}\ket{\psi_k}=\ket{\psi_k}$ for every $k$, which means that  $\swap_{(i,j)}\rho=\rho$. 

 If $\rho_{S}$ is a symmetric state, then $\rho_{\{s, t\}}$ is also a symmetric state for every  $\{s, t\}\subseteq S$ by Lemma~\ref{lem:rdm}. By the above discussion,
 $\swap_{(s,t)}\rho=\rho$ for every  $\{s, t\}\subseteq S$. If the hypergraph $G=([n], \cS)$ is connected, then for every $ \{a_1,a_m\}\subseteq [n]$, there exists a path 
\begin{equation}
    a_1\sim  a_2\sim  a_3\sim \cdots \sim a_m
\end{equation}
such that $\{a_j,a_{j+1}\}$ is contained in some $S$ for $1\leq j\leq m-1$, where $S\in \cS$. We obtain that $\swap_{(a_j,a_{j+1})}\rho=\rho$ for $1\leq j\leq m-1$. Since
\begin{equation}
    {\swap}_{(a_1,a_3)}={\swap}_{(a_1,a_2)}{\swap}_{(a_2,a_3)}{\swap}_{(a_1,a_2)},
\end{equation}
we have ${\swap}_{(a_1,a_3)}\rho=\rho$. Next,
since 
\begin{equation}
    {\swap}_{(a_1,a_4)}={\swap}_{(a_1,a_3)}{\swap}_{(a_3,a_4)}{\swap}_{(a_1,a_3)},
\end{equation}
we also have ${\swap}_{(a_1,a_4)}\rho=\rho$. By repeating this process
 $m-1$ times, we obtain ${\swap}_{(a_1,a_m)}\rho=\rho$. Thus ${\swap}_{(a_1,a_m)}\rho=\rho$ for every $ \{a_1,a_m\}\subseteq [n]$, i.e. $\rho$ is a symmetric state.
\end{proof}
\vspace{0.4cm}

\begin{lemma}\label{lemma: connected_symmetrc}
    For an $n$-qubit symmetric state $\rho$  and a collection $\cS$ of subsets of $[n]$, if  the hypergraph $G=([n], \cS)$ is connected, then every state $\sigma\in \cC(\rho, \cS)$ is a symmetric state. 
\end{lemma}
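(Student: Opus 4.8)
The plan is to reduce this statement to the two structural lemmas already in hand, Lemma~\ref{lem:rdm} and Lemma~\ref{lem:rdm_coverse}, by showing that compatibility with the symmetric state $\rho$ forces every relevant marginal of $\sigma$ to be symmetric. In essence, $\rho$ being symmetric makes its marginals symmetric (Lemma~\ref{lem:rdm}), and symmetric marginals along a connected collection force the global state to be symmetric (Lemma~\ref{lem:rdm_coverse}); the compatibility condition $\sigma_S = \rho_S$ is exactly what transfers the first fact to $\sigma$ so that the second can be applied.

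First I would fix an arbitrary $\sigma \in \cC(\rho, \cS)$, so that by definition $\sigma_S = \rho_S$ for every $S \in \cS$. Since $\rho$ is a symmetric state, Lemma~\ref{lem:rdm} guarantees that each marginal $\rho_S$ with $|S| \ge 2$ is itself a symmetric state. Hence $\sigma_S = \rho_S$ is symmetric for every $S \in \cS$ with $|S| \ge 2$. Marginals on singletons carry no symmetry constraint, since the swap condition on a single subsystem is vacuous, and a one-qubit state is trivially symmetric; these will simply be discarded.

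Next I would pass to the sub-collection $\cS' := \{S \in \cS \mid |S| \ge 2\}$ and argue that the hypergraph $G' = ([n], \cS')$ is still connected. This is the one point requiring a small amount of care: a singleton hyperedge $\{i\}$ can never connect a vertex of $X$ to a vertex of $Y$ across any bipartition of $[n]$ into nonempty sets $X$ and $Y$, because $\{i\}$ meets only one of the two parts. Consequently singletons are invisible to the connectivity criterion, and deleting them from $\cS$ cannot destroy connectivity; thus $G'$ inherits connectivity from $G$ (for $n \ge 2$; the case $n = 1$ is trivially symmetric).

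Finally I would invoke Lemma~\ref{lem:rdm_coverse} with the collection $\cS'$: every marginal $\sigma_S$, $S \in \cS'$, is a symmetric state, and the hypergraph $([n], \cS')$ is connected, so the lemma yields that $\sigma$ is a symmetric state. As $\sigma$ was an arbitrary element of $\cC(\rho, \cS)$, this establishes the claim. The only genuine subtlety is the bookkeeping around singleton subsets and the verification that connectivity survives their removal; the substantive work — converting local $\swap$-invariance on overlapping pairs into global $\swap$-invariance by composing transpositions along a connecting path — has already been carried out inside the proof of Lemma~\ref{lem:rdm_coverse}, so no further effort is needed there.
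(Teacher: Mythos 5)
Your proposal is correct and follows essentially the same route as the paper's own proof: transfer symmetry from $\rho_S$ to $\sigma_S$ via $\sigma_S=\rho_S$ and Lemma~\ref{lem:rdm}, then invoke Lemma~\ref{lem:rdm_coverse} using connectivity of $G=([n],\cS)$. Your extra bookkeeping about singleton hyperedges is a harmless refinement the paper silently skips (singletons are trivially symmetric and invisible to the connectivity criterion, so they cause no issue either way).
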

\begin{proof}
    For every $\sigma\in \cC(\rho, \cS)$, $\sigma_{S}=\rho_{S}$ is a symmetric state for every $S\in \cS$ by Lemma~\ref{lem:rdm}. Since $G=([n], \cS)$ is connected, $\sigma$ is also a symmetric state by Lemma~\ref{lem:rdm_coverse}.
\end{proof}
\vspace{0.4cm}

\section{IV. \ The proof of Proposition~1}\label{sec:symmetric_EDL}

\begin{proposition}
For an $n$-qubit entangled symmetric state $\rho$,   $l(\rho)=  \min  \{  k\mid   \rho_{[k]} \text{ is entangled} \}$.  
\end{proposition}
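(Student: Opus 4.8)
The plan is to establish the two inequalities $l(\rho)\ge k_*$ and $l(\rho)\le k_*$ separately, writing $k_*:=\min\{k\mid \rho_{[k]}\text{ is entangled}\}$. First I would record two preliminary facts. Since a symmetric state is either fully separable or genuinely entangled, the hypothesis that $\rho$ is entangled forces it to be genuinely entangled, so that $l(\rho)$ is well defined; moreover a single-party marginal cannot be entangled and $\rho_{[n]}=\rho$ is entangled, whence $2\le k_*\le n$. I would then invoke the characterization recalled before Section~I: $l(\rho)\ge k$ if and only if $\cC(\rho,\cS_{k-1})$ contains a biseparable state, and $l(\rho)\le k$ if and only if $\cC(\rho,\cS_k)$ contains only genuinely entangled states.

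For the lower bound $l(\rho)\ge k_*$, I would exhibit a biseparable state inside $\cC(\rho,\cS_{k_*-1})$. By the definition of $k_*$ the marginal $\rho_{[k_*-1]}$ is not entangled, that is, fully separable, and it is symmetric by Lemma~\ref{lem:rdm}; consequently it admits a decomposition $\rho_{[k_*-1]}=\sum_i p_i\,(\proj{\alpha_i})^{\otimes(k_*-1)}$ (for $k_*-1=1$ this is simply a spectral decomposition). I would then set $\sigma:=\sum_i p_i\,(\proj{\alpha_i})^{\otimes n}$, which is fully separable, hence biseparable, and symmetric. Because $\sigma$ is symmetric, Lemma~\ref{lem:rdm} gives $\sigma_S=\sigma_{[k_*-1]}=\rho_{[k_*-1]}=\rho_S$ for every $S\in\cS_{k_*-1}$, so $\sigma\in\cC(\rho,\cS_{k_*-1})$ and therefore $l(\rho)\ge k_*$.

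For the upper bound $l(\rho)\le k_*$, I would take an arbitrary $\sigma\in\cC(\rho,\cS_{k_*})$ and argue that it is genuinely entangled. Since $k_*\ge 2$, any two vertices of $[n]$ lie in a common $k_*$-subset, so the hypergraph $([n],\cS_{k_*})$ is connected and Lemma~\ref{lemma: connected_symmetrc} forces $\sigma$ to be symmetric. By the symmetric-state dichotomy it then suffices to exclude full separability: were $\sigma$ fully separable, its marginal $\sigma_{[k_*]}$ would be fully separable too, yet $\sigma_{[k_*]}=\rho_{[k_*]}$ (because $[k_*]\in\cS_{k_*}$), which is entangled by the definition of $k_*$ --- a contradiction. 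Hence every element of $\cC(\rho,\cS_{k_*})$ is genuinely entangled, giving $l(\rho)\le k_*$ and, combined with the lower bound, the claimed equality.

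As for where the difficulty concentrates: the argument is short only because Lemmas~\ref{lem:rdm} and~\ref{lemma: connected_symmetrc} carry the structural weight, the former collapsing all $k$-body marginals of a symmetric state to $\rho_{[k]}$, and the latter confining the whole compatibility set over a connected hypergraph to the symmetric subspace, where the fully-separable/genuinely-entangled dichotomy becomes available. The step that demands genuine care is the lower bound, where one must produce an explicit biseparable state with the prescribed marginals; this is possible precisely because full separability of the symmetric marginal $\rho_{[k_*-1]}$ yields the convenient form $\sum_i p_i\,(\proj{\alpha_i})^{\otimes(k_*-1)}$, which then extends verbatim from $k_*-1$ to $n$ qubits.
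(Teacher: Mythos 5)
Your proposal is correct and takes essentially the same route as the paper's proof: the lower bound by extending the fully separable symmetric decomposition $\sum_i p_i\,(\proj{\alpha_i})^{\otimes (k_*-1)}$ of $\rho_{[k_*-1]}$ to the fully separable global state $\sum_i p_i\,(\proj{\alpha_i})^{\otimes n}\in\cC(\rho,\cS_{k_*-1})$, and the upper bound by using connectedness of $([n],\cS_{k_*})$ to confine $\cC(\rho,\cS_{k_*})$ to symmetric states and then invoking the fully-separable/genuinely-entangled dichotomy against the entangled marginal $\sigma_{[k_*]}=\rho_{[k_*]}$. The only difference is presentational: the paper phrases the two bounds as the case statement ``$l(\rho)\le k$ if $\rho_{[k]}$ is entangled, $l(\rho)\ge k+1$ if $\rho_{[k]}$ is separable,'' which is exactly your pair of inequalities.
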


\begin{proof}
We stipulate that $\rho_{[1]}$ is separable, then we only need to prove the following the fact.

 Let $\rho$ be an $n$-qubit  entangled symmetric state,  then 
\begin{equation}\label{eq:case}
 l(\rho)   \begin{cases}
        \leq k, &  \text{if $\rho_{[k]}$ is  entangled};\\
        \geq k+1, & \text{if $\rho_{[k]}$ is  separable.}
    \end{cases}
\end{equation}

 Since $G=([n],\cS_k)$ is connected,    every $\sigma\in \cC(\rho, \cS_k)$ is a symmetric state according to Lemma~\ref{lemma: connected_symmetrc}. Note that $\sigma$ is either fully separable or genuinely entangled. If $\sigma_{[k]}=\rho_{[k]}$ is genuinely entangled, then $\sigma$ must be genuinely entangled. Thus $\cC(\rho, \cS_k)$ contains only genuinely entangled states, and $l(\rho)\leq k$.

  If $\rho_{[k]}$ is fully  separable, then for every $S\in \cS_k$, $\rho_{S}=\sum_{i}p_i\ket{\alpha_i}\ket{\alpha_i}\cdots \ket{\alpha_i}\bra{\alpha_i}\bra{\alpha_i}\cdots\bra{\alpha_i}$, where every $\ket{\alpha_i}\ket{\alpha_i}\cdots \ket{\alpha_i}$ is a fully separable pure state in $\cH_S$. We can find a fully separable state $\sigma=\sum_{i}p_i\ket{\alpha_i}\ket{\alpha_i}\cdots\ket{\alpha_i}\bra{\alpha_i}\bra{\alpha_i}\cdots\bra{\alpha_i}\in \cC(\rho, \cS_k)$,  where $\ket{\alpha_i}\ket{\alpha_i}\cdots \ket{\alpha_i}$ is a fully separable pure state in $\cH_{[n]}$. Thus  $l(\rho)\geq  k+1  $.

Thus, Proposition~\ref{prop:symmetric}  can be directly obtained from Eq.~\eqref{eq:case}.
\end{proof}
\vspace{0.4cm}


The most famous entanglement detection criterion is the so-called
positivity under partial transposition (PPT) criterion \cite{peres1996separability,horodecki2001separability}.
An $n$-qubit symmetric state is called   PPT (respectively, NPT)  if it is positive semidefinite (respectively, non-positive semidefinite) under
 the partial transpose of $\fl{n}{2}$ subsystems \cite{wolfe2014certifying,yu2016separability,tura2018separability}. The $2,3$-qubit symmetric states and the $n$-qubit diagonal symmetric  states are  fully separable if and only if they are PPT \cite{peres1996separability,peres1996separability,horodecki2001separability,eckert2002quantum,yu2016separability}. Based on PPT criterion, we can calculate the EDLs for a wide range of symmetric states. 
Specially, for a diagonal symmetric state, there exists a simple method to determine whether it is PPT or NPT \cite{quesada2017entanglement}.

\begin{lemma}\cite{quesada2017entanglement}\label{Lemma: hankel}
      Let $\rho=\sum_{i=0}^n\lambda_i\ketbra{D_n^i}{D_n^i}$, then $\rho$ is PPT if and only if the two Hankel matrices 
\begin{equation}\label{eq:hankel}
    M_0=\begin{pmatrix}
  p_0 & p_1 & p_2 &\cdots &p_{\fl{n}{2}}\\
  p_1 & p_2 & p_3 &\cdots &p_{\fl{n}{2}+1}\\
  p_2 & p_3 & p_4 &\cdots &p_{\fl{n}{2}+2}\\
   \vdots &  \vdots & \vdots & \ddots & \vdots\\
   p_{\fl{n}{2}} & p_{\fl{n}{2}+1} &   p_{\fl{n}{2}+2} &\ldots &    p_{2\fl{n}{2}}
\end{pmatrix},  \quad
 M_1=\begin{pmatrix}
  p_1 & p_2 & p_3 &\cdots &p_{\fl{n+1}{2}}\\
  p_2 & p_3 & p_4 &\cdots &p_{\fl{n+1}{2}+1}\\
  p_4 & p_5 & p_6 &\cdots &p_{\fl{n}{2}+2}\\
   \vdots &  \vdots & \vdots & \ddots & \vdots\\
   p_{\fl{n+1}{2}} & p_{\fl{n+1}{2}+1} &   p_{\fl{n+1}{2}+2} &\ldots &    p_{2\fl{n+1}{2}-1}
\end{pmatrix}  
\end{equation}
are positive semidefinite, where $p_i=\frac{1}{\binom{n}{i}}\lambda_i$ for $0\leq i\leq n$.  
\end{lemma}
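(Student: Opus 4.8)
The plan is to compute the relevant partial transpose of $\rho$ explicitly and show that it decomposes into a direct sum of Hankel blocks, of which $M_0$ and $M_1$ are the two largest. Since $\rho$ is symmetric, Lemma~\ref{lem:rdm} and permutation invariance imply that the partial transpose over any $\fl{n}{2}$ qubits has the same spectrum, so I may fix the bipartition $A\,|\,B$ with $A=\{1,\dots,k\}$, $k=\fl{n}{2}$, and $B=\overline{A}$. Because $\cR(\rho)\subseteq\cH_{Sym}$, the state is in particular symmetric within $A$ and within $B$, hence supported on $\cH^A_{Sym}\otimes\cH^B_{Sym}$, which is spanned by the product Dicke basis $\{\ket{D_k^a}_A\otimes\ket{D_{n-k}^b}_B\}$; the transpose $T_A$ preserves this subspace.

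First I would expand each $\ket{D_n^i}$ over this basis via the splitting $\ket{D_n^i}=\sum_s\sqrt{\binom{k}{s}\binom{n-k}{i-s}/\binom{n}{i}}\,\ket{D_k^s}_A\ket{D_{n-k}^{i-s}}_B$, substitute into $\rho=\sum_i\lambda_i\ketbra{D_n^i}{D_n^i}$, and apply $T_A$, which sends $\ketbra{D_k^s}{D_k^{s'}}_A\mapsto\ketbra{D_k^{s'}}{D_k^s}_A$. The resulting matrix element between $\ket{D_k^a}_A\ket{D_{n-k}^b}_B$ and $\ket{D_k^c}_A\ket{D_{n-k}^d}_B$ is nonzero only when $a-b=c-d$, so $\rho^{T_A}$ is block diagonal with respect to the conserved quantity $j:=a-b$, the excitation on $A$ minus that on $B$. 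Since all splitting coefficients are positive reals there are no phases to track.

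The key computational step is to show that each block is a Hankel matrix up to a positive congruence. Writing $b=a-j$, $d=c-j$ and $i=a+c-j$, the $(a,c)$ entry of block $j$ equals $\lambda_{a+c-j}\sqrt{\binom{k}{a}\binom{n-k}{a-j}\binom{k}{c}\binom{n-k}{c-j}}/\binom{n}{a+c-j}$. Conjugating by the positive diagonal matrix with entries $\sqrt{\binom{k}{a}\binom{n-k}{a-j}}$, which preserves positive semidefiniteness, turns block $j$ into the Hankel matrix with entries $(p_{a+c-j})$, where $p_m=\lambda_m/\binom{n}{m}$. I would then identify the index ranges and check that the block $j=0$ is exactly $M_0$, while the block $j=-1$ (which for even $n$ coincides with $j=+1$) is exactly $M_1$.

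The final step, and the main bookkeeping obstacle, is to verify that every other block is a contiguous principal submatrix of $M_0$ (when its subscript offset is even) or of $M_1$ (when odd); then $M_0\succeq 0$ and $M_1\succeq 0$ already force all blocks to be positive semidefinite, and conversely these two conditions are necessary since $M_0,M_1$ are themselves blocks. This nesting must be checked separately for $n$ even and $n$ odd, because $|A|$ and $|B|$ differ by one in the odd case, which shifts the correspondence between the smaller blocks and the two master Hankel matrices; carefully tracking these ranges is the only delicate part, the congruence and block decomposition being otherwise routine. Combining everything shows that $\rho^{T_A}$ is positive semidefinite if and only if both $M_0$ and $M_1$ are, which is the claim.
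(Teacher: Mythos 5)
The paper gives no proof of this lemma at all: it is imported verbatim from Ref.~\cite{quesada2017entanglement}, so there is no internal argument to compare against. Your proposal is a correct, self-contained reconstruction, and it is essentially the derivation of the cited reference. All the steps check out. The splitting of $\ket{D_n^i}$ across $A|B$ with $k=\lfloor n/2\rfloor$ has real positive coefficients, so $T_A$ in the computational basis indeed acts as the swap $\ketbra{D_k^s}{D_k^{s'}}\mapsto\ketbra{D_k^{s'}}{D_k^{s}}$ and preserves the support $\cH^A_{Sym}\otimes\cH^B_{Sym}$; the matrix element of $\rho^{T_A}$ between $\ket{D_k^a}\ket{D_{n-k}^b}$ and $\ket{D_k^c}\ket{D_{n-k}^d}$ is nonzero only when $a+d=b+c$, which is your conserved offset $j=a-b=c-d$. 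The ``key computational step'' works because of the commutativity cancellation
\begin{equation*}
\binom{k}{a}\binom{n-k}{c-j}\binom{k}{c}\binom{n-k}{a-j}
=\left[\binom{k}{a}\binom{n-k}{a-j}\right]\left[\binom{k}{c}\binom{n-k}{c-j}\right],
\end{equation*}
so the congruence by ${\rm diag}\bigl(\sqrt{\binom{k}{a}\binom{n-k}{a-j}}\bigr)$ (invertible on the block's index range, where all binomials are positive) turns block $j$ exactly into the Hankel matrix $(p_{a+c-j})$ with $\max(0,j)\le a,c\le\min(k,\,n-k+j)$. The bookkeeping you flag as the delicate part does close: after shifting the index by $\lfloor j/2\rfloor$ (respectively $\lfloor j/2\rfloor+1$ for odd $j$), every even-offset block is a contiguous principal submatrix of $M_0$ and every odd-offset block of $M_1$, in both parities of $n$; one can spot-check, e.g., $n=5$, where the blocks $j=-3,\dots,2$ give $p_3$, the trailing $2\times2$ of $M_0$, $M_1$ itself, $M_0$ itself, the leading $2\times2$ of $M_1$, and $p_2$. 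Necessity is immediate since the $j=0$ and $j=-1$ blocks are $M_0$ and $M_1$ up to positive diagonal congruence, which preserves positive semidefiniteness in both directions. As a side benefit, your derivation makes clear that the third row of $M_1$ as displayed in the lemma contains a typo (it should read $p_3,p_4,p_5,\dots$ rather than $p_4,p_5,p_6,\dots$): the correct object is the pure Hankel matrix $(p_{a+c+1})$.
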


Note that if $\rho$ is a diagonal symmetric state, then $\rho_{[k]}$ is also a diagonal symmetric state for $k\geq 2$. Next, we can provide a simple sufficient and necessary condition for $\rho_{[2]}$ and $\rho_{[3]}$ to be PPT.
\begin{lemma}
\label{lem:diganol_23}
Let $\rho=\sum_{i=0}^n\lambda_i\ketbra{D_n^i}{D_n^i}$, then $\rho_{[2]}$ is PPT if and only if
\begin{equation}\label{eq:l2}
     \left[\sum_{i=0}^{n}(n-i)(n-i-1)\lambda_i\right]\left[\sum_{i=0}^ni(i-1)\lambda_i\right]-\left[\sum_{i=0}^{n}i(n-i)\lambda_i\right]^2\geq 0,  
\end{equation}
and $\rho_{[3]}$  is PPT if and only if 
\begin{equation}\label{eq:l3}
\begin{aligned}
  \left[\sum_{i=0}^{n}(n-i)(n-i-1)(n-i-2)\lambda_i\right] \left[\sum_{i=0}^ni(i-1)(n-i)\lambda_i\right]- \left[\sum_{i=0}^{n}i(n-i)(n-i-1)\lambda_i\right]^2&\geq 0 ,\\ \text{and} \quad \left[\sum_{i=0}^{n}i(n-i)(n-i-1)\lambda_i\right]\left[\sum_{i=0}^ni(i-1)(i-2)\lambda_i\right]-\left[\sum_{i=0}^{n}i(i-1)(n-i)\lambda_i\right]^2&\geq  0.
  \end{aligned}
\end{equation}   
\end{lemma}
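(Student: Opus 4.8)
The plan is to reduce the statement to the Hankel-matrix criterion of Lemma~\ref{Lemma: hankel}, applied to the two- and three-qubit marginals. Since $\rho=\sum_{i=0}^n\lambda_i\ketbra{D_n^i}{D_n^i}$ is diagonal symmetric, the first line of Eq.~\eqref{eq: Dicke_marginal} shows that $\rho_{[k]}$ is again a diagonal symmetric state, $\rho_{[k]}=\sum_{s=0}^k\mu_s\ketbra{D_k^s}{D_k^s}$ with $\mu_s=\sum_{i=0}^n\lambda_i\binom{k}{s}\binom{n-k}{i-s}/\binom{n}{i}$. Feeding this into Lemma~\ref{Lemma: hankel} (with its ambient qubit number set to $k$), the relevant entries become $p_s=\mu_s/\binom{k}{s}=\sum_{i=0}^n\lambda_i\binom{n-k}{i-s}/\binom{n}{i}$, so PPT-ness of $\rho_{[k]}$ is exactly positivity of the two Hankel matrices $M_0,M_1$ built from these $p_s$.

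First I would evaluate the binomial ratios in closed form. A direct cancellation gives, for $k=2$, that $\binom{n-2}{i-s}/\binom{n}{i}$ equals $(n-i)(n-i-1)/[n(n-1)]$, $i(n-i)/[n(n-1)]$, $i(i-1)/[n(n-1)]$ for $s=0,1,2$; and for $k=3$ the analogous expressions with denominator $n(n-1)(n-2)$ and numerators $(n-i)(n-i-1)(n-i-2)$, $i(n-i)(n-i-1)$, $i(i-1)(n-i)$, $i(i-1)(i-2)$ for $s=0,1,2,3$. Hence each $p_s$ equals a common positive prefactor times the weighted sum $\sum_i\lambda_i(\cdots)$ appearing in Eqs.~\eqref{eq:l2}--\eqref{eq:l3}.

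Next I would pin down the sizes of $M_0,M_1$ in Eq.~\eqref{eq:hankel} for these small cases. For $\rho_{[2]}$ (so $\fl{n}{2}=1$), $M_0=\begin{pmatrix}p_0&p_1\\p_1&p_2\end{pmatrix}$ while $M_1=(p_1)$ is $1\times1$; for $\rho_{[3]}$ (so $\fl{n}{2}=1$, $\fl{n+1}{2}=2$), both $M_0=\begin{pmatrix}p_0&p_1\\p_1&p_2\end{pmatrix}$ and $M_1=\begin{pmatrix}p_1&p_2\\p_2&p_3\end{pmatrix}$ are $2\times2$. Because every $p_s$ is a non-negative combination of the $\lambda_i$, all diagonal entries and the $1\times1$ block are automatically non-negative, so positivity of these matrices reduces to the $2\times2$ determinant conditions $p_0p_2-p_1^2\ge0$ (for $\rho_{[2]}$, and for the $M_0$ block of $\rho_{[3]}$) together with $p_1p_3-p_2^2\ge0$ (for the $M_1$ block of $\rho_{[3]}$). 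Substituting the closed forms and clearing the squared positive prefactor turns these determinants into exactly Eq.~\eqref{eq:l2} and the two inequalities of Eq.~\eqref{eq:l3}.

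The computation carries no conceptual difficulty; the part demanding care is the bookkeeping: correctly simplifying the binomial ratios, tracking which common prefactor is factored out of each $p_s$, and---most easily mis-stated---reading off the correct dimensions of the Hankel matrices from the floor functions in Eq.~\eqref{eq:hankel} at the small values $k=2,3$, where $M_0$ and $M_1$ have different shapes. Getting those indices right is what makes the determinant conditions line up precisely with Eqs.~\eqref{eq:l2}--\eqref{eq:l3}.
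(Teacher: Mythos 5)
Your proposal is correct and matches the paper's proof essentially step for step: both apply the Hankel criterion of Lemma~\ref{Lemma: hankel} to the diagonal symmetric marginals $\rho_{[2]}$ and $\rho_{[3]}$, with entries $p_s=\sum_{i=0}^n\lambda_i\binom{n-k}{i-s}/\binom{n}{i}$ obtained from Eq.~\eqref{eq: Dicke_marginal}, simplify the binomial ratios to the polynomial weights appearing in Eqs.~\eqref{eq:l2}--\eqref{eq:l3}, and reduce positivity of the resulting Hankel blocks to the $2\times 2$ determinant conditions. You even make explicit a point the paper leaves implicit---that the $1\times1$ block $M_1=(p_1)$ for $\rho_{[2]}$ and all diagonal entries are automatically non-negative, so the determinant inequalities alone give the ``if and only if''---and your reading of the matrix sizes from the floor functions at $k=2,3$ is exactly right.
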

\begin{proof}
According to Eq.~\eqref{eq: Dicke_marginal}, we have
\begin{equation}
    \rho_{[k]}=\sum_{i=0}^n\lambda_i\ketbra{D_n^i}{D_n^i}_{[k]}=\sum_{i=0}^n\lambda_i\sum_{s}\frac{\binom{k}{s}\binom{n-k}{i-s}}{\binom{n}{i}}\ketbra{D_k^s}{D_k^s}=\sum_s\sum_{i=0}^n\lambda_i\frac{\binom{k}{s}\binom{n-k}{i-s}}{\binom{n}{i}}\ketbra{D_k^s}{D_k^s}.
\end{equation}
Next, we consider $k=2,3$.
\begin{enumerate}[(i)]
    \item When $k=2$, $\rho_{[2]}$ is PPT if and only if
\begin{equation}
    M_0=\begin{pmatrix}
      \sum_{i=0}^n\lambda_i\frac{\binom{n-2}{i}}{\binom{n}{i}} &  \sum_{i=0}^n\lambda_i\frac{\binom{n-2}{i-1}}{\binom{n}{i}} \\
     \sum_{i=0}^n\lambda_i\frac{\binom{n-2}{i-1}}{\binom{n}{i}}& \sum_{i=0}^n\lambda_i\frac{\binom{n-2}{i-2}}{\binom{n}{i}} \\
    \end{pmatrix}\geq 0, \quad M_1=\sum_{i=0}^n\lambda_i\frac{\binom{n-2}{i-1}}{\binom{n}{i}}\geq 0,
\end{equation}
i.e.,
\begin{equation}
   \left[\sum_{i=0}^{n}(n-i)(n-i-1)\lambda_i\right]\left[\sum_{i=0}^ni(i-1)\lambda_i\right]-\left[\sum_{i=0}^{n}i(n-i)\lambda_i\right]^2\geq 0.
\end{equation}
\item When $k=3$, $\rho_{[3]}$ is PPT if and only if
\begin{equation}
    M_0=\begin{pmatrix}
      \sum_{i=0}^n\lambda_i\frac{\binom{n-3}{i}}{\binom{n}{i}} &  \sum_{i=0}^n\lambda_i\frac{\binom{n-3}{i-1}}{\binom{n}{i}} \\
     \sum_{i=0}^n\lambda_i\frac{\binom{n-3}{i-1}}{\binom{n}{i}}& \sum_{i=0}^n\lambda_i\frac{\binom{n-3}{i-2}}{\binom{n}{i}} \\
    \end{pmatrix}\geq 0, \quad
      M_1=\begin{pmatrix}
     \sum_{i=0}^n\lambda_i\frac{\binom{n-3}{i-1}}{\binom{n}{i}}  &   \sum_{i=0}^n\lambda_i\frac{\binom{n-3}{i-2}}{\binom{n}{i}} \\
     \sum_{i=0}^n\lambda_i\frac{\binom{n-3}{i-2}}{\binom{n}{i}}& \sum_{i=0}^n\lambda_i\frac{\binom{n-3}{i-3}}{\binom{n}{i}} \\
    \end{pmatrix}\geq 0, 
\end{equation}
i.e.,
\begin{equation}
\begin{aligned}
  \left[\sum_{i=0}^{n}(n-i)(n-i-1)(n-i-2)\lambda_i\right] \left[\sum_{i=0}^ni(i-1)(n-i)\lambda_i\right]- \left[\sum_{i=0}^{n}i(n-i)(n-i-1)\lambda_i\right]^2&\geq 0,\\ 
  \text{and} \ \left[\sum_{i=0}^{n}i(n-i)(n-i-1)\lambda_i\right]\left[\sum_{i=0}^ni(i-1)(i-2)\lambda_i\right]-\left[\sum_{i=0}^{n}i(i-1)(n-i)\lambda_i\right]^2&\geq 0.
  \end{aligned}
\end{equation}
\end{enumerate}
This completes the proof.
\end{proof}
\vspace{0.4cm}


Next, we give several examples.
\begin{enumerate}[1.]
    \item Consider the Dicke state $\ket{D_n^k}$, where $n\geq 2$ and $1\leq k \leq n-1$. Since  $  \left[\sum_{i=0}^{n}(n-i)(n-i-1)\lambda_i\right]\left[\sum_{i=0}^ni(i-1)\lambda_i\right]-\left[\sum_{i=0}^{n}i(n-i)\lambda_i\right]^2=(n-k)(n-k-1)k(k-1)-k^2(n-k)^2<0$,
we know that $\ketbra{D_n^k}{D_n^k}_{[2]}$ is NPT. Thus we have $l(\ket{D_n^k})=2$.

\item Let
$\rho=\lambda_0\ket{D_n^0}\bra{D_n^0}+\lambda_1\ket{D_n^1}\bra{D_n^1}$, where $n\geq 2$ and $\lambda_1\neq 0$. Since $  \left[\sum_{i=0}^{n}(n-i)(n-i-1)\lambda_i\right]\left[\sum_{i=0}^ni(i-1)\lambda_i\right]-\left[\sum_{i=0}^{n}i(n-i)\lambda_i\right]^2=-(n-1)^2\lambda_1^2<0$, we know that $\rho_{[2]}$ is NPT. Thus we have $l(\rho)=2$.

\item Let
$\rho=\lambda_0\ket{D_n^0}\bra{D_n^0}+\lambda_2\ket{D_n^2}\bra{D_n^2}$, where $n\geq 3$ and $0<\lambda_2\leq \frac{n}{2n-2}$. Since $  \left[\sum_{i=0}^{n}(n-i)(n-i-1)\lambda_i\right]\left[\sum_{i=0}^ni(i-1)\lambda_i\right]-\left[\sum_{i=0}^{n}i(n-i)\lambda_i\right]^2=\left[n(n-1)\lambda_0+(n-2)(n-3)\lambda_2\right]\times 2\lambda_2-\left[2(n-2)\lambda_2)\right]^2=2(n-1)\lambda_2\left[n-(2n-2)\lambda_2\right]\geq 0$, we know that $\rho_{[2]}$ is PPT. Moreover, since $\left[\sum_{i=0}^{n}i(n-i)(n-i-1)\lambda_i\right]\left[\sum_{i=0}^ni(i-1)(i-2)\lambda_i\right]-\left[\sum_{i=0}^{n}i(i-1)(n-i)\lambda_i\right]^2=-4(n-2)^2\lambda_2^2< 0$, we know that $\rho_{[3]}$ is NPT. Thus we have $l(\rho)=3$.

\item Let $\rho=\frac{1}{24}\ketbra{D_4^0}{D_4^0}+\frac{1}{3}\ketbra{D_4^1}{D_4^1}+\frac{1}{12}\ketbra{D_4^2}{D_4^2}+\frac{1}{2}\ketbra{D_4^3}{D_4^3}+\frac{1}{24}\ketbra{D_4^4}{D_4^4}$. Since 
$  \left[\sum_{i=0}^{4}(4-i)(4-i-1)\lambda_i\right]$
$\times\left[\sum_{i=0}^4i(i-1)\lambda_i\right]-\left[\sum_{i=0}^{4}i(4-i)\lambda_i\right]^2=(12\times \frac{1}{24}+6\times \frac{1}{3}+2\times \frac{1}{12})(2\times \frac{1}{12}+6\times \frac{1}{2}+12\times \frac{1}{24})-(3\times\frac{1}{3}+4\times \frac{1}{12}+3\times \frac{1}{2})^2=\frac{7}{4}>0$,
and $\left[\sum_{i=0}^{4}i(4-i)(4-i-1)\lambda_i\right]\left[\sum_{i=0}^4i(i-1)(i-2)\lambda_i\right]-\left[\sum_{i=0}^{4}i(i-1)(4-i)\lambda_i\right]^2=(6\times \frac{1}{3}+4\times \frac{1}{12})(6\times \frac{1}{2}+24\times \frac{1}{24})-(4\times\frac{1}{12}+6\times \frac{1}{2})^2=-\frac{16}{9}<0$, we know that $\rho_{[2]}$ is PPT and $\rho_{[3]}$ is NPT. Thus
we have  $l(\rho)=3$.

\item Let  $\rho=\frac{1}{2}\ketbra{D_4^0}{D_4^0}+\frac{1}{3}\ketbra{D_4^1}{D_4^1}+\frac{1}{12}\ketbra{D_4^2}{D_4^2}+\frac{1}{24}\ketbra{D_4^3}{D_4^3}+\frac{1}{24}\ketbra{D_4^4}{D_4^4}$. Since 
$  \left[\sum_{i=0}^{4}(4-i)(4-i-1)(4-i-2)\lambda_i\right]$
$ \left[\sum_{i=0}^4i(i-1)(4-i)\lambda_i\right]- \left[\sum_{i=0}^{4}i(4-i)(4-i-1)\lambda_i\right]^2=(24\times \frac{1}{2}+6\times \frac{1}{3})(4\times \frac{1}{12}+6\times \frac{1}{24})-(6\times\frac{1}{3}+4\times \frac{1}{12})^2=\frac{49}{18}>0$,
and $\left[\sum_{i=0}^{4}i(4-i)(4-i-1)\lambda_i\right]\left[\sum_{i=0}^4i(i-1)(i-2)\lambda_i\right]-\left[\sum_{i=0}^{4}i(i-1)(4-i)\lambda_i\right]^2=(6\times \frac{1}{3}+4\times \frac{1}{12})(6\times \frac{1}{24}+24\times \frac{1}{24})-(4\times\frac{1}{12}+6\times \frac{1}{24})^2=\frac{371}{144}>0$, we know that $\rho_{[3]}$ is PPT. However,
\begin{equation}
    M_0=\begin{pmatrix}
        \frac{1}{2} & \frac{1}{12} & \frac{1}{72}\\
         \frac{1}{12} & \frac{1}{72} &\frac{1}{96}\\
 \frac{1}{72} &\frac{1}{96} & \frac{1}{24}    
    \end{pmatrix}<0,
\end{equation}
which implies that $\rho$ is NPT. Thus we have  $l(\rho)=4$.

\item  Let   $\rho=\frac{1}{2}\ketbra{{\ghz}_3}{{\ghz}_3}+\frac{1}{2}\ketbra{D_3^1}{D_3^1}$, where $\ket{{\ghz}_3}=\frac{1}{\sqrt{2}}(\ket{D_3^0}+\ket{D_3^3})$. Consider $\rho_{[2]}=\frac{5}{12}\ketbra{D_2^0}{D_2^0}+\frac{1}{3}\ketbra{D_2^1}{D_2^1}+\frac{1}{4}\ketbra{D_2^2}{D_2^2}$. Since $\frac{5}{12}\times \frac{1}{4}-\frac{1}{36}=\frac{11}{144}>0$, $\rho_{[2]}$ is PPT. Moreover, since $\rho$ is NPT, we have $l(\rho)=3$. 
\end{enumerate}


\section{V. \ The proof of Proposition 2}

Firstly, we show that
$L(\ket{D_n^k})=2$. Note that $L(\ket{D_n^k})=2$ can be also obtained from Ref.~\cite{chen2012correlations}, which shows that $\ket{D_n^k}$ is the unique ground state of a 2-local Hamiltonian. Here we give a different proof. This method will be applied to calculate the SDLs of more general diagonal symmetric states.

\begin{lemma}\cite{chen2012correlations}
  For all $1\leq k\leq n-1$ and $n\geq 2$, we have 
    $L(\ket{D_n^k})=2$. 
 \end{lemma}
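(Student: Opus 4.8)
The plan is to prove the two inequalities $L(\ket{D_n^k})\ge 2$ and $L(\ket{D_n^k})\le 2$ separately. The lower bound is immediate from Lemma~\ref{lemma:Lrho1}: the single-qubit marginal of $\ket{D_n^k}$ is $\rho_{\{i\}}=\tfrac{n-k}{n}\ketbra{0}{0}+\tfrac{k}{n}\ketbra{1}{1}$, which has rank $2$ for every $i$ whenever $1\le k\le n-1$. Since $L(\rho)=1$ would require all but at most one of the single-qubit marginals to have rank $1$, we conclude $L(\ket{D_n^k})\ge 2$.

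For the upper bound I would take an arbitrary $\sigma\in\cC(\ketbra{D_n^k}{D_n^k},\cS_2)$ and show $\sigma=\ketbra{D_n^k}{D_n^k}$. First, since the hypergraph $([n],\cS_2)$ is the complete graph and hence connected, Lemma~\ref{lemma: connected_symmetrc} forces $\sigma$ to be a symmetric state; I may therefore expand it in the Dicke basis as $\sigma=\sum_{i,j=0}^{n}a_{ij}\ketbra{D_n^i}{D_n^j}$ with $A=(a_{ij})\succeq 0$ and $\mathrm{Tr}\,A=1$. The crucial observation is that the collective spin operator $J_z:=\tfrac12\sum_{i=1}^{n}Z_i$ has both of its first two moments fixed by the available data: $\langle J_z\rangle$ depends only on the one-body marginals, while $\langle J_z^2\rangle$ depends only on the two-body marginals through the identity $J_z^2=\tfrac14\big(nI+\sum_{i\ne j}Z_iZ_j\big)$. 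Hence $\langle J_z\rangle_\sigma$ and $\langle J_z^2\rangle_\sigma$ must equal their values on $\ket{D_n^k}$.

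Now I would exploit that $\ket{D_n^k}$ is an eigenvector of $J_z$ with eigenvalue $k-n/2$, so $\langle J_z\rangle_{D_n^k}=k-n/2$ and $\langle J_z^2\rangle_{D_n^k}=(k-n/2)^2$, i.e. $J_z$ has zero variance. In the Dicke basis $J_z$ acts diagonally with the \emph{distinct} eigenvalues $d_i=i-n/2$, so matching both moments yields $\sum_i a_{ii}\,(d_i-(k-n/2))^2=\langle J_z^2\rangle_\sigma-2(k-n/2)\langle J_z\rangle_\sigma+(k-n/2)^2=0$. Because $a_{ii}\ge 0$ and the $d_i$ are distinct, every term with $i\ne k$ must vanish, forcing $a_{ii}=0$ for $i\ne k$ and $a_{kk}=1$; positivity of $A$ then annihilates the entire $i$-th row and column whenever $a_{ii}=0$, leaving $A=\ketbra{k}{k}$ and hence $\sigma=\ketbra{D_n^k}{D_n^k}$. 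Combining the two bounds gives $L(\ket{D_n^k})=2$.

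The conceptual heart of the argument — and the step I expect to be the main obstacle to spot — is realizing that a single second-order collective moment ($J_z^2$, a two-local observable) already certifies uniqueness: once symmetry restricts $\sigma$ to the non-degenerate spectrum of $J_z$, the vanishing variance localizes $\sigma$ on one Dicke vector. This is precisely the feature that generalizes: for a diagonal symmetric state whose $J_z$-distribution is spread over several values, one expects to need higher collective moments $\langle J_z^m\rangle$, which are $m$-local and hence require $m$-body marginals, thereby raising the SDL — exactly the mechanism to be used for the more general diagonal symmetric states.
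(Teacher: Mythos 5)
Your proof is correct, and its skeleton matches the paper's: both arguments first reduce an arbitrary $\sigma\in\cC(\ketbra{D_n^k}{D_n^k},\cS_2)$ to a symmetric state via Lemma~\ref{lemma: connected_symmetrc}, expand $\sigma=\sum_{i,j}a_{i,j}\ketbra{D_n^i}{D_n^j}$ in the Dicke basis, show the diagonal of $A=(a_{i,j})$ must concentrate at $i=k$, and finish with the same positive-semidefiniteness fact that a vanishing diagonal entry annihilates its row and column. Where you genuinely differ is in how the concentration is forced. The paper substitutes the explicit two-body marginal formulas of Eq.~\eqref{eq: Dicke_marginal}, extracts the three scalar equations for $s=0,1,2$, massages them into $\sum_i a_{i,i}\,i=k$ and $\sum_i a_{i,i}\,i^2=k^2$ (plus their $(n-i)$ counterparts), and invokes the equality case of the Cauchy--Schwarz inequality. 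You instead note that the $\cS_2$-constraints fix the first two moments of the collective operator $J_z$ (since $J_z$ is one-local and $J_z^2$ is two-local, and one-body marginals are determined by two-body ones), and that $\ket{D_n^k}$ is a zero-variance eigenstate, so $\sum_i a_{i,i}\bigl(d_i-(k-\tfrac{n}{2})\bigr)^2=0$ with $d_i=i-\tfrac{n}{2}$ distinct forces $a_{i,i}=\delta_{i,k}$. These are the same equations in disguise --- the paper's Cauchy--Schwarz equality condition is precisely your vanishing-variance condition --- but your packaging buys three things: it bypasses the combinatorial marginal computation entirely, it never needs the off-diagonal matching conditions that the paper writes down (neither proof actually uses them beyond PSD), and it makes transparent the mechanism behind Proposition~\ref{pro:symmetric_SDL}, namely that diagonal symmetric states spread over several Dicke levels require higher moments $\langle J_z^m\rangle$, which are $m$-local and hence demand $m$-body marginals. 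A minor bonus on rigor: you establish $L(\ket{D_n^k})\ge 2$ explicitly from Lemma~\ref{lemma:Lrho1} (all single-qubit marginals have rank $2$ for $1\le k\le n-1$), a step the paper's proof leaves implicit when it concludes $L(\ket{D_n^k})=2$ from uniqueness in $\cC(\ketbra{D_n^k}{D_n^k},\cS_2)$ alone.
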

 \begin{proof}
 Since $G=([n],\cS_2)$ is connected,  every $\sigma\in \cC(\ket{D_n^k}, \cS_2)$ is a symmetric state according to Lemma~\ref{lemma: connected_symmetrc}.
 We can   
write  $\sigma$  as
\begin{equation}
    \sigma=\sum_{i,j=0}^{n}a_{i,j}\ketbra{D_n^i}{D_n^j},
\end{equation}
where $A=(a_{i,j})_{0\leq i,j\leq n}$ is positive semidefinite and ${\sf Tr}(A)=1$. 
According to Eq.~\eqref{eq: Dicke_marginal}, we have
\begin{equation}\label{eq: Dicke_marginal_2}
\begin{aligned}
    \ketbra{D_n^i}{D_n^i}_{[2]}&=\sum_{s}\frac{\binom{2}{s}\binom{n-2}{i-s}}{\binom{n}{i}}\ketbra{D_2^s}{D_2^s},  \quad  \forall \,0\leq i\leq n; \\
\ketbra{D_n^i}{D_n^j}_{[2]}&=\sum_{s}\frac{\binom{n-2}{i-s}\sqrt{\binom{2}{s}\binom{2}{j-i+s}}}{\sqrt{\binom{n}{i}\binom{n}{j}}}
\ketbra{D_2^s}{D_2^{j-i+s}},  \quad \forall \, 0\leq i< j\leq n;\\
\ketbra{D_n^j}{D_n^i}_{[2]}&=\sum_{s}\frac{\binom{n-2}{i-s}\sqrt{\binom{2}{s}\binom{2}{j-i+s}}}{\sqrt{\binom{n}{i}\binom{n}{j}}}
\ketbra{D_2^{j-i+s}}{D_2^{s}},  \quad \forall \,0\leq i< j\leq n.
\end{aligned}
\end{equation}
Since $\sigma\in \cC(\ket{D_n^k}, \cS_2)$, we have $\sigma_{[2]}= \ketbra{D_n^k}{D_n^k}_{[2]}$, i.e.,
\begin{equation}\label{eq:sum_dicke}
   \sum_{i,j=0}^{n}a_{i,j}\ketbra{D_n^i}{D_n^j}_{[2]}=\sum_{i=0}^na_{i,i}\ketbra{D_n^i}{D_n^i}_{[2]}+\sum_{0\leq i< j\leq n}a_{i,j}\ketbra{D_n^i}{D_n^j}_{[2]}+\sum_{0\leq i< j\leq n}a_{j,i}\ketbra{D_n^j}{D_n^i}_{[2]}=\ketbra{D_n^k}{D_n^k}_{[2]}.
\end{equation}
Substituting Eq.~\eqref{eq: Dicke_marginal_2} into Eq.~\eqref{eq:sum_dicke}, we have

\begin{equation}\label{eq:non_off}
\begin{aligned}
 &\sum_{i=0}^na_{i,i}\sum_{s}\frac{\binom{2}{s}\binom{n-2}{i-s}}{\binom{n}{i}}\ketbra{D_2^s}{D_2^s}+\sum_{0\leq i< j\leq n}a_{i,j}\sum_{s}\frac{\binom{n-2}{i-s}\sqrt{\binom{2}{s}\binom{2}{j-i+s}}}{\sqrt{\binom{n}{i}\binom{n}{j}}}
\ketbra{D_2^{s}}{D_2^{j-i+s}}\\
&+\sum_{0\leq i< j\leq n}a_{j,i}\sum_{s}\frac{\binom{n-2}{i-s}\sqrt{\binom{2}{s}\binom{2}{j-i+s}}}{\sqrt{\binom{n}{i}\binom{n}{j}}}
\ketbra{D_2^{j-i+s}}{D_2^{s}}=\sum_{s}\frac{\binom{2}{s}\binom{n-2}{k-s}}{\binom{n}{k}}\ketbra{D_2^s}{D_2^s}.
 \end{aligned}   
\end{equation}
{Since there are no off-diagonal elements on the right side of Eq.~\eqref{eq:non_off} in the Dicke basis $\{\ket{D_2^s}\}_{s=0}^2$}, we have
\begin{equation}
    \sum_{0\leq i< j\leq n}a_{i,j}\sum_{s}\frac{\binom{n-2}{i-s}\sqrt{\binom{2}{s}\binom{2}{j-i+s}}}{\sqrt{\binom{n}{i}\binom{n}{j}}}
\ketbra{D_2^{s}}{D_2^{j-i+s}}=
\sum_{0\leq i< j\leq n}a_{j,i}\sum_{s}\frac{\binom{n-2}{i-s}\sqrt{\binom{2}{s}\binom{2}{j-i+s}}}{\sqrt{\binom{n}{i}\binom{n}{j}}}
\ketbra{D_2^{j-i+s}}{D_2^{s}}=\textbf{0},
\end{equation}
and
\begin{equation}\label{eq:dig_equal_2}
\sum_{i=0}^na_{i,i}\sum_{s}\frac{\binom{2}{s}\binom{n-2}{i-s}}{\binom{n}{i}}\ketbra{D_2^s}{D_2^s}=\sum_{s}\frac{\binom{2}{s}\binom{n-2}{k-s}}{\binom{n}{k}}\ketbra{D_2^s}{D_2^s}.
\end{equation}
Eq.~\eqref{eq:dig_equal_2} is equivalent to
\begin{equation}\label{eq:dig_equal_23}
\sum_{s}\sum_{i=0}^na_{i,i}\frac{\binom{2}{s}\binom{n-2}{i-s}}{\binom{n}{i}}\ketbra{D_2^s}{D_2^s}=\sum_{s}\frac{\binom{2}{s}\binom{n-2}{k-s}}{\binom{n}{k}}\ketbra{D_2^s}{D_2^s}.
\end{equation}
Then we have 
\begin{equation}
\sum_{i=0}^na_{i,i}\frac{\binom{2}{s}\binom{n-2}{i-s}}{\binom{n}{i}}=\frac{\binom{2}{s}\binom{n-2}{k-s}}{\binom{n}{k}}, \quad \forall \,0\leq s\leq 2.
\end{equation}
When $s=0,1,2$ respectively,   we have
\begin{equation}
   \sum_{i=0}^na_{i,i}\frac{\binom{n-2}{i}}{\binom{n}{i}}=\frac{\binom{n-2}{k}}{\binom{n}{k}}, \quad   \sum_{i=0}^na_{i,i}\frac{\binom{n-2}{i-1}}{\binom{n}{i}}=\frac{\binom{n-2}{k-1}}{\binom{n}{k}}, \quad
  \sum_{i=0}^na_{i,i}\frac{\binom{n-2}{i-2}}{\binom{n}{i}}=\frac{\binom{n-2}{k-2}}{\binom{n}{k}} .
\end{equation}
These three equations can be simplified into the following three equations:
\begin{align}
   &\sum_{i=0}^{n}a_{i,i}(n-i)(n-i-1)=(n-k)(n-k-1), \label{eq:1} \\
   &\sum_{i=0}^{n}a_{i,i}i(n-i)=k(n-k), \label{eq:2}\\
  & \sum_{i=0}^{n}a_{i,i}i(i-1)=k(k-1) . \label{eq:3}
\end{align}

Next,  by adding both sides of Eq.~\eqref{eq:1} and Eq.~\eqref{eq:2}, and dividing both sides by $n-1$  (i.e. (Eq.~\eqref{eq:1} + Eq.~\eqref{eq:2})/$(n-1)$),  we obtain the following equation,
\begin{equation}\label{eq:n-k}
    \sum_{i=0}^{n}a_{i,i}(n-i)=(n-k).
\end{equation}
 The equation (Eq.~\eqref{eq:2} + Eq.~\eqref{eq:3})/$(n-1)$ is
\begin{equation}\label{eq:k}
    \sum_{i=0}^{n}a_{i,i}i=k.
\end{equation}
 The equation Eq.~\eqref{eq:1} + Eq.~\eqref{eq:n-k} is
\begin{equation}\label{eq:(n-k)^2}
    \sum_{i=0}^{n}a_{i,i}(n-i)^2=(n-k)^2.
\end{equation}
 The equation Eq.~\eqref{eq:3} + Eq.~\eqref{eq:k} is
\begin{equation}\label{eq:k^2}
    \sum_{i=0}^{n}a_{i,i}i^2=k^2.
\end{equation}
Then 
\begin{equation}
  k^2(n-k)^2=\left[\sum_{i=0}^{n}a_{i,i}i^2\right]\left[\sum_{i=0}^{n}a_{i,i}(n-i)^2\right]\geq 
  \left[\sum_{i=0}^{n}a_{i,i}i(n-i)\right]^2=k^2(n-k)^2
\end{equation}
where the inequality is the Cauchy–Schwarz inequality, and the last equation is obtained from Eq.~\eqref{eq:2}.   Since the Cauchy–Schwarz inequality has the``=" sign, there must exist a real number $s\neq 0$, such that
\begin{equation}
  \sqrt{a_{i,i}}i= s\sqrt{a_{i,i}}(n-i), \quad  \forall \,0\leq i\leq n. 
\end{equation}
Then $a_{0,0}=a_{n,n}=0$, and there exists only one nonzero $a_{j,j}$ for all $1\leq j\leq n-1$ (i.e. $a_{j,j}\neq 0$, and $a_{i,i}=0$ for all $1\leq i\neq j\leq n-1$). Moreover, since ${\sf Tr}(A)=1$, we have $\sum_{i=0}^na_{i,i}=1$. It implies that $a_{j,j}=1$, and $a_{i,i}=0$ for all $0\leq i\neq j\leq n$. By using Eq.~\eqref{eq:k}, we obtain that $j=k$, that is $a_{k,k}=1$, and $a_{i,i}=0$ for all $0\leq i\neq k\leq n$. Since $A$ is positive semidefinite, it must have $a_{i,j}=0$ for all $0\leq i\neq j\leq n$ ({this is obtained from the property of a positive semidefinite matrix  $A=(a_{i,j})$:
if $a_{i,i}=0$, then $a_{i,j}=a_{j,i}=0$ for every $j$}). Thus $\sigma=\ketbra{D_n^k}{D_n^k}$.

Above all,  the  compatibility set $\cC(\ket{D_n^k}, \cS_2)$ contains only $\ket{D_n^k}$, which implies
 $L(\ket{D_n^k})=2$.
\end{proof}
\vspace{0.4cm}

In order to prove Proposition~\ref{pro:symmetric_SDL}, we need to give several Lemmas.

\begin{lemma}\label{lemma:nonhom_general}
Let $\rho=\sum_{i=0}^{n}\lambda_i\ket{D_n^i}\bra{D_n^i}$, and $1\leq k\leq n$. If the following nonhomogeneous linear equations  respect to $(a_{0,0},a_{1,1},\ldots,a_{n,n})$ $( a_{i,i}\geq 0$ for $0\leq i\leq n)$ 
\begin{equation}\label{eq:nonhomo_general}
  \sum_{i=0}^na_{i,i}\frac{\binom{k}{s}\binom{n-k}{i-s}}{\binom{n}{i}}=\sum_{i=0}^n\lambda_i\frac{\binom{k}{s}\binom{n-k}{i-s}}{\binom{n}{i}},  \quad \forall \, 0\leq s\leq k,  
\end{equation}
have a solution $(a_{0,0},a_{1,1},\ldots,a_{n,n})\neq (\lambda_0,\lambda_1,\ldots,\lambda_n)$, then $L(\rho)\geq k+1$.
\end{lemma}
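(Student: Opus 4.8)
The plan is to turn the hypothesized algebraic solution into an explicit competitor state and then invoke the compatibility-set criterion recalled at the start of this Supplemental Material: namely, that $L(\rho)\geq k+1$ if and only if $\cC(\rho,\cS_{k})$ contains a state different from $\rho$. Thus it suffices to exhibit a single $\sigma\in\cC(\rho,\cS_k)$ with $\sigma\neq\rho$.

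First, given a solution $(a_{0,0},a_{1,1},\ldots,a_{n,n})\neq(\lambda_0,\ldots,\lambda_n)$ with $a_{i,i}\geq 0$, I would define the diagonal symmetric operator $\sigma:=\sum_{i=0}^n a_{i,i}\ketbra{D_n^i}{D_n^i}$. To check that $\sigma$ is a legitimate density matrix, positivity is immediate from $a_{i,i}\geq 0$, while the normalization ${\sf Tr}(\sigma)=1$ follows by summing Eq.~\eqref{eq:nonhomo_general} over all $s\in\{0,\ldots,k\}$ and applying the Vandermonde identity $\sum_{s}\binom{k}{s}\binom{n-k}{i-s}=\binom{n}{i}$, which collapses each side to $\sum_i a_{i,i}$ and $\sum_i\lambda_i=1$ respectively; hence $\sum_i a_{i,i}=1$.

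Next, I would verify $\sigma\in\cC(\rho,\cS_k)$. Since both $\rho$ and $\sigma$ are symmetric, Lemma~\ref{lem:rdm} reduces the requirement $\sigma_S=\rho_S$ for all $S\in\cS_k$ to the single identity $\sigma_{[k]}=\rho_{[k]}$. Applying the first line of the marginal formula Eq.~\eqref{eq: Dicke_marginal} term by term, both $\sigma_{[k]}$ and $\rho_{[k]}$ are diagonal in the basis $\{\ket{D_k^s}\}_{s=0}^k$, with the coefficient of $\ketbra{D_k^s}{D_k^s}$ equal to $\sum_i a_{i,i}\binom{k}{s}\binom{n-k}{i-s}/\binom{n}{i}$ and $\sum_i\lambda_i\binom{k}{s}\binom{n-k}{i-s}/\binom{n}{i}$, respectively. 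Equations~\eqref{eq:nonhomo_general} assert exactly that these coefficients agree for every $s$, so $\sigma_{[k]}=\rho_{[k]}$ and therefore $\sigma\in\cC(\rho,\cS_k)$. Finally, because $(a_{i,i})\neq(\lambda_i)$ and the Dicke states are linearly independent, $\sigma\neq\rho$, which gives $L(\rho)\geq k+1$.

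I do not anticipate a serious obstacle: the construction is essentially forced by reading off the diagonal coefficients from the marginal formula, and the only computation requiring care is the normalization, which the Vandermonde identity settles cleanly. The one point worth flagging is the restriction to \emph{diagonal} symmetric competitors: off-diagonal coefficients $a_{i,j}$ with $i\neq j$ would generate off-diagonal contributions to $\sigma_{[k]}$ that are absent from the diagonal $\rho_{[k]}$, so setting them to zero is both the natural and the simplest way to guarantee a compatible state, which is why the hypothesis of the lemma concerns only the diagonal entries.
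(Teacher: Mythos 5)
Your proposal is correct and follows essentially the same route as the paper's proof: both construct the diagonal symmetric competitor $\sigma=\sum_{i=0}^n a_{i,i}\ketbra{D_n^i}{D_n^i}$, establish normalization by summing Eq.~\eqref{eq:nonhomo_general} over $s$ via the Vandermonde identity, and reduce membership in $\cC(\rho,\cS_k)$ to $\sigma_{[k]}=\rho_{[k]}$ using Lemma~\ref{lem:rdm} together with the Dicke-marginal formula. Your closing remark on why diagonal competitors suffice is a sensible clarification, though the paper does not need it since the lemma only asserts a lower bound.
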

\begin{proof}
Note that 
$(\lambda_0,\lambda_1,\ldots,\lambda_n)$ is always a solution of the nonhomogeneous linear equations in Eq.~\eqref{eq:nonhomo_general}. 
Assume there exists a solution $(a_{0,0},a_{1,1},\ldots,a_{n,n})\neq (\lambda_0,\lambda_1,\ldots,\lambda_n)$ for the nonhomogeneous linear equations in Eq.~\eqref{eq:nonhomo_general}.
Since
\begin{equation}
    \sum_{s=0}^k\frac{\binom{k}{s}\binom{n-k}{i-s}}{\binom{n}{i}}=   \sum_{s=0}^i\frac{\binom{k}{s}\binom{n-k}{i-s}}{\binom{n}{i}}=\frac{\binom{n}{i}}{\binom{n}{i}}=1.
\end{equation}
we have
\begin{equation}
  \sum_{i=0}^na_{i,i}=\sum_{i=0}^na_{i,i}\sum_{s=0}^k\frac{\binom{k}{s}\binom{n-k}{i-s}}{\binom{n}{i}}=\sum_{s=0}^k\sum_{i=0}^na_{i,i}\frac{\binom{k}{s}\binom{n-k}{i-s}}{\binom{n}{i}}=\sum_{s=0}^k\sum_{i=0}^n\lambda_i\frac{\binom{k}{s}\binom{n-k}{i-s}}{\binom{n}{i}}=\sum_{i=0}^n\lambda_i\sum_{s=0}^k\frac{\binom{k}{s}\binom{n-k}{i-s}}{\binom{n}{i}}=\sum_{i=0}^n\lambda_i=1. 
\end{equation}
Let $\sigma=\sum_{i=0}^{n}a_{i,i}\ket{D_n^i}\bra{D_n^i}$, then $\sigma\neq \rho$. We claim that $\sigma\in \cC(\rho,\cS_k)$. 

By Lemma~\ref{lem:rdm}, we only need to show  $\sigma_{[k]}=\rho_{[k]}$. According to Eq.~\eqref{eq: Dicke_marginal}, we have
\begin{equation}
      \ketbra{D_n^i}{D_n^i}_{[k]}=\sum_{s}\frac{\binom{k}{s}\binom{n-k}{i-s}}{\binom{n}{i}}\ketbra{D_k^s}{D_k^s}=\sum_{s=0}^k\frac{\binom{k}{s}\binom{n-k}{i-s}}{\binom{n}{i}}\ketbra{D_k^s}{D_k^s},  \quad  \forall \,0\leq i\leq n.
\end{equation}
Then  $\sigma_{[k]}=\rho_{[k]}$ is equivalent to 
\begin{equation}
\sum_{i=0}^na_{i,i}\sum_{s=0}^k\frac{\binom{k}{s}\binom{n-k}{i-s}}{\binom{n}{i}}\ketbra{D_k^s}{D_k^s}=\sum_{i=0}^n\lambda_i\sum_{s=0}^k\frac{\binom{k}{s}\binom{n-k}{i-s}}{\binom{n}{i}}\ketbra{D_k^s}{D_k^s}, 
\end{equation}
i.e.,
\begin{equation}\label{eq:hold}
\sum_{s=0}^k\sum_{i=0}^na_{i,i}\frac{\binom{k}{s}\binom{n-k}{i-s}}{\binom{n}{i}}\ketbra{D_k^s}{D_k^s}=      \sum_{s=0}^k\sum_{i=0}^n\lambda_i\frac{\binom{k}{s}\binom{n-k}{i-s}}{\binom{n}{i}}\ketbra{D_k^s}{D_k^s}.
\end{equation}
By Eq.~\eqref{eq:nonhomo_general}, we know that Eq.~\eqref{eq:hold}  always holds.
Thus $\rho\neq \sigma\in\cC(\rho,\cS_k)$, which means $L(\rho)\geq k+1$. 
\end{proof}
\vspace{0.4cm}

For a state $\rho=\sum_{i=0}^{n}\lambda_i\ket{D_n^i}\bra{D_n^i}$,
 if the nonhomogeneous linear equations of Eq.~\eqref{eq:nonhomo_general} have only one solution $(a_{0,0},a_{1,1},\ldots,a_{n,n})= (\lambda_0,\lambda_1,\ldots,\lambda_n)$, then   $L(\rho)\leq k$ is not always true. For example, assume $\lambda_i=0$ for all $1\leq i\leq n-1$ and $\lambda_0,\lambda_n\neq 0$, then $\rho=\lambda_0\ket{D_n^0}\bra{D_n^0}+\lambda_n\ket{D_n^n}\bra{D_n^n}$. When $n\geq 3$, and $k=n-1$, the nonhomogeneous linear equations of Eq.~\eqref{eq:nonhomo_general} have only one solution $(a_{0,0},a_{1,1},\ldots,,a_{n-1,n-1},a_{n,n})= (\lambda_0,0,\ldots,0,\lambda_n)$.  However, the pure state $\sqrt{\lambda_0}\ket{D_n^0}+\sqrt{\lambda_1}\ket{D_n^n} \in \cC(\rho, \cS_{n-1})$, which means that $L(\rho)=n$. However, if $\lambda_i=0$ for $k+1\leq i\leq n$, we would obtain  $L(\rho)\leq k$. See Lemma~\ref{lem:nonhom}.

\begin{lemma}\label{lemma:nonhom_special}
The general solution of the nonhomogeneous linear equations in Eq.~\eqref{eq:nonhomo_general} is
\begin{equation}\label{eq:nonhom_special}
\begin{aligned}
a_{r,r}&=\sum_{i=k+1}^n(-1)^{k-r+1}\binom{i}{k}\binom{k}{r}\frac{i-k}{i-r}s_i+\lambda_r,  \quad \forall \,0\leq r\leq k,\\
a_{r,r}&=s_r+\lambda_r, \quad \forall \,k+1\leq r\leq n,
\end{aligned}
\end{equation}
where $s_r\in \mathbb{R}$ for all $k+1\leq r\leq n$. 
\end{lemma}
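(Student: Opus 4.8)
The plan is to reduce the stated formula to a description of the kernel of a linear system and then pin down that kernel by polynomial interpolation. First I would subtract the obvious particular solution: since $(\lambda_0,\dots,\lambda_n)$ always solves Eq.~\eqref{eq:nonhomo_general}, it suffices to describe the solutions $(x_0,\dots,x_n)$, with $x_r:=a_{r,r}-\lambda_r$, of the associated homogeneous system $\sum_{i=0}^n x_i\,\frac{\binom{k}{s}\binom{n-k}{i-s}}{\binom{n}{i}}=0$ for $0\le s\le k$. The coefficient here satisfies the hypergeometric symmetry $\frac{\binom{k}{s}\binom{n-k}{i-s}}{\binom{n}{i}}=\frac{\binom{i}{s}\binom{n-i}{k-s}}{\binom{n}{k}}$, a direct factorial cancellation. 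Clearing the $i$-independent constant $\binom{n}{k}$, the homogeneous system becomes $\sum_{i=0}^n x_i\,P_s(i)=0$ for $0\le s\le k$, where $P_s(i):=\binom{i}{s}\binom{n-i}{k-s}$.

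The key observation is that, for each fixed $s$, $P_s(i)$ is a polynomial in $i$ of degree exactly $k$ (degree $s$ from $\binom{i}{s}$ and degree $k-s$ from $\binom{n-i}{k-s}$). Hence the vector $v_i:=(P_0(i),\dots,P_k(i))$ depends polynomially on $i$, and Lagrange interpolation at the $k+1$ nodes $0,1,\dots,k$ applies componentwise: for every $i\ge k+1$ one has $v_i=\sum_{r=0}^k L_r(i)\,v_r$, where $L_r(i):=\prod_{0\le r'\le k,\,r'\ne r}\frac{i-r'}{r-r'}$; this is valid because each $P_s$ has degree $\le k$ and is therefore reproduced exactly by interpolation through its values at $0,\dots,k$. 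I would then simplify $L_r(i)$ in closed form: writing the numerator as $\prod_{r'\ne r}(i-r')=\big(\prod_{r'=0}^k(i-r')\big)/(i-r)=(i-k)\,k!\,\binom{i}{k}/(i-r)$ and the denominator as $\prod_{r'\ne r}(r-r')=(-1)^{k-r}\,r!\,(k-r)!$ gives $L_r(i)=(-1)^{k-r}\binom{k}{r}\binom{i}{k}\frac{i-k}{i-r}$.

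Reading $v_i=\sum_{r=0}^k L_r(i)v_r$ back as $v_i-\sum_{r=0}^k L_r(i)v_r=0$ exhibits, for each $i\in\{k+1,\dots,n\}$, a homogeneous solution with $x_i=1$, with $x_j=0$ for $j\ge k+1$ and $j\ne i$, and with $x_r=-L_r(i)=(-1)^{k-r+1}\binom{i}{k}\binom{k}{r}\frac{i-k}{i-r}$ for $0\le r\le k$; these are precisely the coefficients in Eq.~\eqref{eq:nonhom_special}. Forming $\sum_{i=k+1}^n s_i$ times these solutions and adding back $(\lambda_r)$ reproduces the claimed general solution. To confirm that nothing is missed, I would count dimensions: the $(k+1)\times(k+1)$ matrix $\big(P_s(r)\big)_{0\le s,r\le k}$ is triangular (its $(s,r)$ entry $\binom{r}{s}\binom{n-r}{k-s}$ vanishes for $r<s$) with nonzero diagonal $\binom{n-s}{k-s}$, so the full coefficient matrix has rank $k+1$ and the solution space has dimension $(n+1)-(k+1)=n-k$. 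The $n-k$ solutions built above are linearly independent, because their components indexed by $i\ge k+1$ are the standard basis vectors, so they span the entire kernel.

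The conceptual heart of the argument is the realization that the marginal coefficients are values of a fixed degree-$k$ polynomial in the Dicke index $i$, which turns the problem into ordinary Lagrange interpolation and delivers the closed-form coefficients essentially for free. The main technical point to execute carefully is exactly this polynomiality-and-degree claim together with the explicit evaluation of $L_r(i)$; the brute-force alternative would be to verify directly that the stated $x_r$ satisfy each of the $k+1$ equations, but that route forces one through a less transparent hypergeometric summation, and I would avoid it in favor of the interpolation argument above.
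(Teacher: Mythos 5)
Your proposal is correct, and it is in fact more complete than the paper's own argument. The paper uses the same skeleton---subtract the particular solution $(\lambda_0,\dots,\lambda_n)$ and describe the kernel of the homogeneous system $\sum_{i=0}^n x_i\binom{k}{s}\binom{n-k}{i-s}/\binom{n}{i}=0$---but then simply asserts the kernel basis ``by direct calculation,'' with no derivation and no verification that the listed solutions exhaust the solution space. You supply both missing pieces. The symmetry $\binom{k}{s}\binom{n-k}{i-s}/\binom{n}{i}=\binom{i}{s}\binom{n-i}{k-s}/\binom{n}{k}$ is a valid factorial cancellation (and both sides vanish together at boundary indices under the paper's convention $\binom{m}{j}=0$ for $j<0$ or $j>m$), so the $s$-th equation becomes $\sum_i x_i P_s(i)=0$ with $P_s(i)=\binom{i}{s}\binom{n-i}{k-s}$ a polynomial of degree $k$ in $i$; Lagrange interpolation at the nodes $0,\dots,k$ then yields the kernel vectors, and your simplification $L_r(i)=(-1)^{k-r}\binom{k}{r}\binom{i}{k}\frac{i-k}{i-r}$ is right (the numerator $\prod_{r'=0}^k(i-r')=(i-k)\,k!\,\binom{i}{k}$ and denominator $(-1)^{k-r}r!\,(k-r)!$ check out, and $i\ge k+1>r$ rules out division by zero), so $-L_r(i)$ reproduces the coefficients in Eq.~\eqref{eq:nonhom_special} exactly. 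Crucially, your rank count---the $(k+1)\times(k+1)$ block $\bigl(P_s(r)\bigr)_{0\le s,r\le k}$ is triangular since $\binom{r}{s}=0$ for $r<s$, with nonzero diagonal $\binom{n-s}{k-s}$ because $k\le n$---certifies that the kernel has dimension exactly $n-k$, and your $n-k$ solutions are independent (their coordinates on indices $k+1,\dots,n$ are standard basis vectors), so they span it; this completeness step is precisely what the phrase ``general solution'' requires and what the paper never argues. The only point I would make explicit in a write-up is that the polynomial expressions $\binom{i}{s}$ and $\binom{n-i}{k-s}$ agree with the combinatorial coefficients, including their zeros, at every integer $0\le i\le n$, which is what licenses identifying the system's coefficients with values of a fixed degree-$k$ polynomial; with that remark added, your interpolation route is a clean and fully rigorous replacement for the paper's unproved computation.
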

\begin{proof}
 Note that  $(\lambda_0,\lambda_1,\ldots,\lambda_n)$ is a solution of the nonhomogeneous linear equations in Eq.~\eqref{eq:nonhomo_general}.  Then we only need to solve the following homogeneous linear equations  respect to $(x_0,x_1,\ldots,x_n)$,
\begin{equation}\label{eq:eqlem}
  \sum_{i=0}^nx_{i}\frac{\binom{k}{s}\binom{n-k}{i-s}}{\binom{n}{i}}=0,  \quad \forall \,0\leq s\leq k.  
\end{equation}
By direct calculation, we obtain the general solution of Eq.~\eqref{eq:eqlem},
\begin{equation}
\begin{aligned}
x_{r}&=\sum_{i=k+1}^n(-1)^{k-r+1}\binom{i}{k}\binom{k}{r}\frac{i-k}{i-r}s_i,  \quad \forall \,0\leq r\leq k,\\
x_{r}&=s_r, \quad \forall \,k+1\leq r\leq n,
\end{aligned}
\end{equation}
where $s_r\in \mathbb{R}$ for all $k+1\leq r\leq n$. Therefore, we  
obtain  the general solution of the nonhomogeneous linear equations in Eq.~\eqref{eq:nonhomo_general}, i.e., 
\begin{equation}
\begin{aligned}
a_{r,r}&=x_r+\lambda_r=\sum_{i=k+1}^n(-1)^{k-r+1}\binom{i}{k}\binom{k}{r}\frac{i-k}{i-r}s_i+\lambda_r,  \quad \forall \,0\leq r\leq k,\\
a_{r,r}&=x_r+\lambda_r=s_r+\lambda_r, \quad \forall \,k+1\leq r\leq n,
\end{aligned}
\end{equation}
where $s_r\in \mathbb{R}$ for all $k+1\leq r\leq n$. This completes the proof. 
\end{proof}
\vspace{0.4cm}

\begin{lemma}\label{lemma:nondiag}
If $\sum_{0\leq i<j\leq k}a_{i,j}\ketbra{D_n^i}{D_n^j}_{[k]}=\textbf{0}$, where $1\leq k\leq n$, then $a_{i,j}=0$ for all $0\leq i<j\leq k$.
\end{lemma}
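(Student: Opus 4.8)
The plan is to substitute the explicit marginal formula of Eq.~\eqref{eq: Dicke_marginal} into the hypothesis and then exploit the fact that each operator $\ketbra{D_n^i}{D_n^j}_{[k]}$ is supported on a single off-diagonal of the $k$-qubit Dicke basis, labelled by the difference $d := j - i$. Concretely, for $0 \le i < j \le k$ one has
\begin{equation}
\ketbra{D_n^i}{D_n^j}_{[k]} = \sum_{s} \frac{\binom{n-k}{i-s}\sqrt{\binom{k}{s}\binom{k}{j-i+s}}}{\sqrt{\binom{n}{i}\binom{n}{j}}}\,\ketbra{D_k^s}{D_k^{j-i+s}},
\end{equation}
so only matrix units $\ketbra{D_k^s}{D_k^t}$ with $t - s = j - i$ occur. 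Since the matrix units $\{\ketbra{D_k^s}{D_k^t}\}$ are linearly independent, I would group the terms of the hypothesis by the common value $d$; the groups sit on distinct off-diagonals and therefore must each vanish, giving $\sum_{i=0}^{k-d} a_{i,i+d}\,\ketbra{D_n^i}{D_n^{i+d}}_{[k]} = \mathbf{0}$ for every $d \ge 1$.

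Next, for a fixed $d$ I would read off the coefficient of $\ketbra{D_k^s}{D_k^{s+d}}$ for each $s \in \{0,\dots,k-d\}$. Cancelling the strictly positive factor $\sqrt{\binom{k}{s}\binom{k}{s+d}}$ and setting $b_i := a_{i,i+d}/\sqrt{\binom{n}{i}\binom{n}{i+d}}$, the vanishing of each coefficient becomes the square linear system
\begin{equation}
\sum_{i=0}^{k-d} \binom{n-k}{i-s}\,b_i = 0, \qquad 0 \le s \le k-d.
\end{equation}

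Finally, I would observe that the coefficient matrix $M$ with $M_{s,i} = \binom{n-k}{i-s}$ is Toeplitz and upper triangular with unit diagonal: the diagonal entries are $\binom{n-k}{0} = 1$, while $M_{s,i} = 0$ whenever $i < s$ because the lower index is negative. Hence $\det M = 1$, $M$ is invertible, all $b_i$ vanish, and so $a_{i,i+d} = 0$ for every $i$. Letting $d$ range over $\{1,\dots,k\}$ yields $a_{i,j} = 0$ for all $0 \le i < j \le k$.

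I do not expect a serious obstacle here: the argument rests entirely on the off-diagonal decoupling (which splits the problem by $d$) and on the triangularity of the binomial coefficient matrix. The only points needing care are verifying that the equation index $s$ and the variable index $i$ run over the same set $\{0,\dots,k-d\}$, so the system is square, and that $\binom{k}{s}$ and $\binom{k}{s+d}$ are nonzero on this range so that they may legitimately be cancelled.
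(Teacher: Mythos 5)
Your proof is correct and takes essentially the same route as the paper: the paper's step-by-step extraction of $a_{k-1,k}$, then $a_{k-2,k-1}$, $a_{k-2,k}$, and so on via the matrix elements $\bra{D_k^{k-j}}\cdot\ket{D_k^{r}}$ is exactly back-substitution through your unit upper-triangular systems $M_{s,i}=\binom{n-k}{i-s}$. Both arguments rest on the same two facts, namely that the marginal formula confines each $\ketbra{D_n^i}{D_n^{i+d}}_{[k]}$ to the $d$-th off-diagonal of the $k$-qubit Dicke basis, and that $\binom{n-k}{i-s}=0$ for $i<s$ while $\binom{n-k}{0}=1$ on the diagonal, so nothing needs fixing.
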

\begin{proof}
According to Eq.~\eqref{eq: Dicke_marginal}, we have \begin{equation}
\ketbra{D_n^i}{D_n^j}_{[k]}=\sum_{s}\frac{\binom{n-k}{i-s}\sqrt{\binom{k}{s}\binom{k}{j-i+s}}}{\sqrt{\binom{n}{i}\binom{n}{j}}}
\ketbra{D_k^s}{D_k^{j-i+s}},  \quad \forall \,0\leq i< j\leq k.
\end{equation}
where $s$ in the sum is from $\max\{0, k+i-n\}$ to $i$, and the sum must contain 
$\frac{\sqrt{\binom{k}{i}\binom{k}{j}}}{\sqrt{\binom{n}{i}\binom{n}{j}}}
\ketbra{D_k^i}{D_k^{j}}$ $(s=i)$.
Let $\rho=\sum_{0\leq i<j\leq k}a_{i,j}\ketbra{D_n^i}{D_n^j}_{[k]}=\textbf{0}$, then
\begin{equation}
\rho=\sum_{0\leq i<j\leq k}a_{i,j}\sum_{s}\frac{\binom{n-k}{i-s}\sqrt{\binom{k}{s}\binom{k}{j-i+s}}}{\sqrt{\binom{n}{i}\binom{n}{j}}}
\ketbra{D_k^s}{D_k^{j-i+s}}=\textbf{0}.
\end{equation}

 \noindent \textbf{Step 1:} 
  \begin{equation}
    a_{k-1,k}=\bra{D_k^{k-1}}\rho\ket{D_k^{k}}=0.
\end{equation}

\noindent  \textbf{Step 2:}
\begin{equation}
    a_{k-2,k-1}=\bra{D_k^{k-2}}\rho\ket{D_k^{k-1}}=0,  \quad    a_{k-2,k}=\bra{D_k^{k-2}}\rho\ket{D_k^{k}}=0.
\end{equation}

\noindent  \textbf{Step 3:}
\begin{equation}
    a_{k-3,k-2}=\bra{D_k^{k-3}}\rho\ket{D_k^{k-2}}=0,  \quad       a_{k-3,k-1}=\bra{D_k^{k-3}}\rho\ket{D_k^{k-1}}=0,  \quad       a_{k-3,k}=\bra{D_k^{k-3}}\rho\ket{D_k^{k}}=0.
\end{equation}

\noindent  \textbf{Step $j$ $(4\leq j\leq k$): }
\begin{equation}
    a_{k-j,r}=\bra{D_k^{k-j}}\rho\ket{D_k^{r}}=0,  \quad  \forall \,k-j+1\leq r\leq k.
\end{equation}
Above all,  $a_{i,j}=0$ for all  $0\leq i<j\leq k$.
\end{proof}

\begin{lemma}\label{lem:nonhom}
Let $\rho=\sum_{i=0}^k\lambda_i\ket{D_n^i}\bra{D_n^i}$, where $1\leq k\leq n$, then $L(\rho)\leq k$ if and only if
the following nonhomogeneous linear equations  respect to $(a_{0,0},a_{1,1},\ldots,a_{n,n})$ $( a_{i,i}\geq 0$ for $0\leq i\leq n)$ 
\begin{equation}\label{eq:noneqlem}
  \sum_{i=0}^na_{i,i}\frac{\binom{k}{s}\binom{n-k}{i-s}}{\binom{n}{i}}=\sum_{i=0}^k\lambda_i\frac{\binom{k}{s}\binom{n-k}{i-s}}{\binom{n}{i}},  \quad \forall \,0\leq s\leq k,  
\end{equation}
have only one solution $(a_{0,0},a_{1,1},\ldots,a_{k,k},a_{k+1,k+1}, \ldots, a_{n,n})=(\lambda_0,\lambda_1,\ldots,\lambda_k,0,\ldots,0)$.
\end{lemma}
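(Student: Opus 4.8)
The plan is to recast the statement via compatibility sets. As recorded in the opening remarks of the Supplemental Material, $L(\rho)\le k$ holds precisely when $\cC(\rho,\cS_k)=\{\rho\}$, so I would prove instead that $\cC(\rho,\cS_k)=\{\rho\}$ if and only if \eqref{eq:noneqlem} has, among nonnegative tuples $(a_{0,0},\dots,a_{n,n})$, only the solution $(\lambda_0,\dots,\lambda_k,0,\dots,0)$. The forward implication is then essentially a restatement of Lemma~\ref{lemma:nonhom_general}: since $\lambda_i=0$ for $i>k$, the distinguished tuple coincides with $(\lambda_0,\dots,\lambda_n)$, so any other nonnegative solution would, by that lemma, furnish a diagonal symmetric state $\sigma\neq\rho$ in $\cC(\rho,\cS_k)$ and hence force $L(\rho)\ge k+1$; contrapositively, $L(\rho)\le k$ yields uniqueness.

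For the reverse implication I would fix an arbitrary $\sigma\in\cC(\rho,\cS_k)$ and show $\sigma=\rho$. Treating first the case $k\ge 2$, connectivity of the hypergraph $([n],\cS_k)$ lets me invoke Lemma~\ref{lemma: connected_symmetrc} to conclude that $\sigma$ is symmetric, so that $\sigma=\sum_{i,j=0}^n a_{i,j}\ketbra{D_n^i}{D_n^j}$ with $A=(a_{i,j})$ positive semidefinite and $\tr A=1$. The structural key is that, by the marginal formulas \eqref{eq: Dicke_marginal}, the constraint $\sigma_{[k]}=\rho_{[k]}$ decouples in the Dicke basis $\{\ket{D_k^s}\}$: the blocks $\ketbra{D_n^i}{D_n^i}_{[k]}$ contribute only to diagonal entries, whereas the blocks with $i\neq j$ contribute only strictly off-diagonal entries (the bra and ket labels differ by $j-i$).

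Matching diagonal entries shows that $(a_{0,0},\dots,a_{n,n})$ — nonnegative because $A$ is positive semidefinite — solves \eqref{eq:noneqlem}, so by hypothesis it equals $(\lambda_0,\dots,\lambda_k,0,\dots,0)$. In particular $a_{i,i}=0$ for $i>k$, and positivity of $A$ then forces the entire $i$-th row and column to vanish, confining $A$ to the top-left $(k+1)\times(k+1)$ block. Matching the off-diagonal entries now gives $\sum_{0\le i<j\le k}a_{i,j}\ketbra{D_n^i}{D_n^j}_{[k]}=\mathbf 0$, so Lemma~\ref{lemma:nondiag} yields $a_{i,j}=0$ for all $0\le i<j\le k$, and Hermiticity of $A$ disposes of the lower triangle. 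Hence $A=\diag(\lambda_0,\dots,\lambda_k,0,\dots,0)$, i.e. $\sigma=\rho$, and $\cC(\rho,\cS_k)=\{\rho\}$.

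The delicate point, and the one I expect to be the crux, is the coupling between positivity and the support condition $\lambda_i=0$ for $i>k$: it is precisely the vanishing of the tail diagonal that allows $A\succeq 0$ to annihilate the whole tail block, which in turn makes Lemma~\ref{lemma:nondiag} (phrased only for indices up to $k$) applicable; dropping the support hypothesis breaks this step, as the GHZ-type example preceding the lemma illustrates. The one gap left by the argument above is the boundary case $k=1$, where $([n],\cS_1)$ is disconnected and the symmetrization step is unavailable; there I would argue directly through Lemma~\ref{lemma:Lrho1}, checking that \eqref{eq:noneqlem} has a unique nonnegative solution exactly when $\rho=\proj{0}^{\otimes n}$, a product state for which $L(\rho)=1$.
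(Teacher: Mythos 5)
Your proposal is correct and takes essentially the same route as the paper's own proof: necessity via Lemma~\ref{lemma:nonhom_general}, and sufficiency for $k\ge 2$ by using connectivity of $([n],\cS_k)$ and Lemma~\ref{lemma: connected_symmetrc} to force $\sigma\in\cC(\rho,\cS_k)$ symmetric, matching diagonal Dicke-basis entries to recover Eq.~\eqref{eq:noneqlem}, letting positive semidefiniteness of $A$ annihilate the tail rows and columns once $a_{i,i}=0$ for $i>k$, and finishing the off-diagonal part with Lemma~\ref{lemma:nondiag} and Hermiticity. Your separate handling of the boundary case $k=1$ through the explicit general solution and Lemma~\ref{lemma:Lrho1} (uniqueness forcing $\lambda_1=0$, i.e.\ $\rho=\proj{0}^{\otimes n}$) is also exactly what the paper does.
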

\begin{proof}
The necessity is obtained from Lemma~\ref{lemma:nonhom_general}. We only need to prove the sufficiency.

Assume $k=1$. According to Lemma~\ref{lemma:nonhom_special}, the general solution of the nonhomogeneous linear
equation of Eq.~\eqref{eq:noneqlem}  have the following the solution,
\begin{equation}
    \begin{aligned}
        a_{0,0}&=\sum_{i=2}^n(i-1)s_i+\lambda_0,\\
        a_{1,1}&=-\sum_{i=2}^nis_i+\lambda_1,\\
       a_{r,r}&=s_r, \ \forall \, 2\leq r\leq n,\\
    \end{aligned}
\end{equation}
where $s_r\in \mathbb{R}$ for all $2\leq r\leq n$, and $a_{r,r}\geq 0$ for all $0\leq r\leq n$.
If $\lambda_1> 0$, then as long as $s_r>0$ is small enough for all $2\leq r\leq n$, we can always find a solution $(a_{0,0},a_{1,1},\ldots,a_{k,k},a_{k+1,k+1}, \ldots, a_{n,n})\neq (\lambda_0,\lambda_1, 0\ldots,0,0,\ldots,0)$  with $a_{r,r}\geq 0$ for all $0\leq r\leq n$.
This means that $\lambda_1=0$. In this case, $a_{1,1}=-\sum_{i=2}^nis_i=-\sum_{i=2}^nia_{i,i}\geq 0$, which implies that $ a_{r,r}=s_r=0$ for all $2\leq r\leq n$, and    $(a_{0,0},a_{1,1},\ldots,a_{k,k},a_{k+1,k+1}, \ldots, a_{n,n})=(\lambda_0,0,\ldots,0,0,\ldots,0)$. Thus  the nonhomogeneous linear equations of Eq.~\eqref{eq:noneqlem} have only one solution is equivalent to that $\lambda_1=0$. According Lemma~\ref{lemma:Lrho1}, $L(\ket{{D_n^0}})=1$.

Assume $k\geq 2$. Since $G=([n],\cS_k)$ is connected, every $\sigma\in \cC(\rho,\cS_k)$ is a symmetric state according to Lemma~\ref{lemma: connected_symmetrc}.
We can
write  $\sigma$  as
\begin{equation}
\sigma=\sum_{i,j=0}^{n}a_{i,j}\ketbra{D_n^i}{D_n^j},
\end{equation}
where $A=(a_{i,j})_{0\leq i,j\leq n}$ is positive semidefinite and ${\sf Tr}(A)=1$. 
Since  $\sigma\in \cC(\rho,\cS_k)$, we have $\sigma_{[k]}=\rho_{[k]}$, i.e., 

\begin{equation}\label{eq:equal_marginal}
\sum_{i,j=0}^{n}a_{i,j}\ketbra{D_n^i}{D_n^j}_{[k]}=\sum_{i=0}^k\lambda_i\ket{D_n^i}\bra{D_n^i}_{[k]}.
\end{equation}
Substituting Eq.~\eqref{eq: Dicke_marginal} into Eq.~\eqref{eq:equal_marginal}, we have
\begin{equation}\label{eq:non_off1}
\begin{aligned}
 &\sum_{i=0}^na_{i,i}\sum_{s}\frac{\binom{k}{s}\binom{n-k}{i-s}}{\binom{n}{i}}\ketbra{D_k^s}{D_k^s}+\sum_{0\leq i< j\leq n}a_{i,j}\sum_{s}\frac{\binom{n-k}{i-s}\sqrt{\binom{k}{s}\binom{k}{j-i+s}}}{\sqrt{\binom{n}{i}\binom{n}{j}}}
\ketbra{D_k^{s}}{D_k^{j-i+s}}\\
&+\sum_{0\leq i< j\leq n}a_{j,i}\sum_{s}\frac{\binom{n-k}{i-s}\sqrt{\binom{k}{s}\binom{k}{j-i+s}}}{\sqrt{\binom{n}{i}\binom{n}{j}}}
\ketbra{D_k^{j-i+s}}{D_k^{s}}=\sum_{i=0}^k\lambda_i\sum_{s}\frac{\binom{k}{s}\binom{n-k}{i-s}}{\binom{n}{i}}\ketbra{D_k^s}{D_k^s}.
 \end{aligned}   
\end{equation}
Since there are no off-diagonal elements on the right side of Eq.~\eqref{eq:non_off1} in the Dicke basis $\{\ket{D_k^s}\}_{s=0}^k$, we have
\begin{equation}\label{eq:nondiag}
    \sum_{0\leq i< j\leq n}a_{i,j}\sum_{s}\frac{\binom{n-k}{i-s}\sqrt{\binom{k}{s}\binom{k}{j-i+s}}}{\sqrt{\binom{n}{i}\binom{n}{j}}}
\ketbra{D_k^{s}}{D_k^{j-i+s}}=\sum_{0\leq i< j\leq n}a_{j,i}\sum_{s}\frac{\binom{n-k}{i-s}\sqrt{\binom{k}{s}\binom{k}{j-i+s}}}{\sqrt{\binom{n}{i}\binom{n}{j}}}
\ketbra{D_k^{j-i+s}}{D_k^{s}}=\textbf{0},
\end{equation}
and 
\begin{equation}\label{eq:dig_equal}
\sum_{i=0}^na_{i,i}\sum_{s}\frac{\binom{k}{s}\binom{n-k}{i-s}}{\binom{n}{i}}\ketbra{D_k^s}{D_k^s}=\sum_{i=0}^k\lambda_i\sum_{s}\frac{\binom{k}{s}\binom{n-k}{i-s}}{\binom{n}{i}}\ketbra{D_k^s}{D_k^s}.
\end{equation}
Eq.~\eqref{eq:dig_equal} is equivalent to 
\begin{equation}
   \sum_{s}\sum_{i=0}^na_{i,i}\frac{\binom{k}{s}\binom{n-k}{i-s}}{\binom{n}{i}}\ketbra{D_k^s}{D_k^s}=\sum_{s}\sum_{i=0}^k\lambda_i\frac{\binom{k}{s}\binom{n-k}{i-s}}{\binom{n}{i}}\ketbra{D_k^s}{D_k^s}. 
\end{equation}
Then we have
\begin{equation}\label{eq:onhomogeneous_linear_equations}
  \sum_{i=0}^na_{i,i}\frac{\binom{k}{s}\binom{n-k}{i-s}}{\binom{n}{i}}=\sum_{i=0}^k\lambda_i\frac{\binom{k}{s}\binom{n-k}{i-s}}{\binom{n}{i}},  \quad \forall \,0\leq s\leq k.  
\end{equation}
Since the  nonhomogeneous linear equations of Eq.~\eqref{eq:onhomogeneous_linear_equations} have only one solution $(a_{0,0},a_{1,1},\ldots,a_{k,k},a_{k+1,k+1}, \ldots, a_{n,n})=(\lambda_0,\lambda_1,\ldots,\lambda_k,0,\ldots,0)$,  we have $a_{i,j}=0$ for all $k+1\leq i,j \leq n$ ($A=(a_{i,j})_{0\leq i,j\leq n}$ is positive semidefinite).
Then Eq.~\eqref{eq:nondiag} is equivalent to 
\begin{equation}
\sum_{0\leq i<j\leq k}a_{i,j}\ketbra{D_n^i}{D_n^j}_{[k]}=\sum_{0\leq i<j\leq k}a_{j,i}\ketbra{D_n^j}{D_n^i}_{[k]}=\textbf{0}.   
\end{equation}
By Lemma~\ref{lemma:nondiag}, $a_{i,j}=0$ for all $0\leq i<j\leq k$. Since $a_{j,i}=a_{i,j}^{*}$, we have $a_{i,j}=0$ for all $0\leq i\neq j\leq k$.  

Above all, we have 
\begin{equation}
\sigma=\sum_{i=0}^k\lambda_i\ket{D_n^i}\bra{D_n^i}=\rho,
\end{equation}
and $\cC(\rho,\cS_k)$ contains only $\rho$. Thus, $L(\rho)\leq k$.
\end{proof}
\vspace{0.4cm}

\begin{lemma}\label{lemma:diagnoal_SDL_n}
Let $\rho=\sum_{i=0}^n\lambda_i\ket{D_n^i}\bra{D_n^i}$,  
then $L(\rho)=n$ if and only if one of the  three conditions holds:
\begin{enumerate}[(i)]
\item $\lambda_0\lambda_n\neq 0$;  
\item $\lambda_i\neq 0$ for all odd $i$;  \item $\lambda_i\neq 0$ for all even $i$. 
\end{enumerate}
\end{lemma}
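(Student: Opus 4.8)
The plan is to use the criterion (recorded just after the hypergraph definitions in the excerpt) that $L(\rho)=n$ is equivalent to $L(\rho)\ge n$, since $L(\rho)\le n$ always, and that $L(\rho)\ge n$ holds iff the compatibility set $\cC(\rho,\cS_{n-1})$ contains some $\sigma\neq\rho$. As the hypergraph $([n],\cS_{n-1})$ is connected, Lemma~\ref{lemma: connected_symmetrc} forces every such $\sigma$ to be symmetric, so I may write $\sigma=\sum_{i,j=0}^n a_{i,j}\ketbra{D_n^i}{D_n^j}$ with $A=(a_{i,j})$ positive semidefinite and $\tr(A)=1$. Imposing $\sigma_{[n-1]}=\rho_{[n-1]}$ and substituting the marginal formulas \eqref{eq: Dicke_marginal} at $k=n-1$ (where $\binom{1}{i-s}\neq 0$ only for $s\in\{i-1,i\}$) splits the constraint into a diagonal part, involving only the entries $a_{i,i}$, and off-diagonal ``stripes'' indexed by the weight difference $j-i$. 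A structural fact I will record first is that $\ketbra{D_n^0}{D_n^n}_{[n-1]}=\mathbf 0$, so the corner entry $a_{0,n}$ is invisible to the marginals and can only be controlled through positivity of $A$. For the diagonal part, Lemma~\ref{lemma:nonhom_special} with $k=n-1$ gives a one-parameter solution family which, after the identity $\binom{n-1}{r}\tfrac{n}{n-r}=\binom{n}{r}$, simplifies to $a_{i,i}=\lambda_i+c\,(-1)^i\binom{n}{i}$, with $c\in\mathbb{R}$ free (the trace staying $1$ because $\sum_i(-1)^i\binom{n}{i}=0$).

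\textbf{Sufficiency.} Under (ii), i.e.\ all odd $\lambda_i\neq 0$, I pick a small $c>0$; under (iii), i.e.\ all even $\lambda_i\neq 0$, a small $c<0$; in either case every $a_{i,i}=\lambda_i+c(-1)^i\binom{n}{i}\ge 0$, giving a diagonal $\sigma\neq\rho$ in $\cC(\rho,\cS_{n-1})$, so $L(\rho)\ge n$ by Lemma~\ref{lemma:nonhom_general}. Under (i), i.e.\ $\lambda_0\lambda_n\neq 0$, I instead use the invisible corner: $\sigma=\rho+\epsilon(\ketbra{D_n^0}{D_n^n}+\ketbra{D_n^n}{D_n^0})$ has the same $(n-1)$-marginals as $\rho$, and is a genuine state for small $\epsilon>0$ because the only affected principal submatrix of $A$ is the $2\times 2$ block on indices $\{0,n\}$ with diagonal $\lambda_0,\lambda_n$ and off-diagonal $\epsilon$, which stays positive semidefinite for $\epsilon^2\le\lambda_0\lambda_n$. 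Hence $L(\rho)\ge n$.

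\textbf{Necessity.} Assume none of (i)--(iii) holds. Then NOT (ii) and NOT (iii) supply an odd index and an even index at which $\lambda_i=0$; requiring $a_{i,i}=\lambda_i+c(-1)^i\binom{n}{i}\ge 0$ at these two indices forces $c\le 0$ and $c\ge 0$, whence $c=0$ and the diagonal of $A$ equals that of $\rho$. Next, NOT (i) gives $\lambda_0=0$ or $\lambda_n=0$; by the bit-flip $X^{\otimes n}$, which maps $\ketbra{D_n^i}{D_n^i}\mapsto\ketbra{D_n^{n-i}}{D_n^{n-i}}$ and preserves the set of conditions (i)--(iii), together with the LU-invariance of $L$ (Lemma~\ref{lemma:L_LU}), I may assume $\lambda_0=0$. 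Then $a_{0,0}=0$, so the positive-semidefinite fact ($a_{ii}=0\Rightarrow a_{ij}=a_{ji}=0$) kills the $0$-th row and column; in particular $a_{0,n}=0$ and $a_{0,d}=0$ for the head of every stripe. Finally, the off-diagonal marginal equations give, for each Dicke matrix element $(a,b)$, a relation $a_{a,b}C_1+a_{a+1,b+1}C_2=0$ with $C_1,C_2>0$ (a variant of Lemma~\ref{lemma:nondiag} adapted to the vanishing corner); feeding in $a_{0,d}=0$ propagates the zero along each stripe, so all off-diagonal entries vanish and $\sigma=\rho$. Thus $\cC(\rho,\cS_{n-1})=\{\rho\}$ and $L(\rho)\le n-1$, completing the equivalence.

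The step I expect to be the crux is the off-diagonal analysis in the necessity direction: one must verify that each stripe of $A$ is governed by a nondegenerate two-term recurrence with strictly nonzero coefficients, so that one vanishing entry (supplied by the zero diagonal at index $0$) collapses the entire stripe, while \emph{separately} disposing of the marginally invisible corner $a_{0,n}$ through positivity rather than through the marginal equations. Keeping these two mechanisms distinct — the recurrence for the stripe interiors and positivity for the corner — and checking that the hypotheses NOT (i), NOT (ii), NOT (iii) are exactly what is needed to trigger both is where the bookkeeping concentrates.
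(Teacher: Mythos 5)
Your proposal is correct and matches the paper's own proof in all essentials: the same reduction of any $\sigma\in\cC(\rho,\cS_{n-1})$ to a symmetric state with positive semidefinite Dicke-coefficient matrix, the same one-parameter diagonal family (the paper's Lemma~\ref{lemma:nonhom_special} at $k=n-1$, simplified via $\binom{n-1}{r}\tfrac{n}{n-r}=\binom{n}{r}$), the same corner perturbation $\rho+\epsilon(\ketbra{D_n^0}{D_n^n}+\ketbra{D_n^n}{D_n^0})$ for case (i) and opposite-sign parameter choices for (ii)/(iii), and in the necessity direction the same opposite-parity argument forcing the free parameter to zero. The only difference is bookkeeping: you normalize to $\lambda_0=0$ via $X^{\otimes n}$ (Lemma~\ref{lemma:L_LU}) and collapse each off-diagonal stripe by an explicit two-term recurrence with positive coefficients, treating the marginally invisible corner $a_{0,n}$ by positivity alone, whereas the paper normalizes to $\lambda_n=0$ and delegates exactly this PSD row-kill plus stripe collapse to its Lemma~\ref{lem:nonhom} together with Lemma~\ref{lemma:nondiag}.
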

\begin{proof}
    Sufficiency. There are three cases. 
    \begin{enumerate}[(i)]
        \item 
   If $\lambda_0\lambda_n\neq 0$, then there exists $0\neq a\in \bbR $ such that 
    $\lambda_0\lambda_n\geq a^2$. Let $\sigma=\rho +a\ketbra{D_n^0}{D_n^n}+a\ketbra{D_n^n}{D_n^0}$, then we can verify that ${\sf Tr}(\sigma)=1$ and $\sigma\geq 0$. Then $\rho\neq \sigma\in \cC(\rho, \cS_{n-1})$. Thus, we have $L(\rho)=n$.

       \item   Assume $\lambda_i\neq 0$ for all odd $i$. Let $k=n-1$ in Lemma~\ref{lemma:nonhom_general} and Lemma~\ref{lemma:nonhom_special}, then the general solution of the nonhomogeneous linear equations in   Eq.~\eqref{eq:nonhomo_general} is
\begin{equation}
   \begin{aligned}
a_{r,r}&=(-1)^{n-r}\binom{n-1}{r}\frac{n}{n-r}s_n+\lambda_r,  \quad \forall \,0\leq r\leq n-1,\\
a_{n,n}&=(-1)^{n-n}s_n+\lambda_n,
\end{aligned}
\end{equation}
where $s_n\in \mathbb{R}$ and $a_{r,r}\geq 0$ for all $0\leq r\leq n$.

\begin{enumerate}[1)]

\item  If $n$ is even, then $a_{r,r}=-\binom{n-1}{r}\frac{n}{n-r}s_n+\lambda_r$ for all odd $0\leq r\leq n-1$, and $a_{r,r}=\binom{n-1}{r}\frac{n}{n-r}s_n+\lambda_r$ for all even $0\leq r\leq n-1$. As long as $s_n>0$ is small enough, we can always find a solution $(a_{0,0},a_{1,1},\ldots,a_{n,n})\neq (\lambda_0,\lambda_1,\ldots,\lambda_n)$ with $a_{r,r}\geq 0$ for all $0\leq r\leq n$. Thus, according to Lemma~\ref{lemma:nonhom_general}, we have $L(\rho)=n$.   

\item If $n$ is odd, then $a_{r,r}=\binom{n-1}{r}\frac{n}{n-i}s_n+\lambda_r$ for all odd $0\leq r\leq n-1$, and $a_{r,r}=-\binom{n-1}{r}\frac{n}{n-r}s_n+\lambda_r$ for all even $0\leq r\leq n$. As long as $(-s_n)>0$ is small enough, we can always find a solution $(a_{0,0},a_{1,1},\ldots,a_{n,n})\neq (\lambda_0,\lambda_1,\ldots,\lambda_n)$ with $a_{r,r}\geq 0$ for all $0\leq r\leq n$. Thus, according to Lemma~\ref{lemma:nonhom_general}, we have $L(\rho)=n$.

\end{enumerate}
\item Assume $\lambda_i\neq 0$ for all even $i$. For the same disscusion as (ii), we have $L(\rho)=n$. 
 \end{enumerate}

Necessity. We prove it by contradiction. There are two cases. 

\begin{enumerate}[(1)]
    \item Assume $\lambda_n=0$, and there  exists an odd $i$ and an even $j$, such that $\lambda_i=0$ and $\lambda_j=0$. In this case, $\rho=\sum_{i=0}^{n-1}\lambda_i\ketbra{D_n^i}{D_n^i}$.
    Let $k=n-1$ in Lemma~\ref{lemma:nonhom_special} and Lemma~\ref{lem:nonhom}, then 
 the general solution of the nonhomogeneous linear equations in   Eq.~\eqref{eq:noneqlem} is
\begin{equation}
   \begin{aligned}
a_{r,r}&=(-1)^{n-r}\binom{n-1}{r}\frac{n}{n-r}s_n+\lambda_r,  \quad \forall \,0\leq r\leq n-1,\\
a_{n,n}&=(-1)^{n-n}s_n,
\end{aligned}
\end{equation}
where $s_n\in \mathbb{R}$ and $a_{r,r}\geq 0$ for all $0\leq r\leq n$.
Note that $a_{i,i}=(-1)^{n-i}xs_n$,  $a_{j,j}=(-1)^{n-j}ys_n$, where $x,y> 0$. Since $a_{i,i}, a_{j,j}\geq 0$, we obtain $s_n=0$, and $(a_{0,0},a_{1,1},\ldots, a_{n-1,n-1},a_{n,n})=(\lambda_0,\lambda_1,\ldots,\lambda_{n-1},0)$. According to Lemma~\ref{lem:nonhom}, we have $L(\rho)\leq n-1$. 

\item   Assume $\lambda_0=0$, and there  exists an odd $i$ and an even $j$, such that $\lambda_i=0$ and $\lambda_j=0$.  In this case, $\rho=\sum_{i=1}^{n}\lambda_i\ketbra{D_n^i}{D_n^i}$. Note that $\rho'=(\otimes_{i=1}^n X)\rho(\otimes_{i=1}^n X)=\sum_{i=0}^{n-1}\lambda_{n-i}\ketbra{D_n^{i}}{D_n^{i}}$, where $X$ is the Pauli matrix. According to (1), we have $L(\rho')\leq n-1$.
By Lemma~\ref{lemma:L_LU}, we have $L(\rho)=L(\rho')\leq n-1$.
\end{enumerate}
\end{proof}

Note that when $n$ is even,  then the three conditions in    Lemma~\ref{lemma:diagnoal_SDL_n} are reduced to the first two conditions.

\begin{lemma}\label{lem:mixturedicke}
Let $\rho=\sum_{i=0}^k\lambda_i\ket{D_n^i}\bra{D_n^i}$, where $1\leq k\leq n$. Assume there exists $i_*$ ($0\leq i_*\leq k$) such that $k-i_*$ is even and $\lambda_{i_*}=0$, 
  then $L(\rho)\leq k$.
\end{lemma}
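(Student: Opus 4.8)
The plan is to invoke Lemma~\ref{lem:nonhom}, which reduces the claim $L(\rho)\le k$ to a purely linear-algebraic statement: among all \emph{nonnegative} tuples $(a_{0,0},\ldots,a_{n,n})$ (with $a_{i,i}\ge 0$) solving the nonhomogeneous system in Eq.~\eqref{eq:noneqlem}, the only one should be the trivial tuple $(\lambda_0,\ldots,\lambda_k,0,\ldots,0)$. I would therefore start from the explicit general solution of that system given by Lemma~\ref{lemma:nonhom_special}. Since $\lambda_r=0$ for every $r>k$ in our $\rho$, the general solution specializes to
\[
a_{r,r}=\sum_{i=k+1}^{n}(-1)^{k-r+1}\binom{i}{k}\binom{k}{r}\frac{i-k}{i-r}\,s_i+\lambda_r\ \ (0\le r\le k),\qquad a_{r,r}=s_r\ \ (k+1\le r\le n),
\]
with free real parameters $s_{k+1},\ldots,s_n$. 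The whole task is then to show that the positivity constraints $a_{r,r}\ge 0$ force $s_{k+1}=\cdots=s_n=0$, so that the general solution collapses to the trivial tuple.

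The key step, and the real content of the lemma, is to look at the single coordinate $r=i_*$. Here two facts conspire: the hypothesis $\lambda_{i_*}=0$ removes the inhomogeneous term, while the hypothesis that $k-i_*$ is even makes the sign prefactor $(-1)^{k-i_*+1}=-1$. Thus
\[
a_{i_*,i_*}=-\sum_{i=k+1}^{n}\binom{i}{k}\binom{k}{i_*}\frac{i-k}{i-i_*}\,s_i.
\]
For every $i\ge k+1$ one has $i>k\ge i_*$, so each coefficient $c_i:=\binom{i}{k}\binom{k}{i_*}\tfrac{i-k}{i-i_*}$ is strictly positive. At the same time, the constraints on the upper indices give $s_i=a_{i,i}\ge 0$ for all $i\ge k+1$. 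Consequently $a_{i_*,i_*}=-\sum_{i>k}c_i s_i$ is a nonpositive quantity that is also required to be nonnegative; this forces $\sum_{i>k}c_i s_i=0$, and since each $c_i>0$ and each $s_i\ge 0$, every term must vanish, giving $s_i=0$ for all $i>k$. Feeding this back into the general solution yields $(a_{0,0},\ldots,a_{n,n})=(\lambda_0,\ldots,\lambda_k,0,\ldots,0)$, and Lemma~\ref{lem:nonhom} then delivers $L(\rho)\le k$.

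I do not expect any serious technical obstacle: once the index $i_*$ is fixed, the argument is a one-line sign-and-positivity squeeze. The only point requiring care is the sign and positivity bookkeeping—namely verifying that $i_*\le k<i$ holds throughout the summation range so that $\tfrac{i-k}{i-i_*}>0$, and that the evenness of $k-i_*$ indeed produces the overall minus sign in the $i_*$-th equation. The genuinely clever ingredient, rather than a difficulty, is the choice of the single coordinate $i_*$ at which the inhomogeneous term disappears \emph{and} the sign is arranged so that positivity of the free parameters can only be reconciled by setting them all to zero.
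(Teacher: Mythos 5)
Your proof is correct and takes essentially the same route as the paper's: both specialize the general solution of Eq.~\eqref{eq:noneqlem} from Lemma~\ref{lemma:nonhom_special}, note that $\lambda_{i_*}=0$ together with the evenness of $k-i_*$ turns $a_{i_*,i_*}$ into $-\sum_{i=k+1}^{n}\binom{i}{k}\binom{k}{i_*}\tfrac{i-k}{i-i_*}\,a_{i,i}$, a nonpositive combination of the nonnegative free parameters $s_i=a_{i,i}$, so the constraint $a_{i_*,i_*}\ge 0$ forces $s_i=0$ for all $i>k$ and Lemma~\ref{lem:nonhom} yields $L(\rho)\le k$. The only (harmless) difference is that you spell out the sign and positivity bookkeeping — in particular that $i_*\le k<i$ makes each coefficient strictly positive — which the paper leaves implicit.
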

\begin{proof}
According to the proof of Lemma~\ref{lemma:nonhom_special}, we know that
the general solution of the nonhomogeneous linear equations in Eq.~\eqref{eq:noneqlem} is 
\begin{equation}\label{eq:solutionall}
\begin{aligned}
a_{r,r}&=\sum_{i=k+1}^n(-1)^{k-r+1}\binom{i}{k}\binom{k}{r}\frac{i-k}{i-r}s_i+\lambda_r,  \quad \forall \,0\leq r\leq k,\\
a_{r,r}&=s_r, \quad \forall \,k+1\leq r\leq n,
\end{aligned}
\end{equation}
where $s_r\in \mathbb{R}$ for all $k+1\leq r\leq n$,  and $a_{r,r}\geq 0$ for all $0\leq r\leq n$.

If there exists $i_*$ ($0\leq i_*\leq k$) such that $k-i_*$ is even and $\lambda_{i_*}=0$,
then $a_{i_*,i_*}=-\sum_{i=k+1}^n\binom{i}{k}\binom{k}{i_*}\frac{i-k}{i-i_*}s_i=-\sum_{i=k+1}^n\binom{i}{k}\binom{k}{i_*}\frac{i-k}{i-i_*}a_{i,i}\geq 0$,  which implies $a_{r,r}=s_r=0$ for all $k+1\leq r\leq n$. Thus $(a_{0,0},a_{1,1},\ldots,a_{k,k},a_{k+1,k+1}, \ldots, a_{n,n})=(\lambda_0,\lambda_1,\ldots,\lambda_k,0,\ldots,0)$. According to Lemma~\ref{lem:nonhom}, we have $L(\rho)\leq k$.
\end{proof}
\vspace{0.4cm}

\begin{proposition}  Let  $\rho$ be a symmetric state of the diagonal  form   $\rho=\sum_{i=0}^k\lambda_i\ket{D_n^i}\bra{D_n^i}$ for some integer $k\le n$, where    $\lambda_i$ is nonzero for all but one value of $i$, namely $\lambda_i\not  =  0,  \, \forall \, i\in  \{0,\dots, k\} \setminus \{i_*\}$ for some $i_*\in  \{0,\dots,  k\}$.
  Then,  $L(\rho)=k   + \min  \{[(k-i_*)\mod 2],  |n-k|\}$.  
  \end{proposition}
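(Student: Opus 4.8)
The plan is to split on whether $k=n$ and then, for $k<n$, to pin down the single bit $\delta := \min\{[(k-i_*)\bmod 2],\,|n-k|\}$, which always lies in $\{0,1\}$.

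First I would dispose of the case $k=n$, where $|n-k|=0$ forces $\delta=0$, so the claim reads $L(\rho)=n$. Since exactly one coefficient $\lambda_{i_*}$ vanishes, I would check that $\rho$ satisfies one of the three conditions of Lemma~\ref{lemma:diagnoal_SDL_n}: if $i_*\notin\{0,n\}$ then $\lambda_0\lambda_n\neq0$; if $i_*=0$ then every odd-indexed coefficient survives; and if $i_*=n$ then every odd-indexed coefficient survives when $n$ is even and every even-indexed one survives when $n$ is odd. In each case Lemma~\ref{lemma:diagnoal_SDL_n} yields $L(\rho)=n=k+\delta$, so this case is settled outright.

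Now assume $k<n$. Then $|n-k|\ge 1$ while $[(k-i_*)\bmod 2]\le 1$, so $\delta=(k-i_*)\bmod 2$. For the upper bound $L(\rho)\le k+\delta$ I would rewrite $\rho=\sum_{i=0}^{k+\delta}\lambda_i\ketbra{D_n^i}{D_n^i}$ (padding with $\lambda_{k+1}=0$ when $\delta=1$) and apply Lemma~\ref{lem:mixturedicke} at level $k+\delta\le n$ with distinguished index $i_*$: the gap $(k+\delta)-i_*$ equals $k-i_*$ (even) when $\delta=0$ and $(k-i_*)+1$ (even) when $\delta=1$, while $\lambda_{i_*}=0$ throughout, so the lemma gives $L(\rho)\le k+\delta$.

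For the matching lower bound $L(\rho)\ge k+\delta$ I set $m:=k+\delta-1<n$ and invoke Lemma~\ref{lemma:nonhom_general} at level $m$: it suffices to exhibit one nonnegative solution of the level-$m$ system different from $(\lambda_0,\dots,\lambda_n)$. I would take the general solution of Lemma~\ref{lemma:nonhom_special} with all free parameters zero except a small $s_n>0$, giving $a_{n,n}=s_n>0=\lambda_n$ (hence $\sigma\neq\rho$), leaving $a_{r,r}=\lambda_r$ for $m+1\le r\le n-1$, and setting $a_{r,r}=(-1)^{m-r+1}\binom{n}{m}\binom{m}{r}\frac{n-m}{n-r}s_n+\lambda_r$ for $0\le r\le m$. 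Positivity at the indices with $\lambda_r>0$ holds for $s_n$ small enough; the only delicate index is $r=i_*$, where $\lambda_{i_*}=0$. The crux is the sign of the $s_n$-coefficient there, namely $(-1)^{m-i_*+1}=(-1)^{(k-i_*)+\delta}$, which equals $+1$ exactly because $k-i_*$ is even when $\delta=0$ and odd when $\delta=1$; thus $a_{i_*,i_*}>0$. The perturbed vector is then a genuine nonnegative solution (automatically normalized by the summation identity used in Lemma~\ref{lemma:nonhom_general}) distinct from $\rho$, so $L(\rho)\ge m+1=k+\delta$. Combining the two bounds gives $L(\rho)=k+\delta$. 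The main obstacle is precisely this parity/sign bookkeeping in the lower bound: everything reduces to ensuring the free perturbation $s_n$ moves the one vanishing diagonal entry $a_{i_*,i_*}$ in the positive direction, which is exactly what the parity condition encoded by $\delta$ guarantees; the lone case escaping this machinery, $k=n$, is the one handled directly by Lemma~\ref{lemma:diagnoal_SDL_n}.
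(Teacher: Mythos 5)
Your argument is correct and is essentially the paper's own proof in compressed form: both obtain the upper bound by padding to level $k+\delta$ and invoking Lemma~\ref{lem:mixturedicke}, the lower bound by perturbing the general solution of Lemma~\ref{lemma:nonhom_special} at level $k+\delta-1$ with the same parity/sign bookkeeping and applying Lemma~\ref{lemma:nonhom_general} (the paper switches on all free parameters $s_r>0$ where you use only $s_n>0$, an immaterial difference), and the case $k=n$ via Lemma~\ref{lemma:diagnoal_SDL_n}. Two harmless wrinkles to patch: when $\delta=0$ and $i_*=k$ the index $i_*$ lies outside the range $0\le r\le m$, so $a_{i_*,i_*}=s_k+\lambda_{i_*}=0$ rather than $>0$ (nonnegativity is all that is needed, and indeed $\binom{m}{i_*}=0$ there); and when $k=1$, $i_*=1$ your lower bound invokes Lemma~\ref{lemma:nonhom_general} at level $m=0$, outside its stated range $1\le k\le n$, although $L(\rho)\ge 1$ is trivial in that case.
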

\begin{proof}





\begin{enumerate}[(A)]

 \item Since $L(\ket{D_n^0})=1$ and $L(\ket{D_n^1})=2$, Proposition~\ref{pro:symmetric_SDL} holds for $k=1$.


 \item   Assume $k=2$, then $\lambda_i=0$ for $3\leq i\leq n$. 

\begin{enumerate}[1)]
    \item If $2-i_*$ is even, then we have $L(\rho)\leq 2$ by Lemma~\ref{lem:mixturedicke}. According to Lemma~\ref{lemma:Lrho1}, we know that $L(\rho)\geq 2$. Thus $L(\rho)= 2$.

 \item If $2-i_*$ is odd,  then we have  $\lambda_1=0$, $\lambda_0, \lambda_2>0$. Assume $k=2$ in Eq.~\eqref{eq:solutionall}, then 
\begin{equation}
\begin{aligned}
    a_{0,0}&=-\sum_{i=3}^n\frac{(i-1)(i-2)}{2}s_i+\lambda_0, \\
    a_{1,1}&=\sum_{i=3}^ni(i-2)s_i,\\
    a_{2,2}&=-\sum_{i=3}^n\frac{i(i-1)}{2}s_i+\lambda_2,\\
    a_{r,r}&=s_r,  \quad \forall \,3\leq r\leq n,
\end{aligned}
\end{equation}
where $s_r\in \mathbb{R}$ for all $3\leq r\leq n$,  and $a_{r,r}\geq 0$ for all  $0\leq r\leq n$.  As long as $s_r>0$ is small enough for all $3\leq r\leq n$, we can always find a solution  $(a_{0,0},a_{1,1},a_{2,2},a_{3,3},\ldots,a_{n,n})\neq (\lambda_0,0,\lambda_2,0,\ldots,0)$ with $a_{r,r}\geq 0$ for all $0\leq r\leq n$.  According to Lemma~\ref{lemma:nonhom_general}, we have $L(\rho)\geq 3$. 
For example, if $n=4$, $\lambda_0=\frac{1}{2}$, and $\lambda_2=\frac{1}{2}$, then we have find a solution $(a_{0,0},a_{1,1},a_{2,2},a_{3,3},a_{4,4})=(\frac{8}{24},\frac{11}{24},\frac{3}{24},\frac{1}{24},\frac{1}{24})$, which means that $\frac{8}{24}\ket{D_4^0}\bra{D_4^0}+\frac{11}{24}\ket{D_4^1}\bra{D_4^1}+\frac{3}{24}\ket{D_4^2}\bra{D_4^2}+\frac{1}{24}\ket{D_4^3}\bra{D_4^3}+\frac{1}{24}\ket{D_4^4}\bra{D_4^4} \in \cC(\frac{1}{2}\ket{D_4^0}\bra{D_4^0}+\frac{1}{2}\ket{D_4^2}\bra{D_4^2}, \cS_2)$.
Let $k=3$ in Lemma~\ref{lem:mixturedicke}, then $3-1$ is even, we have $L(\rho)\leq 3$, which implies  $L(\rho)=3$.
 \end{enumerate}



 \item   Assume $k=3$, then $\lambda_i=0$ for $4\leq i\leq n$.  
 \begin{enumerate}[1)]
     \item 
If $3-i_*$ is even (i.e., $i_*=1$ or $i_*=3$), then we have $L(\rho)\leq 3$ by Lemma~\ref{lem:mixturedicke}. 
 In order to show  $L(\rho)\geq 3$, we need to assume $k=2$ in Eq.~\eqref{eq:solutionall}, and we have   
\begin{equation}
\begin{aligned}
    a_{0,0}&=-\sum_{i=3}^n\frac{(i-1)(i-2)}{2}s_i+\lambda_0, \\
    a_{1,1}&=\sum_{i=3}^ni(i-2)s_i+\lambda_1,\\
    a_{2,2}&=-\sum_{i=3}^n\frac{i(i-1)}{2}s_i+\lambda_2,\\
    a_{3,3}&=s_3+\lambda_3,\\
    a_{r,r}&=s_r,\quad \forall \,4\leq r\leq n,
\end{aligned}
\end{equation}
where $s_r\in \mathbb{R}$ for all $3\leq r\leq n$,  and $a_{r,r}\geq 0$ for all  $0\leq r\leq n$. If $\lambda_1=0$, and $\lambda_0,\lambda_2,\lambda_3> 0$, 
then as long as $s_r>0$ is small enough for all $3\leq r\leq n$,    we can always find a solution  $(a_{0,0},a_{1,1},a_{2,2},a_{3,3},a_{4,4}\ldots,a_{n,n})\neq (\lambda_0,0,\lambda_2,\lambda_3,0\ldots,0)$ with $a_{r,r}\geq 0$ for all $0\leq r\leq n$. 
By Lemma~\ref{lemma:nonhom_general}, we have $L(\rho)\geq 3$. For example, if $n=4$, $\lambda_0=\lambda_2=\lambda_3=\frac{1}{3}$, and  then we have find a solution $(a_{0,0},a_{1,1},a_{2,2},a_{3,3},a_{4,4})=(\frac{12}{48},\frac{11}{48},\frac{7}{48},\frac{17}{48},\frac{1}{48})$, which means that $\frac{12}{48}\ket{D_4^0}\bra{D_4^0}+\frac{11}{48}\ket{D_4^1}\bra{D_4^1}+\frac{7}{48}\ket{D_4^2}\bra{D_4^2}+\frac{17}{48}\ket{D_4^3}\bra{D_4^3}+\frac{1}{48}\ket{D_4^4}\bra{D_4^4} \in \cC(\frac{1}{3}\ket{D_4^0}\bra{D_4^0}+\frac{1}{3}\ket{D_4^2}\bra{D_4^2}+\frac{1}{3}\ket{D_4^3}\bra{D_4^3}, \cS_2)$.
If $\lambda_3=0$, and $\lambda_0,\lambda_1,\lambda_2> 0$ then we can also obtain $L(\rho)\geq 3$ for the same discussion as above. Thus $L(\rho)=3$.

\item If $3-i_*$ is odd (i.e., $i_*=0$ or $i_*=2$), then we assume $k=3$ in  Eq.~\eqref{eq:solutionall}, and we have  
 \begin{equation}
 \begin{aligned}
     a_{0,0}&=\sum_{i=4}^n\frac{(i-1)(i-2)(i-3)}{3!}s_i+\lambda_0, \\
     a_{1,1}&=-\sum_{i=4}^n\frac{i(i-2)(i-3)}{2}s_i+\lambda_1,\\
     a_{2,2}&=\sum_{i=4}^n\frac{i(i-1)(i-3)}{2}s_i+\lambda_2,\\
     a_{3,3}&=-\sum_{i=4}^n\frac{i(i-1)(i-2)}{3!}s_i+\lambda_3,\\
     a_{r,r}&=s_r,  \quad \forall \,4\leq r\leq n,
 \end{aligned}
 \end{equation}
where $s_r\in \mathbb{R}$ for all $4\leq r\leq n$, and $a_{r,r}\geq 0$ for all  $0\leq r\leq n$. Similarly, as long as $s_r$ ($s_r>0$) is small enough for all $4\leq r\leq n$,    we can always find a solution  $(a_{0,0},a_{1,1},a_{2,2},a_{3,3},a_{4,4}\ldots,a_{n,n})\neq (\lambda_0,\lambda_1,\lambda_2,\lambda_3,0\ldots,0)$ with $a_{r,r}\geq 0$ for all $0\leq r\leq n$. 
By Lemma~\ref{lemma:nonhom_general}, we have $L(\rho)\geq 4$. 
Let $k=4$ in Lemma~\ref{lem:mixturedicke}, then $4-j$ is even, and we have $L(\rho)\leq 4$. Thus $L(\rho)=4$.
 \end{enumerate}

\item  Now, we consider the general $k$, where $2\leq k\leq n-1$. Note that $\lambda_i=0$ for $k+1 \leq i\leq n$. 

\begin{enumerate}[1)]
\item   If $k-i_*$ is even, then  we have $L(\rho)\leq k$ by Lemma~\ref{lem:mixturedicke}. In order to show $L(\rho)\geq k$, we need to replace 
 $k$ with $k-1$ in  Eq.~\eqref{eq:solutionall}, and we have  
\begin{equation}
\begin{aligned}
a_{r,r}&=\sum_{i=k}^n(-1)^{k-r}\binom{i}{k-1}\binom{k-1}{r}\frac{i-k+1}{i-r}s_i+\lambda_r,  \quad \forall \,0\leq r\leq k-1,\\
a_{k,k}&=s_k+\lambda_k,\\
a_{r,r}&=s_r, \quad \forall \,k+1\leq r\leq n,
\end{aligned}
\end{equation}
where $s_r\in \mathbb{R}$ for all $k\leq r\leq n$, and $a_{r,r}\geq 0$ for all  $0\leq r\leq n$.  If $k-i_*$ is even and $i_*\neq k$, then  we have $a_{i_*,i_*}=\sum_{i=k}^n(-1)^{k-i_*}\binom{i}{k-1}\binom{k-1}{i_*}\frac{i-k+1}{i-i_*}s_i+\lambda_{i_*}=
\sum_{i=k}^n\binom{i}{k-1}\binom{k-1}{i_*}\frac{i-k+1}{i-i_*}s_i$. As long as $s_r>0$ is small enough for all $k\leq r\leq n$,    we can always find a solution  $(a_{0,0},a_{1,1},\ldots,a_{k,k},a_{k+1,k+1}\ldots, a_{n,n})\neq (\lambda_0,\lambda_1,\ldots,\lambda_k,0\ldots,0)$ with $a_{r,r}\geq 0$ for all $0\leq r\leq n$. By Lemma~\ref{lemma:nonhom_general}, we have $L(\rho)\geq k$. If $i_*=k$, we also have $L(\rho)\geq k$ for the same discussion as above. Thus $L(\rho)=k$.

\item  If $k-i_*$ is odd, then   Eq.~\eqref{eq:solutionall} is  
\begin{equation}
\begin{aligned}
a_{r,r}&=\sum_{i=k+1}^n(-1)^{k-r+1}\binom{i}{k}\binom{k}{r}\frac{i-k}{i-r}s_i+\lambda_r,  \quad \forall \,0\leq r\leq k,\\
a_{r,r}&=s_r, \quad \forall \,k+1\leq r\leq n,
\end{aligned}
\end{equation}
where $s_r\in \mathbb{R}$ for all $k+1\leq r\leq n$, and $a_{r,r}\geq 0$ for all  $0\leq r\leq n$. Note that $a_{i_*,i_*}=\sum_{i=k+1}^n(-1)^{k-i_*+1}\binom{i}{k}\binom{k}{i_*}\frac{i-k}{i-i_*}s_i+\lambda_{i_*}=\sum_{i=k+1}^n\binom{i}{k}\binom{k}{i_*}\frac{i-k}{i-i_*}s_i$.  As long as $s_r>0$ is small enough for all $k+1\leq r\leq n$,    we can always find a solution  $(a_{0,0},a_{1,1},\ldots,a_{k,k},a_{k+1,k+1},\ldots, a_{n,n})\neq (\lambda_0,\lambda_1,\ldots,\lambda_k,0\ldots,0)$ with $a_{r,r}\geq 0$ for all $0\leq r\leq n$. By Lemma~\ref{lemma:nonhom_general}, we have $L(\rho)\geq k+1$. If we replace $k$ with $k+1$ in Lemma~\ref{lem:mixturedicke}, then $k+1-i_*$ is even, and we have $L(\rho)\leq k+1$. Thus $L(\rho)=k+1$.
 \end{enumerate}
 \item According to Lemma~\ref{lemma:diagnoal_SDL_n}, Proposition~\ref{pro:symmetric_SDL} holds for $k=n$.
\end{enumerate}
This completes the proof. 
\end{proof}  
\vspace{0.4cm}

We list some examples. 
 Let $k=2$ in Proposition~\ref{pro:symmetric_SDL}, then $L(\lambda_0\ketbra{D_n^0}{D_n^0}+\lambda_1\ketbra{D_n^1}{D_n^1})=L(\lambda_1\ketbra{D_n^1}{D_n^1}+\lambda_2\ketbra{D_n^2}{D_n^2})=2$ when $\lambda_0\lambda_1\lambda_2\neq 0$;   $L(\lambda_0\ketbra{D_n^0}{D_n^0}+\lambda_2\ketbra{D_n^2}{D_n^2})=3$ when $\lambda_0\lambda_2\neq 0$.

\section{VI. \ The proof of Proposition~3}
\begin{proposition}
For an $n$-qubit entangled symmetric state $\rho$, a  collection $\cS$ determines $\rho$ (detects $\rho$'s GME) if and only if   the hypergraph $G=  ([n],\cS)$ is connected and $|S| \ge L(\rho)$ ($|S| \ge l(\rho)$)  $\exists \, S\in \cS$. 
\end{proposition}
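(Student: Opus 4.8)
The plan is to prove both biconditionals (determination and GME detection) in parallel, since they share the same architecture: in each case I split into a \emph{necessity} direction (a determining/detecting collection must be connected and must contain a large enough hyperedge) and a \emph{sufficiency} direction. The two size requirements are immediate from the definitions. A determining collection $\cS$ lies in ${\sf SD}(\rho)$, and $L(\rho)$ is the minimum of $\max_{S\in\cS'}|S|$ over all $\cS'\in{\sf SD}(\rho)$, so $\max_{S\in\cS}|S|\ge L(\rho)$; hence some $S\in\cS$ has $|S|\ge L(\rho)$. The identical argument with ${\sf ED}(\rho)$ and $l(\rho)$ handles the GME case. So the real content is the connectedness condition and the sufficiency direction.

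For necessity of connectedness I would argue by contradiction with a product state. If $G=([n],\cS)$ is disconnected, then by definition there is a partition $[n]=X\sqcup Y$ into nonempty parts such that every $S\in\cS$ lies entirely inside $X$ or inside $Y$. Set $\sigma:=\rho_X\otimes\rho_Y$. For each $S\in\cS$ contained in, say, $X$, one has $\sigma_S=(\rho_X)_S=\rho_S$, so $\sigma\in\cC(\rho,\cS)$. But $\sigma$ is a product across the bipartition $X|\overline X$, hence biseparable; and because $\rho$ is an \emph{entangled} symmetric state it is genuinely entangled, so in particular $\rho\ne\sigma$ and $\sigma$ is not genuinely entangled. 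The first fact contradicts ``$\cS$ determines $\rho$'' and the second contradicts ``$\cS$ detects $\rho$'s GME.'' Thus connectedness is forced in both cases.

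For sufficiency I take any $\sigma\in\cC(\rho,\cS)$. Since $G$ is connected and $\rho$ is symmetric, Lemma~\ref{lemma: connected_symmetrc} makes $\sigma$ symmetric as well. Choose $S_0\in\cS$ with $k_0:=|S_0|\ge L(\rho)$ (resp. $l(\rho)$). Because all $k_0$-marginals of a symmetric state coincide (Lemma~\ref{lem:rdm}), for every $S\in\cS_{k_0}$ one gets $\sigma_S=\sigma_{[k_0]}=\sigma_{S_0}=\rho_{S_0}=\rho_{[k_0]}=\rho_S$, that is, $\sigma\in\cC(\rho,\cS_{k_0})$. Now the characterization recalled at the start of the SM states that $L(\rho)\le k_0$ (resp. $l(\rho)\le k_0$) is equivalent to $\cC(\rho,\cS_{k_0})$ containing only $\rho$ (resp. only genuinely entangled states). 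Hence $\sigma=\rho$ in the determination case, and $\sigma$ is genuinely entangled in the GME case; as $\sigma$ was arbitrary, $\cS$ determines $\rho$ (resp. detects $\rho$'s GME).

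The main obstacle, such as it is, sits in the necessity-of-connectedness step and is conceptual rather than computational: one must verify that the product witness $\rho_X\otimes\rho_Y$ is simultaneously a valid member of $\cC(\rho,\cS)$, distinct from $\rho$, and not genuinely entangled, all of which rest on the dichotomy that an entangled symmetric state is genuinely entangled (hence never a bipartite product). Everything else is bookkeeping that has already been outsourced to earlier results: the equality of all same-size symmetric marginals (Lemma~\ref{lem:rdm}), the propagation of symmetry from the marginals to the global state over a connected hypergraph (Lemma~\ref{lemma: connected_symmetrc}), and the definitional characterizations $\cC(\rho,\cS_{k})=\{\rho\}\Leftrightarrow L(\rho)\le k$ and ``$\cC(\rho,\cS_{k})$ all GME'' $\Leftrightarrow l(\rho)\le k$.
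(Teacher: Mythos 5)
Your proof is correct and follows essentially the same route as the paper's: sufficiency via Lemma~\ref{lemma: connected_symmetrc} and Lemma~\ref{lem:rdm} reducing to the compatibility set $\cC(\rho,\cS_k)$, and necessity via the definitional bound on $L(\rho)$ (resp.\ $l(\rho)$) together with the product witness $\rho_X\otimes\rho_Y$ for a disconnected hypergraph. If anything, you are slightly more careful than the paper in the necessity step, where you invoke the symmetric-state dichotomy (entangled $\Rightarrow$ genuinely entangled) to justify both $\rho\neq\rho_X\otimes\rho_Y$ and the biseparability of the product witness --- points the paper's proof leaves implicit.
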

\begin{proof}
  According to Lemma~\ref{lemma:Lrho1}, $L(\rho)\geq 2$ for an $n$-qubit entangled symmetric state $\rho$.

Sufficiency.   If $G=([n], \cS)$ is connected, then   every $\sigma\in \cC(\rho, \cS)$ is a symmetric state according to Lemma~\ref{lemma: connected_symmetrc}. Since there exists a subset $S\in \cS$ with $|S| \ge L(\rho)$ (respectively, $|S| \ge l(\rho)$), we have   $\sigma_S=\rho_S$. According to Lemma~\ref{lem:rdm},  we obtain that $\sigma_{S'}=\sigma_{[L(\rho)]}=\rho_{[L(\rho)]}=\rho_{S'}$  for every $S'\in \cS_{L(\rho)}$ (respectively, $\sigma_{S'}=\sigma_{[l(\rho)]}=\rho_{[l(\rho)]}=\rho_{S'}$  for every $S'\in \cS_{l(\rho)}$), which implies $\sigma \in \cC(\rho, \cS_{L(\rho)})$ (respectively,  $\sigma \in \cC(\rho, \cS_{l(\rho)})$. Since $\cC(\rho, \cS_{L(\rho)})$ (respectively, $\cC(\rho, \cS_{l(\rho)})$) contains only $\rho$ (respectively,  only genuinely entangled states), we obtain that $\sigma=\rho$ (respectively,  $\sigma$ is genuinely entangled). Thus $\cS$  determines $\rho$ (respectively, $\cS$ detects $\rho$'s GME).

Necessity. According to the definition of SDL ((respectively, EDL), there must exist a subset $S\in \cS$ such that $|S| \ge L(\rho)$ (respectively,
$|S| \ge l(\rho)$). Assume that $G=([n], \cS)$ is not connected, then the vertex set $[n]$ can be partitioned into two non-empty disjoint sets $X$ and $Y$ such that no hyperedge connects a vertex in $X$ to a vertex in $Y$. We have  $\rho_{X}\otimes \rho_{Y}\in \cC(\rho, \cS)$. Since $\rho$ is entangled,  $\rho_{X}\otimes \rho_{Y}\neq \rho$. This contradicts that $\cS$  determines $\rho$ (respectively, $\cS$ detects $\rho$'s GME). 
\end{proof}
\vspace{0.4cm}

  Next, we consider  the minimum number of $L(\rho)$-body marginals (respectively, $l(\rho)$-body marginals)  needed to determine  $\rho$ (respectively, detect $\rho$'s GME), i.e., $M(\rho)$ (respectively, $m(\rho)$).

\begin{figure*}[b]
		\centering		\includegraphics[scale=0.5]{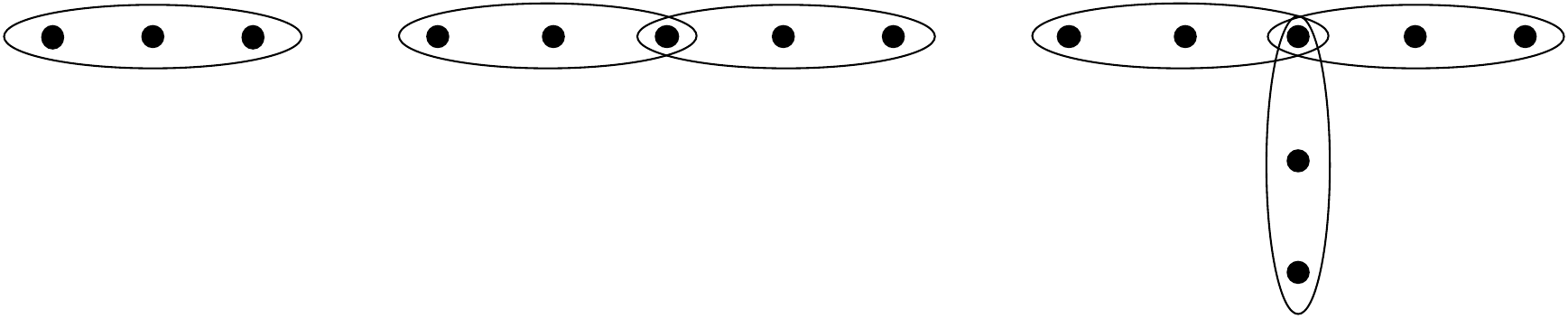}
		\caption{ Three connected $3$-uniform  hypergraphs. The left hypergraph has one hyperedge; the center hypergraph has two hyperedges; and the right hypergraph has three hyperedges.  } \label{fig:hypergraph}
\end{figure*}

\begin{lemma}\label{lemma: minimum_edges}
    For an $n$-qubit entangled symmetric state $\rho$, $M(\rho)   =\lceil  (n-1)/(L(\rho)-1) \rceil $  ($m(\rho)   =\lceil  (n-1)/(l(\rho)-1) \rceil$). 
\end{lemma}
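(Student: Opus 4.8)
The plan is to invoke Proposition~\ref{prop:graphtheory} to turn both statements into a single purely combinatorial question about hypergraphs, and then to settle that question with a component-counting lower bound matched by an explicit construction. Set $k := L(\rho)$. Since every $S\in\cS_k$ has $|S|=k$, the clause ``$|S|\ge L(\rho)$ for some $S\in\cS$'' in Proposition~\ref{prop:graphtheory} is automatically satisfied by any nonempty $\cS\subseteq\cS_k$; for $n\ge 2$ an empty or single-vertex-missing collection cannot be connected. Hence, for $\cS\subseteq\cS_k$, we have $\cS\in{\sf SD}(\rho)$ if and only if the $k$-uniform hypergraph $G=([n],\cS)$ is connected, so $M(\rho)$ equals the minimum number of hyperedges in a connected $k$-uniform hypergraph on $[n]$. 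By Lemma~\ref{lemma:Lrho1}, $L(\rho)\ge 2$ for an entangled symmetric state, so $k-1\ge 1$ and $\lceil (n-1)/(k-1)\rceil$ is well defined. The identical reduction with $k:=l(\rho)$, replacing ``determines'' by ``detects GME'' and ${\sf SD}$ by ${\sf ED}$, handles $m(\rho)$.

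For the lower bound I would track the number of connected components as the hyperedges of $\cS$ are added one at a time to the edgeless hypergraph on $[n]$. Initially there are $n$ components, and a single hyperedge of size $k$ meets at most $k$ distinct components, so adding it decreases the component count by at most $k-1$. A connected hypergraph has exactly one component, so a collection of $t$ hyperedges that connects $[n]$ must satisfy $n-t(k-1)\le 1$, i.e. $t\ge (n-1)/(k-1)$, and therefore $t\ge \lceil (n-1)/(k-1)\rceil$.

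For the matching upper bound I would exhibit a ``path of hyperedges'' in which consecutive hyperedges overlap in exactly one vertex: take $S_1=\{1,\dots,k\}$, and for $j\ge 2$ let $S_j$ consist of the last vertex of $S_{j-1}$ together with the next $k-1$ previously unused vertices, stopping once all $n$ vertices are covered. Each hyperedge after the first contributes $k-1$ new vertices, so $t$ hyperedges cover $1+t(k-1)$ vertices; choosing $t=\lceil (n-1)/(k-1)\rceil$ gives $1+t(k-1)\ge n$, covering all of $[n]$, while the chained single-vertex overlaps make $G$ connected. This yields $M(\rho)\le \lceil (n-1)/(k-1)\rceil$, and with $k=l(\rho)$ the same construction gives the bound for $m(\rho)$. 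The only point needing a little care is the final hyperedge when $(k-1)\nmid(n-1)$: there fewer than $k-1$ fresh vertices remain, so $S_t$ must be padded with already-covered vertices to keep its cardinality exactly $k$ without leaving any vertex isolated; this is pure bookkeeping and I expect no genuine obstacle beyond it.
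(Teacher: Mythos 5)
Your proof is correct and follows essentially the same route as the paper: both reduce the claim via Proposition~\ref{prop:graphtheory} to showing that a connected $k$-uniform hypergraph on $n$ vertices requires exactly $\lceil (n-1)/(k-1)\rceil$ hyperedges. Your component-counting argument is a more rigorous rendering of the paper's covering bound (``$s$ hyperedges cover at most $sk-s+1$ vertices''), and your explicit path-of-hyperedges construction, including the padding of the final hyperedge, supplies the matching upper bound that the paper leaves implicit.
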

\begin{proof} 
 According to Proposition~\ref{prop:graphtheory}, we need to show that for a  connected $k$-uniform hypergraph with $n$ vertices, the minimum number of hyperedges is $\fc{n-1}{k-1}$.

For a connected $k$-uniform  hypergraph, 1 hyperedge covers $k$ vertices; 2 hyperedges cover at most $2k-1$ vertices;  3 hyperedges cover at most $3k-2$ vertices (see Fig.~\ref{fig:hypergraph}).  Then $s$ hyperedges cover at most $sk-s+1$ vertices. For a connected $k$-uniform  hypergraph with $n$ vertices and $s$ hyperedges, it must have   $sk-s+1\geq n$, i.e. 
     $s\geq \fc{n-1}{k-1}$. Thus, the minimum number of hyperedges for a connected $k$-uniform  hypergraph with $n$ vertices is $\fc{n-1}{k-1}$.
\end{proof}
\vspace{0.4cm}



\section{VII. \ The proof of Proposition 4}
\begin{proposition}
   For an $n$-qubit  entangled symmetric state $\rho$, if the hypergraph $G=  ([n],\cS)$ is connected and $|S| \ge l(\rho)$  $\exists \, S\in \cS$, then  the set of marginals $\{\rho_S\mid S\in \cS \}$ exhibits entanglement transitivity in any $T$ with $|T|\geq l(\rho)$.   
\end{proposition}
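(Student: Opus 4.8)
The plan is to reduce the statement to the EDL characterization of Proposition~\ref{prop:symmetric}, using the fact that compatibility with a \emph{connected} collection of symmetric marginals forces the whole state to be symmetric. Concretely, I would fix an arbitrary $\sigma\in\cC(\rho,\cS)$ together with a target $T$ satisfying $|T|\ge l(\rho)$, and aim to prove that $\sigma_T$ is entangled. Because $G=([n],\cS)$ is connected, Lemma~\ref{lemma: connected_symmetrc} tells me that $\sigma$ is itself symmetric; Lemma~\ref{lem:rdm} then gives $\sigma_T=\sigma_{[|T|]}$ and ensures that every marginal $\sigma_{[m]}$ is again symmetric. Hence it suffices to show that $\sigma_{[k]}$ is entangled for every $k\ge l(\rho)$.

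The key step is to pin down the $l(\rho)$-body marginal of $\sigma$. By hypothesis there is an $S\in\cS$ with $|S|\ge l(\rho)$, and since $\sigma_S=\rho_S$ with both states symmetric, Lemma~\ref{lem:rdm} yields $\sigma_{[|S|]}=\sigma_S=\rho_S=\rho_{[|S|]}$. Tracing out the surplus parties, this matching descends to all smaller marginals, so $\sigma_{[m]}=\rho_{[m]}$ for every $2\le m\le|S|$; in particular, using $2\le l(\rho)\le|S|$, I get $\sigma_{[l(\rho)]}=\rho_{[l(\rho)]}$. Proposition~\ref{prop:symmetric} says exactly that $\rho_{[l(\rho)]}$ is entangled, so $\sigma_{[l(\rho)]}$ is entangled as well.

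To finish I would upgrade this single entangled marginal to all larger ones by monotonicity under partial trace: a symmetric state is either fully separable or genuinely entangled, and a marginal of a fully separable state is fully separable. Thus if $\sigma_{[k]}$ were fully separable for some $k\ge l(\rho)$, then $\sigma_{[l(\rho)]}$, being a marginal of $\sigma_{[k]}$, would be fully separable, contradicting the previous paragraph. Hence $\sigma_{[k]}$ is entangled for all $k\ge l(\rho)$, and therefore $\sigma_T=\sigma_{[|T|]}$ is entangled, which is precisely entanglement transitivity in $T$. I expect the only real obstacle to be the descent $\sigma_{[|S|]}=\rho_{[|S|]}\Rightarrow\sigma_{[l(\rho)]}=\rho_{[l(\rho)]}$: this is where connectedness (forcing symmetry of $\sigma$) does the essential work, since for a general, non-symmetric $\sigma$ matching a large marginal would neither constrain the smaller marginals nor let me invoke Proposition~\ref{prop:symmetric}.
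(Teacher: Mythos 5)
Your proposal is correct and takes essentially the same route as the paper's own proof: both use Lemma~\ref{lemma: connected_symmetrc} to force every $\sigma\in\cC(\rho,\cS)$ to be symmetric, descend the matching marginal $\sigma_S=\rho_S$ (with $|S|\ge l(\rho)$) to $\sigma_{[l(\rho)]}=\rho_{[l(\rho)]}$ via Lemma~\ref{lem:rdm}, invoke Proposition~\ref{prop:symmetric} to conclude this marginal is entangled, and handle $|T|>l(\rho)$ by the fact that a fully separable state has fully separable marginals. The only difference is presentational: you spell out the trace-down step and the monotonicity argument that the paper's proof leaves implicit.
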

\begin{proof}
If the hypergraph $G=([n],\cS)$ is connected, then every $\sigma\in \cC(\rho, \cS)$ is a symmetric state according to Lemma~\ref{lemma: connected_symmetrc}. Since there exists a subset $S\in \cS$ with $|S| \ge l(\rho)$, we have $\sigma_S=\rho_S$.  According to Proposition~\ref{prop:symmetric}, we know that $\rho_{[l(\rho)]}$ is entangled. By using Lemma~\ref{lem:rdm}, we obtain that $\sigma_{T}=\sigma_{[l(\rho)]}=\rho_{[l(\rho)]}=\rho_T$ is entangled for every  $T\in \cS_{l(\rho)}$. For every $T\subseteq [n]$ with $|T|> l(\rho)$, there exists a subset $S\in \cS_{l(\rho)}$ such that $S\subsetneq T$, then $\sigma_T$ must be entangled.
\end{proof}
\vspace{0.4cm}


Next, we consider two  examples. 

\begin{enumerate}
    \item Let 
$\rho=\lambda_0\ket{D_n^0}\bra{D_n^0}+\lambda_1\ket{D_n^1}\bra{D_n^1}$, where $n\geq 2$ and $\lambda_1\neq 0$.  We have shown that $l(\rho)=2$ in Sec.~\ref{sec:symmetric_EDL}.  If the simple graph $G=([n], \cS)$ is connected, then the set of marginals $\{\rho_S\mid S\in \cS \}$ exhibits entanglement transitivity in any $T$ with $|T|\geq 2$,   which recovers one of the main results in Ref.~\cite{tabia2022entanglement}. 

\item  Let $\rho=\lambda_0\ket{D_n^0}\bra{D_n^0}+\lambda_2\ket{D_n^2}\bra{D_n^2}$, where $n\geq 3$ and $0<\lambda_2\leq \frac{n}{2n-2}$. We have shown that $l(\rho)=3$ in Sec.~\ref{sec:symmetric_EDL}.
If the   hypergraph $G=([n], \cS)$ is connected and $|S|\geq 3$ $\exists \, S\in \cS$, then the set of marginals $\{\rho_S\mid S\in \cS \}$ exhibits entanglement transitivity in any $T$ with $|T|\geq 3$.
\end{enumerate}

\section{VIII. \ SDP for EDL and SDL}

 An $n$-qubit observable  $W$ is fully decomposable witness \cite{jungnitsch2011taming}  if for every bipartition $S|\overline{S}$, there exist positive semidefinite operators $P_S$ and $Q_S$ such that $W=P_S+Q_S^{T_S}$, where $T_S$ is the partial transpose over $\cH_S$.  This definition implies that the expectation value of $W$ is non-negative for every biseparable state. Hence, a negative value of ${\sf Tr} [W\rho]$ implies  that $\rho$ is genuinely entangled. We denote by $H^S$  an Hermitian operator on $\cH_S$ and by $I^{\overline{S}}$ an identity operator on  $\cH_{\overline S}$.
 Given a state $\rho$ and a collection $\cS$  of all subsets of $[n]$, the SDP is as follows:

\begin{equation}\label{eq:sdp1}
\begin{aligned}
\alpha  (\rho,  \cS) := \min \quad &{\sf Tr}(W\rho)\\
\text{s.t.} \quad &{\sf Tr}(W)=1, \\
&W=\sum_{S\in \cS} H^{S}\otimes \bbI^{\overline{S}},\\
&\text{$W$ is fully decomposable.}
\end{aligned}
\end{equation}

If $\alpha  (\rho,  \cS_k)<0$, then there exists a $k$-body witness $W$ such that ${\sf Tr}(W\rho)<0$, which means that $l(\rho)\leq k$.
We can immediately obtain the follwing result:

\begin{proposition}
For every state $\rho$, one has    $l(\rho)   \le  \min\{ k~|~  \alpha  (\rho,  \cS_k)  <0\}$. 
\end{proposition}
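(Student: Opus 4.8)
The plan is to turn a feasible solution of the SDP into a certificate that works not just for $\rho$, but for the \emph{entire} compatibility set of its $k$-body marginals. Concretely, set $k_0 := \min\{k \mid \alpha(\rho, \cS_k) < 0\}$ (if this set is empty, the right-hand side is $+\infty$ and the bound is vacuous, so assume $k_0$ is finite). I would first take an optimal witness $W$ for the program in Eq.~\eqref{eq:sdp1} at $k = k_0$, so that $W$ is fully decomposable, has the prescribed form $W = \sum_{S \in \cS_{k_0}} H^{S}\otimes \bbI^{\overline{S}}$, and satisfies ${\sf Tr}(W\rho) = \alpha(\rho, \cS_{k_0}) < 0$. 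The goal is then to show $\cS_{k_0} \in {\sf ED}(\rho)$, i.e. that $\cS_{k_0}$ detects $\rho$'s GME.

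The key step is the observation that the expectation value of a witness of this form depends only on the $k_0$-body marginals. Using ${\sf Tr}[(H^{S}\otimes \bbI^{\overline{S}})\,\sigma] = {\sf Tr}[H^{S}\sigma_S]$, for any state $\sigma$ one has ${\sf Tr}(W\sigma) = \sum_{S \in \cS_{k_0}} {\sf Tr}(H^{S}\sigma_S)$. Hence for every $\sigma \in \cC(\rho, \cS_{k_0})$, that is every $\sigma$ with $\sigma_S = \rho_S$ for all $S \in \cS_{k_0}$, one gets ${\sf Tr}(W\sigma) = {\sf Tr}(W\rho) = \alpha(\rho,\cS_{k_0}) < 0$. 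Because $W$ is fully decomposable, its expectation value is nonnegative on every biseparable state; a strictly negative value therefore forces $\sigma$ to be genuinely entangled. Thus every element of $\cC(\rho, \cS_{k_0})$ is genuinely entangled, which is exactly the statement that $\cS_{k_0}$ detects $\rho$'s GME, so $\cS_{k_0} \in {\sf ED}(\rho)$. (In particular, taking $\sigma = \rho$ shows $\rho$ itself is genuinely entangled, so $l(\rho)$ is well defined whenever $k_0$ is finite.)

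Finally, since every $S \in \cS_{k_0}$ has $|S| = k_0$, we have $\max_{S \in \cS_{k_0}} |S| = k_0$, and the definition $l(\rho) = \min_{\cS \in {\sf ED}(\rho)} \max_{S \in \cS} |S|$ yields $l(\rho) \le k_0 = \min\{k \mid \alpha(\rho, \cS_k) < 0\}$, as claimed. I do not expect a genuine obstacle here: the whole argument rests on the marginal-only dependence of $W$, which is immediate from its prescribed form, together with the defining property of fully decomposable witnesses. The only point worth stating carefully is that the SDP feasibility at $k_0$ supplies an \emph{explicit} fully decomposable $W$ of the required form once $\alpha(\rho, \cS_{k_0}) < 0$, so no separate existence argument is needed.
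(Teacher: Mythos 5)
Your proof is correct and follows essentially the same route as the paper, which argues that $\alpha(\rho,\cS_k)<0$ yields a fully decomposable $k$-body witness $W$ with ${\sf Tr}(W\rho)<0$ and hence $l(\rho)\le k$. You merely make explicit the step the paper leaves implicit --- that ${\sf Tr}(W\sigma)$ depends only on the $k$-body marginals, so the negative witness value propagates to every $\sigma\in\cC(\rho,\cS_{k})$, certifying GME across the whole compatibility set --- which is exactly the intended reasoning.
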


We also observe that evaluation of $l(\rho)$ from the SDP is robust to small amounts of white noise: for a state of the form  $\rho_p  =(1-p)\, \rho    +  p  \, I/2^n$, we have $l(\rho_p)\leq k$ for all  $p\in [0,\frac{2^n\alpha (\rho,  \cS_k)}{2^n\alpha  (\rho ,\cS_k)-1})$ when $\alpha (\rho,  \cS_k)<0$. This because when $\alpha (\rho,  \cS_k)<0$, we obtain a $k$-body witness $W$ such that    ${\sf Tr}(W\rho)=\alpha (\rho,  \cS_k)<0$, which also implies that ${\sf Tr}(W\rho_p)<0$ for all  $p\in [0,\frac{2^n\alpha (\rho,  \cS_k)}{2^n\alpha  (\rho ,\cS_k)-1})$.

Eq.~\eqref{eq:sdp1} can by solved by using the convex optimization package in Matlab \cite{cvx}.
We list some examples:

\begin{enumerate}[1.]
    \item Let $\ket{\psi}=\frac{1}{\sqrt{2}}\ket{1000}+\frac{1}{\sqrt{3}}\ket{0100}+\frac{1}{\sqrt{12}}\ket{0010}+\frac{1}{\sqrt{24}}\ket{0001}+\frac{1}{\sqrt{24}}\ket{1111}$.  We have $l(\ket{\psi})=2$ with  $p\in [0,0.1114)$.
    \item  Let $\rho=\frac{1}{3}\ketbra{D_4^0}{D_4^0}+\frac{1}{3}\ketbra{D_4^2}{D_4^2}+\frac{1}{3}\ketbra{1000}{1000}$. We have $l(\rho)\leq 3$ with $p\in [0, 0.1207)$.
    \item  Let $\rho=\frac{1}{2}\ketbra{D_3^1}{D_3^1}+\frac{1}{2}\ketbra{D_3^2}{D_3^2}$. According to Proposition~\ref{prop:symmetric}, we know that $\{\{1,2\},\{2,3\}\}$ detects $\rho$'s GME. By solving the SDP, we obtain the witness 
$W=0.125III-0.0556(XXI+IXX+YYI+IYY)-0.0139(ZZI+IZZ)$ with $p\in [0,0.1)$.  Thus, this method is useful for detecting GME with minimum observable length experimentally. 
\end{enumerate}



Ref.~\cite{tabia2022entanglement} gives a SDP to certify whether $\cS$ determines a pure state $\ket{\psi}$:

\begin{equation}\label{eq:sdp_pure}
\begin{aligned}
\alpha  (\ket{\psi},  \cS) := \min \quad &\bra{\psi}\rho\ket{\psi}\\
\text{s.t.} \quad &{\sf Tr}(\rho)=1, \\
&\rho\geq 0,\\
&\rho_S=\ketbra{\psi}{\psi}_S, \forall \, S\in \cS.
\end{aligned}
\end{equation}

According to this SDP, $\cS$ determines $\ket{\psi}$ if and only if $\alpha  (\ket{\psi},  \cS)=1$. Thus $L(\ket{\psi})  =  \min\{ k~|~  \alpha  (\ket{\psi},  \cS_k) =1\}$.
For example,   we reconsider the state $\ket{\psi}=\frac{1}{\sqrt{2}}\ket{1000}+\frac{1}{\sqrt{3}}\ket{0100}+\frac{1}{\sqrt{12}}\ket{0010}+\frac{1}{\sqrt{24}}\ket{0001}+\frac{1}{\sqrt{24}}\ket{1111}$,  then we have $L(\ket{\psi})=3$ from Eq.~\eqref{eq:sdp_pure}, while we have $l(\ket{\psi})=2$ from Eq.~\eqref{eq:sdp1}.
Note that, pure state  marginal problem can be also related to the separability problem \cite{yu2021complete}.

\section{IX. \ The maximum gap between EDL  and SDL.}

 Ref.~\cite{miklin2016multiparticle} gives a $4$-qubit genuinely  entangled pure state with non-zero gap between the EDL and SDL.
Let us recall this example of Ref.~\cite{miklin2016multiparticle}, 
\begin{equation}
    \ket{\Psi(\varphi)}=\frac{1}{\sqrt{2}}\ket{{\ghz}_4}+\frac{1}{\sqrt{2}}\ket{\widetilde{D_4^2}},
\end{equation}
with 
\begin{equation}
\begin{aligned}
     \ket{{\ghz}_4}&=\frac{1}{\sqrt{2}}(\ket{0000}+\ket{1111}), \\
     \ket{\widetilde{D_4^2}}&=\frac{1}{\sqrt{6}}(\ket{0011}+\ket{0101}+\ket{0110}+e^{i\varphi}\ket{1001}+e^{i\varphi}\ket{1010}+e^{-i\varphi}\ket{1100}),
\end{aligned}
\end{equation}
and $\varphi=\text{arccoss}(-\frac{1}{3})$.
Ref.~\cite{miklin2016multiparticle} shows that $\cC(\ket{\Psi(\varphi)}, \cS_2)$ contains only genuinely entangled states, and $\cC(\ket{\Psi(\varphi)}, \cS_2)$ contains a state $\ket{\Psi(-\varphi)}\neq \ket{\Psi(\varphi)}$, which means that the SDL is strictly greater than EDL for the $4$-qubit state $\ket{\Psi(\varphi)}$. 
Note that this example is found through SDP, which cannot be generalized to any $n$-qubit system. We will overcome this difficulty, and show that there is an $n$-qubit  genuinely entangled pure  (respectively, mixed) state with a maximum gap  between its EDL and SDL for general $n$.




\begin{proposition}
For an $n$-qubit genuinely entangled pure state $|\psi\rangle$, one has $\gap ( |\psi\rangle  )  \le \max\{0,n-3\} $.   The bound  is attained by  every  state of the form  $\ket{\psi}=\alpha  \ket{D_n^1}+  \beta \, \ket{D_n^n}$, where $\alpha$ and $\beta$ are two amplitudes satisfying the condition     $\frac{n^2-2n}{n^2-2n+1}<|\alpha|^2<1$.
\end{proposition}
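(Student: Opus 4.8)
The plan is to treat the general bound and the extremal family separately. For the inequality $\gap(\ket{\psi})\le\max\{0,n-3\}$ I would split on the value of $L(\ket{\psi})$. If $L(\ket{\psi})=n$, the characterization of \cite{walck2009only} quoted above forces $\ket{\psi}$ to be $\ghz$-like, and the same characterization then gives $l(\ket{\psi})=n$, so $\gap(\ket{\psi})=0$. Otherwise $L(\ket{\psi})\le n-1$, and since every genuinely entangled state satisfies $l(\ket{\psi})\ge 2$, we obtain $\gap(\ket{\psi})\le(n-1)-2=n-3$. The two cases together give the bound.

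For the extremal family $\ket{\psi}=\alpha\ket{D_n^1}+\beta\ket{D_n^n}$ (which is symmetric), the whole argument rests on a single marginal identity that I would establish first. Writing $\rho=\ketbra{\psi}{\psi}$, the only off-diagonal contribution to any reduced state comes from the coherence $\ketbra{D_n^1}{D_n^n}$. Inspecting the support of the binomial coefficients in the Dicke marginal formula \eqref{eq: Dicke_marginal} for $i=1$, $j=n$, a nonzero summand forces $0\le s\le k-n+1$; hence $\ketbra{D_n^1}{D_n^n}_{[k]}=\mathbf{0}$ for all $k\le n-2$, while it survives for $k=n-1$. Conceptually this is the crux: the coherence that distinguishes $\ket{\psi}$ from its diagonal part stays invisible until the $(n-1)$-body level, which is precisely what forces the SDL up while leaving the EDL at its minimum.

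Granting this identity, the two lengths follow. For $n\ge 4$ the identity gives $\rho_{[2]}=|\alpha|^2\ketbra{D_n^1}{D_n^1}_{[2]}+|\beta|^2\ketbra{D_n^n}{D_n^n}_{[2]}$, a diagonal symmetric two-qubit state; substituting $\lambda_1=|\alpha|^2$, $\lambda_n=|\beta|^2$ and $|\beta|^2=1-|\alpha|^2$ into the PPT condition \eqref{eq:l2}, its left-hand side reduces to $(n-1)^2|\alpha|^2[\,n(n-2)|\beta|^2-|\alpha|^2\,]$, which is negative exactly when $|\alpha|^2>\frac{n^2-2n}{n^2-2n+1}$. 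Thus $\rho_{[2]}$ is entangled in this regime, so $\rho$ is not fully separable (hence genuinely entangled by \cite{ichikawa2008exchange}), and Proposition~\ref{prop:symmetric} gives $l(\ket{\psi})=2$. For the SDL, the lower bound $L(\ket{\psi})\ge n-1$ follows because the diagonal state $\tau=|\alpha|^2\ketbra{D_n^1}{D_n^1}+|\beta|^2\ketbra{D_n^n}{D_n^n}$ differs from $\rho$ only by the coherence, so $\tau_{[n-2]}=\rho_{[n-2]}$; since both states are symmetric all their $(n-2)$-body marginals agree (Lemma~\ref{lem:rdm}), giving $\tau\in\cC(\ket{\psi},\cS_{n-2})$ with $\tau\neq\rho$. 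For the upper bound $L(\ket{\psi})\le n-1$, I would note that every $\ghz$-like state has a separable two-body marginal (for $n\ge 3$ the block $\ket{0}^{\otimes n}\bra{1}^{\otimes n}$ vanishes under the partial trace), and separability of marginals is LU-invariant; since $\rho_{[2]}$ is entangled, $\ket{\psi}$ cannot be $\ghz$-like, so \cite{walck2009only} yields $L(\ket{\psi})\le n-1$. Hence $L(\ket{\psi})=n-1$ and $\gap(\ket{\psi})=n-3$.

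The main obstacle is the marginal identity itself: the entire phenomenon hinges on the coherence $\ketbra{D_n^1}{D_n^n}$ dropping out of every marginal up to level $n-2$, and this must be extracted carefully from the support conditions on the three binomial factors in \eqref{eq: Dicke_marginal} rather than by a blunt calculation. Once it is in place, the EDL computation is a short application of \eqref{eq:l2} and Proposition~\ref{prop:symmetric}, and the SDL follows from the $\ghz$-like characterization. I would also dispose of the degenerate small cases ($n\le 3$, where $\max\{0,n-3\}=0$ and the statement reduces to $\gap\ge 0$) separately, so that the quantitative work is confined to $n\ge 4$.
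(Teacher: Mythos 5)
Your proposal is correct and follows essentially the same route as the paper: the same case split for the general bound, the same diagonal state $\tau=|\alpha|^2\ketbra{D_n^1}{D_n^1}+|\beta|^2\ketbra{D_n^n}{D_n^n}$ placed in $\cC(\ket{\psi},\cS_{n-2})$ to force $L(\ket{\psi})\geq n-1$, the characterization of \cite{walck2009only} for $L(\ket{\psi})\leq n-1$, and the same two-body PPT computation (your use of Eq.~\eqref{eq:l2} is just the expanded form of the paper's Hankel matrix $M_0$ test, yielding the identical threshold $|\alpha|^2>\frac{n^2-2n}{n^2-2n+1}$). Your only additions are welcome fill-ins of steps the paper asserts without detail, namely the vanishing of $\ketbra{D_n^1}{D_n^n}_{[k]}$ for $k\leq n-2$ behind the paper's ``it is easy to see,'' and the argument via the entangled two-body marginal that $\ket{\psi}$ is not $\ghz$-like.
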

\begin{proof}
 We only need to consider $n\geq 3$. Since $\ket{\psi}$ is not LU-equivalent to an $n$-qubit generalized $\ghz$ state $a\ket{D_n^0}+b\ket{D_n^n}$ with $ab\neq 0$, we have $L(\ket{\psi})\leq n-1$ \cite{walck2009only}. Let
   \begin{equation}
\sigma=|\alpha|^2\ketbra{D_n^1}{D_n^1}+|\beta|^2\ketbra{D_n^n}{D_n^n}.
 \end{equation}
It is easy to see that  $\sigma \in \cC(\ket{\psi}, \cS_{n-2})$. Since $\rank(\sigma)=2$,
we have $\sigma\neq \ketbra{\psi}{\psi}$, which implies $L(\ket{\psi})\geq n-1$. Thus $L(\ket{\psi})=n-1$.
Note that
\begin{equation}
    \ketbra{\psi}{\psi}_{[2]}=\frac{(n-2)|\alpha|^2}{n}\ketbra{D_2^0}{D_2^0}+\frac{2|\alpha|^2}{n}\ketbra{D_2^1}{D_2^1}+|\beta|^2\ketbra{D_2^2}{D_2^2}
\end{equation}
is a diagonal symmetric state. According to Eq.~\eqref{eq:hankel}, the Hankel matrix $M_0$ is as follow:
\begin{equation}
    M_0=\begin{pmatrix}
      \frac{(n-2)|\alpha|^2}{n} &   \frac{|\alpha|^2}{n}\\
      \frac{|\alpha|^2}{n}  & |\beta|^2
    \end{pmatrix}=\begin{pmatrix}
      \frac{(n-2)|\alpha|^2}{n} &   \frac{|\alpha|^2}{n}\\
      \frac{|\alpha|^2}{n}  & 1-|\alpha|^2
    \end{pmatrix}.
\end{equation}
If $\frac{n^2-2n}{n^2-2n+1}<|\alpha|^2<1$, then $M_0$ is not positive semidefinite.  According to Lemma~\ref{Lemma: hankel},  $\ketbra{\psi}{\psi}_{[2]}$ is NPT, and $\ketbra{\psi}{\psi}_{[2]}$ is entangled.  According to  Proposition~\ref{prop:symmetric}, we have $l(\ket{\psi})=2$. Thus  $\gap(\ket{\psi})= n-3$.
\end{proof}

\begin{proposition}
For an $n$-qubit genuinely entangled mixed state $\rho$,  one has   $\gap  (\rho)\le \max\{0, n-2\}$.    The bound is attained by  every  state of the form   
$\rho=\sum_{i=0}^{n}\lambda_i\ketbra{D_n^i}{D_n^i}$ where $(\lambda_i)_{i=0}^n$ are probabilities satisfying the conditions $\lambda_0\lambda_n\neq 0$ and $\sum_{i,j=0}^n \,  (n-i)j \,\lambda_i\lambda_j  [  (n-i-1)  (j-1)-i(n-j)] < 0$. 
\end{proposition}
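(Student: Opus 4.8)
The plan is to prove the bound directly from the definitions and then show the displayed family saturates it by computing $L(\rho)$ and $l(\rho)$ exactly. For the upper bound, every $n$-qubit genuinely entangled state satisfies the trivial inequalities $l(\rho)\ge 2$ and $L(\rho)\le n$, so $\gap(\rho)=L(\rho)-l(\rho)\le n-2$, while the $\max\{0,\cdot\}$ only covers the degenerate small-$n$ case. Thus the entire content lies in the attainability claim, which amounts to showing $L(\rho)=n$ and $l(\rho)=2$ for the stated states.

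The value $L(\rho)=n$ is immediate: $\rho=\sum_{i=0}^n\lambda_i\ketbra{D_n^i}{D_n^i}$ is a diagonal symmetric state, and the hypothesis $\lambda_0\lambda_n\neq 0$ is precisely condition (i) of Lemma~\ref{lemma:diagnoal_SDL_n}, which yields $L(\rho)=n$. For the EDL, since $\rho$ is symmetric, Proposition~\ref{prop:symmetric} reduces the problem to $l(\rho)=\min\{k:\rho_{[k]}\text{ is entangled}\}$, with the convention that $\rho_{[1]}$ is separable; hence it suffices to prove that $\rho_{[2]}$ is entangled. As $\rho_{[2]}$ is a two-qubit diagonal symmetric state, it is entangled if and only if it is NPT, so by Lemma~\ref{lem:diganol_23} this is equivalent to the strict reversal of Eq.~\eqref{eq:l2}, namely
\begin{equation}\label{eq:npt2}
\Big[\sum_{i=0}^n(n-i)(n-i-1)\lambda_i\Big]\Big[\sum_{i=0}^n i(i-1)\lambda_i\Big]-\Big[\sum_{i=0}^n i(n-i)\lambda_i\Big]^2<0 .
\end{equation}

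The key step is to recognize that the double-sum hypothesis is exactly inequality~\eqref{eq:npt2}. Splitting the bracket and regrouping the two summation indices gives
\begin{align}
&\sum_{i,j=0}^n (n-i)j\,\lambda_i\lambda_j\big[(n-i-1)(j-1)-i(n-j)\big]\notag\\
&\quad=\Big[\sum_{i=0}^n(n-i)(n-i-1)\lambda_i\Big]\Big[\sum_{j=0}^n j(j-1)\lambda_j\Big]-\Big[\sum_{i=0}^n i(n-i)\lambda_i\Big]^2,
\end{align}
where the subtracted term collapses to a perfect square because the factor $\sum_j j(n-j)\lambda_j$ coincides with $\sum_i i(n-i)\lambda_i$. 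This is precisely the left-hand side of~\eqref{eq:npt2}, so the hypothesis forces $\rho_{[2]}$ to be entangled, giving $l(\rho)=2$ and therefore $\gap(\rho)=n-2$. Entanglement of $\rho_{[2]}$ also certifies that $\rho$ is genuinely entangled, since a symmetric state is either fully separable or genuinely entangled, so the gap is well defined.

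I expect no genuine obstacle in this argument; the single step demanding care is the factorization of the double sum, where one must group the indices correctly so that the cross term becomes the square $\big[\sum_i i(n-i)\lambda_i\big]^2$ and the identity lines up with the NPT criterion of Lemma~\ref{lem:diganol_23}. The rest is a direct appeal to Lemma~\ref{lemma:diagnoal_SDL_n}, Proposition~\ref{prop:symmetric}, and the equivalence of PPT and separability for two-qubit states.
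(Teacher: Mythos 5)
Your proof is correct and takes essentially the same route as the paper's: the trivial bounds $l(\rho)\ge 2$ and $L(\rho)\le n$ give the upper bound, Lemma~\ref{lemma:diagnoal_SDL_n} with $\lambda_0\lambda_n\neq 0$ gives $L(\rho)=n$, and Lemma~\ref{lem:diganol_23} combined with Proposition~\ref{prop:symmetric} gives $l(\rho)=2$. Your only addition is to verify explicitly the double-sum factorization identity (correctly, as one checks by expanding the bracket and separating the two indices), whereas the paper asserts this equality inline without derivation.
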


\begin{proof}
 We only need to consider $n\geq 2$. If $\lambda_0\lambda_n\neq 0$, then $L(\rho)=n$ according to Lemma~\ref{lemma:diagnoal_SDL_n}.
According to Lemma~\ref{lem:diganol_23}, if   $ \left[\sum_{i=0}^{n}(n-i)(n-i-1)\lambda_i\right]\left[\sum_{i=0}^ni(i-1)\lambda_i\right]-\left[\sum_{i=0}^{n}i(n-i)\lambda_i\right]^2=\sum_{i,j=0}^n \,  (n-i)j \,\lambda_i\lambda_j  [  (n-i-1)  (j-1)-i(n-j)]< 0$, then $\rho_{[2]}$ is NPT, and $\rho_{[2]}$ is entangled. According to Proposition~\ref{prop:symmetric}, we have $l(\rho)=2$. Thus $\gap(\rho)= n-2$.
\end{proof}
\vspace{0.4cm}

For example, for pure states, when $\ket{\psi}=\sqrt{0.94}\ket{D_5^1}+\sqrt{0.06}\ket{D_5^5}$,
$\gap(\ket{\psi})=2$; and 
  when $\ket{\psi}=\sqrt{0.97}\ket{D_6^1}+\sqrt{0.03}\ket{D_6^6}$, $\gap(\ket{\psi})=3$. For mixed states,
when $\rho=\frac{1}{12}\ketbra{D_3^0}{D_3^0}+\frac{1}{2}\ketbra{D_3^1}{D_3^1}+\frac{1}{3}\ketbra{D_3^2}{D_3^2}+\frac{1}{12}\ketbra{D_3^3}{D_3^3}$, $\gap(\rho)=1$; and when $\rho=\frac{1}{24}\ketbra{D_4^0}{D_4^0}+\frac{1}{3}\ketbra{D_4^1}{D_4^1}+\frac{1}{2}\ketbra{D_4^2}{D_4^2}+\frac{1}{12}\ketbra{D_4^3}{D_4^3}+\frac{1}{24}\ketbra{D_4^4}{D_4^4}$, $\gap(\rho)=2$.

\end{sloppypar}

 \end{widetext}

\end{document}